\newcommand{\subparagraph}{}
\DeclarePairedDelimiter{\ceil}{\lceil}{\rceil}
\DeclareMathOperator*{\argmax}{arg\,max}
\DeclareMathOperator*{\argmin}{arg\,min}
\newcommand*{\rom}[1]{\expandafter\@slowromancap\romannumeral #1@}
\newtheorem{theorem}{Theorem}
\newtheorem{corollary}{Corollary}
\newtheorem{lemma}{Lemma}
\newcommand{\note}[1]{{\color{blue}{#1}}} 
\newcommand{\inred}[1]{{\color{red}{#1}}} 
\newcommand{\ingreen}[1]{{\color{green}{#1}}} 
\DeclareMathSymbol{\shortminus}{\mathbin}{AMSa}{"39}
\newcommand{\smallDelta}{{\text{$\scaleto{\Delta}{5pt}$}}}
\begin{document}
\def\eg{\mbox{\em e.g.}, }

\title{Joint Spatio-Temporal Precoding for Practical Non-Stationary Wireless Channels}

\author{
\small 
\begin{tabular}[t]{c@{\extracolsep{5em}}c} 
Zhibin Zou, Maqsood Careem, Aveek Dutta & Ngwe Thawdar \\
Department of Electrical and Computer Engineering & US Air Force Research Laboratory \\ 
University at Albany SUNY, Albany, NY 12222 USA & Rome, NY, USA \\
\{{zzou2, mabdulcareem, adutta\}@albany.edu} & 
ngwe.thawdar@us.af.mil
\end{tabular}
}


    
\maketitle

\begin{abstract}

The high mobility, density and multi-path evident in modern wireless systems makes the channel highly non-stationary.
This causes temporal variation in the channel distribution that leads to the existence of time-varying joint interference across multiple degrees of freedom (DoF, \eg users, antennas, frequency and symbols), which renders conventional precoding sub-optimal in practice. 
In this work, we derive a High-Order Generalization of Mercer’s Theorem (HOGMT), which decomposes the multi-user non-stationary channel into two (dual) sets of jointly orthogonal subchannels (eigenfunctions), that result in the other set when one set is transmitted through the channel. 
This duality and joint orthogonality of eigenfuntions ensure transmission over independently flat-fading subchannels.
Consequently, transmitting these eigenfunctions with optimally derived coefficients eventually mitigates any interference across its degrees of freedoms and forms the foundation of the proposed joint spatio-temporal precoding. 
The transferred dual eigenfuntions and coefficients directly reconstruct the data symbols at the receiver upon demodulation, thereby significantly reducing its computational burden, by alleviating the need for any complementary post-coding. 
Additionally, the eigenfunctions decomposed from the time-frequency delay-Doppler channel kernel are paramount to extracting the second-order channel statistics, and therefore completely characterize the underlying channel. 
We evaluate this using a realistic non-stationary channel framework built in Matlab and show that our precoding achieves ${\geqslant}$4 orders of reduction in BER at SNR${\geqslant}15$dB 
in OFDM systems for higher-order modulations and less complexity compared to the state-of-the-art precoding.

\end{abstract}
\section{Introduction}



Precoding has been widely investigated for stationary channels, 
where the orthogonality along each DoF 
is enforced by decomposing them using linear algebraic tools (\eg singular value decomposition (SVD) or QR decomposition \cite{Cho2010MIMObook}) 
leading to capacity achieving strategies \cite{fatema2017massive,CostaDPC1983} typically under the block fading assumption. 
The statistical non-stationarity that is evident in modern and next Generation propagation environments including V2X, mmWave, and massive-MIMO channels, 
\cite{wang2018survey,huang2020general,mecklenbrauker2011vehicular}, leads to \textit{catastrophic} error rates even with state-of-the-art precoding \cite{AliNS0219} 
as such capacity-achieving strategies optimized for stationary channels do not ensure interference-free communication when the channel distribution changes over time.  
The non-stationarity in such channels engender joint interference across multiple dimensions (space (users/ antennas), time-frequency or delay-Doppler) in communication systems that leverage multiple degrees of freedom (\eg MU-MIMO, OFDM, OTFS \cite{OTFS_2018_Paper}). This time-varying joint interference renders conventional decomposition techniques incapable of achieving flat-fading.
Our solution to the above addresses a challenging open problem in the literature \cite{2006MatzOP}: ``how to decompose non-stationary channels into independently fading sub-channels (along each degree of freedom) and how to precode using them", which is central to both characterizing channels and minimizing interference. 
The multi-user non-stationary channel is represented as a 4-dimensional space (user) time-varying impulse response and is acquired from the CSI obtained from each receiver (user). 
The core of our precoding is the decomposition of this asymmetric channel, by generalizing Mercer's Theorem \cite{1909Mercer} to high-dimensional asymmetric processes, into 2-dimensional 
eigenfunctions that are jointly orthogonal across the DoF. These eigenfunctions serve as independently flat-fading subchannels, and precoding using them is the key to canceling the time-varying joint interference that exists across the DoF. 
The second order statistics of non-stationary channels vary across time-frequency and delay-Doppler (4-dimensions) and therefore, such channels 
can be represented as atomic channels with 4-dimensional asymmetric coefficients \cite{MATZ20111}. Unlike recent literature that only partially characterize the non-stationary channel using a select few local statistics~\cite{2018PMNS, bian2021general}, the 2-dimensional eigenfunctions decomposed from 4-dimensional asymmetric coefficients are used to extract any second-order statistics of the non-stationary channels that completely characterizes its distribution. Since any wireless channel model (\eg deterministic, stationary, frequency flat or selective) can be extracted from the general non-stationary channel kernel, the extracted eigenfunctions lead to a unified method to characterize the statistics of any wireless channel.

\begin{figure}[t]
    \centering
    \includegraphics[width=\linewidth]{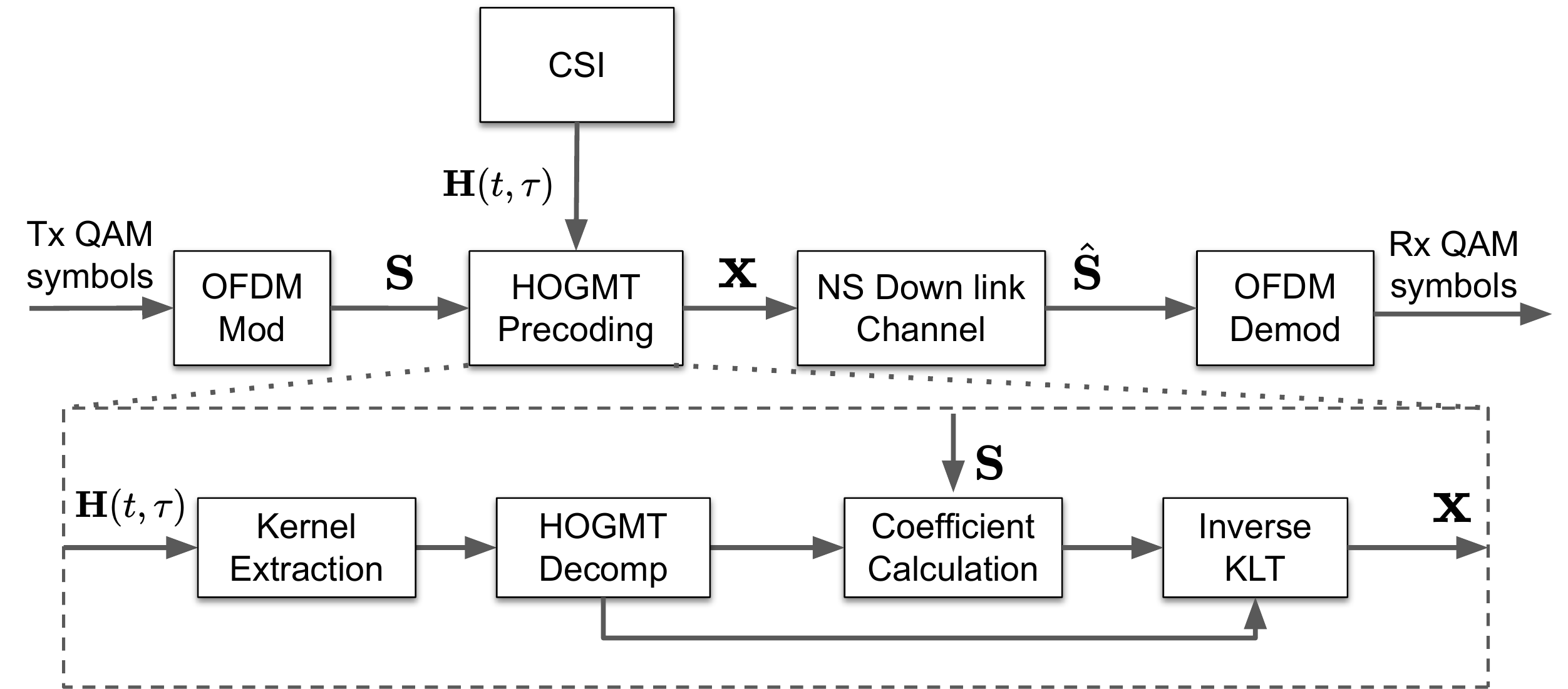}
    \caption{System view of HOGMT-based precoding for non-stationary channel}
    \label{fig:system}
\end{figure}

Figure \ref{fig:system} shows the system view of joint spatio-temporal precoding.
The spatio-temporal CSI is used to extract a 4-dimensional kernel, which can be decomposed into dual 2-dimensional (space-time) eigenfunctions by derived High Order Generalized Mercer's Theorem (HOGMT).
Thus decorrelate the space-time domain at the transmitter and the receiver.  
The spatio-temporal eigenfunctions corresponding to the receiver are used to derive optimal coefficients that minimize the least square error in the transmitted and received symbols. Then combining their dual spatio-temporal eigenfunctions with these coefficients by inverse KLT.
Since the eigenfunctions are independently and jointly orthonomal sub-channels over space and time, 
precoding using them warrants flat-fading (interference-free communication) even in the presence of joint space-time interference. 
Further, these transmitted (precoded) symbols directly reconstruct the data symbols at the receiver when combined with calculated coefficients. 
Therefore, unlike existing precoding methods that require complementary decoding at the receiver\cite{Cho2010MIMObook}, we alleviate any need for complex receiver processing thereby significantly reducing its computational burden. 
Finally, the precoded symbols are scheduled to each user and are processed through the conventional transmitter signal processing blocks (\eg CP/ guard insertion) before transmission. To our best of knowledge, precoding for NS channel is first proposed in our previous work~\cite{Zou2022NP}, which is extended in this work.
Additionally, we investigate the computational complexity of our precoding and show that it achieves lower complexity than Dirty Paper Coding (Section \ref{sec:complexity}). 
Therefore, we make the following contributions: 

\noindent
\textbf{1) Decomposing non-stationary channels into flat-fading subchannels:} The 4-dimensional channel kernels is decomposed into dual jointly orthogonal subchannels (eigenfunctions) that are flat-fading in the eigen-domain (Section \ref{sec:NSdec}).\\
\textbf{2) A Unified Characterization of Wireless Channels:} The eigenfunctions decomposed from 4-dimensional coefficients of atomic channels completely characterizes the non-stationary channel, which can generalize to any wireless channel (Section \ref{sec:stat}).\\
\textbf{3) Joint Spatio-Temporal Precoding:} The eigenfunctions decomposed from 4-dimensional channel kernels are leveraged to cancel the joint spatio-temporal interference in non-stationary channels (Section \ref{sec:precoding}).\\
\textbf{4) Post-coding free Precoding:} The precoded symbols transmitting through the channel directly
reconstruct the modulated sybols at receivers without complementary step, alleviating additional computational
burden (Section \ref{sec:precoding}).

\section{Background \& Related work}
\label{App:related_NLP}
\begin{table*}
\setlength{\textfloatsep}{0.1cm}
\setlength{\tabcolsep}{0.2em}
    \centering
    \scriptsize
    \def\arraystretch{1}
    \begin{tabular}{|p{3.5cm}|p{3cm}|p{5cm}|p{4.2cm}|}
\hline
\multicolumn{1}{|c|}{\textbf{Applications}} &
  \multicolumn{1}{c|}{\textbf{Source of NS}} &
  \multicolumn{1}{c|}{\textbf{Degree of NS}} &
  \multicolumn{1}{c|}{\textbf{SoTA BER}} 
 \\ \hline
\textbf{Vehicle-to-Everything (V2X)} \newline 
\textbf{High-Speed Train (HST)} \newline 
\textbf{Unmanned Aerial Vehicle (UAV)} 
  &
  High Tx-Rx mobility, weather,  time-varying scatterers, AoA, AoD and Doppler, UAV attitude. 
  & V2X: SI${=}$10–40 m at 5–15 km/h \cite{Renaudin2010NonStationaryNM,Renaudin2009C2C} \newline
  HST: SI${=}$1.37-2.84m at 2.6GHz and 18MHz BW \cite{Zhou2018HST} \newline
  UAV: SI${=}$15m 
  at C band with 50 MHz BW \cite{matolak2012air,khuwaja2018survey}
  
  & V2X: BER${>}0.01$ for BPSK at SNR 20dB \cite{Jaime2020V2X}
  
\\\hline

\textbf{MIMO}\newline 
\textbf{Massive MIMO}\newline
\textbf{Extra-large MIMO (XL-MIMO)}\newline
\textbf{MU-MIMO}
  &
  Time varying multipath, delay spread and spatial visibility regions. 
  & MIMO: 0-3m Correlation Matrix Distance \cite{herdin2005correlation}\newline
  Massive/XL/MU-MIMO: AoA shift from 100$^o{-}$80$^o$ 
  at 2.6 GHz with 50 MHz BW \cite{payami2012channel}
  & MIMO: BER${>}0.07$ for QPSK at SNR 20dB \cite{herdin2005correlation} with 8${\times}$8 MIMO
  Massive/XL/MU-MIMO: BER${>}10^{-2}$ with 320 antenna elements 
  \cite{malkowsky2017world}
  
\\\hline

\textbf{mmWave}, \textbf{THz} \newline
\textbf{Underwater}, \textbf{Satellite-Com} \newline
\textbf{Visible Light Communications} \newline
\textbf{Reconf Intelligent Surface} \newline
\textbf{Hypersonic Reentry Com (HRC)}
&
  Time varying blockage, reflection angles and NLoS, atmospheric absorption topology and opportunistic access.   
&
mmWave: Delay spread${<}$20 ns at 28–73 GHz \cite{wang2018survey,Lovnes1994mmW}\newline
HRC: Coherence time ${\approx}5\mu s$ \cite{shi2019effective}
&
THz: BER${>}0.02$ for 16-QAM at 26m distance and frequency of 350 GHz \cite{Wang2020268mTW}\newline
Underwater: BER${>}0.003$ QPSK at SNR 20dB \cite{sharif2000closed}\newline
HRC: BER${>}10^{-1}$ \cite{shi2019effective}
  \\ \hline
\end{tabular}
\caption{Overview of applications in non-stationary channels: Impairments, SoTA and BER performance.}
\label{tab:ns}
\end{table*}
We categorize the related work into three categories:

\noindent
\textbf{Non-stationarity of Wireless Channels:}
Table~\ref{tab:ns} provides ample evidence in the literature for the existence of non-stationarity (NS) in modern wireless channels. 
Non-stationarity is primarily attributed to temporal and sometimes spatial variation in the transceivers and the dynamic nature of the scattering environment, and is measured in terms of the stationarity interval (SI) \cite{Renaudin2010NonStationaryNM} in time or space. 
Depending on the features of the channel non-stationarity may arise from the time-varying Doppler in V2X, HST and UAV channels, from the time-varying multipath in MIMO channels and its variants and due to time varying blockage in mmWave, THz and VLC channels. 
Unfortunately, state of the art (SoTA) methods applied to NS channels are only able to achieve a modest error rate that is inadequate to support high data rate wireless applications like mobile AR/VR/XR, aerial communications and 4K/8K HDR video streaming services. These will require joint spatio-temporal precoding for NS channels that is capable of achieving orders of magnitude improvement in Bit Error Rate (BER) across all types of channels and applications.

\noindent
\textbf{Precoding in Non-Stationary Channels:}
Although precoding non-stationary channels is unprecedented in the literature \cite{AliNS0219}, we list the most related literature for completeness. 
The challenge in precoding non-stationary channels is the time-dependence of statistics and the channel cannot be modeled as the time-independent matrix. This leads to suboptimal performance using state-of-the-art precoding techniques like Dirty Paper Coding (DPC). Though DPC is theoretically interference-free with perfect CSI, the current implementation by QR decomposition targeting for separate channel matrices, unable to capture the variation over time.
Meanwhile, recent literature present attempt to deal with imperfect CSI by modeling the error in the CSI \cite{HatakawaNLP2012, HasegawaTHP2018, GuoTHP2020, DietrichTHP2007, CastanheiraPGS2013, WangTHP2012, MazroueiTDVP2016, Jacobsson1DAC2017}, they are limited by the assumption the channel or error statistics are stationary or WSSUS at best. 

\noindent
\textbf{Spatio-Temporal Precoding:}
While, precoding has garnered significant research, spatio-temporal interference is typically treated as two separate problems, where spatial precoding at the transmitter aims to cancel inter-user and inter-antenna interference, while equalization at the receiver mitigates inter-carrier and inter-symbol interference.
Alternately, \cite{OTFS_2018_Paper} proposes to modulate the symbols such that it reduces the cross-symbol interference in the delay-Doppler domain, but requires equalization at the receiver to completely cancel such interference in practical systems. 
Moreover, this approach cannot completely minimize the joint spatio-temporal interference that occurs in non-stationary channels since their statistics depend on the time-frequency domain in addition to the delay-Dopper domain (explained in Section \ref{sec:pre}).
While spatio-temporal block coding techniques are studied in the literature \cite{Cho2010MIMObook} they add redundancy and hence incur a communication overhead to mitigate interference, which we avoid by precoding. 
These techniques are capable of independently canceling the interference in each domain, however are incapable of mitigating interference that occurs in the joint spatio-temporal domain in non-stationary channels. 
We design a joint spatio-temporal precoding that leverages the extracted 2-D eigenfunctions from non-stationary channels to mitigate interference that occurs on the joint space-time dimensions, which to the best of our knowledge is unprecedented in the literature. 

\section{Models \& Preliminaries}
\label{sec:pre}
%
%

\subsection{Non-stationary wireless channel model}
\label{sec:pre}

The wireless channel is typically expressed by a linear operator $H$, and the received signal $r(t)$ is given by $r(t){=}Hs(t)$, where $s(t)$ is the transmitted signal. The physics of the impact of $H$ on $s(t)$ is described using the delays and Doppler shift in the multipath propagation~\cite{MATZ20111} given by \eqref{eq:H_delay_Doppler},
\begin{equation}
    r(t) = \sum\nolimits_{p=1}^P h_p s(t-\tau_p) e^{j2\pi \nu_p t}
    \label{eq:H_delay_Doppler}
\end{equation}
where $h_p$, $\tau_p$ and $\nu_p$ are the path attenuation factor, time delay and Doppler shift for path $p$, respectively. 
\eqref{eq:H_delay_Doppler} is expressed in terms of the overall delay $\tau$ and Doppler shift $\nu$ \cite{MATZ20111} in \eqref{eq:S_H}, 
%
\begin{align}
    &r(t) = \iint S_H(\tau, \nu) s(t{-} \tau) e^{j2\pi \nu t} ~d\tau ~d\nu \label{eq:S_H} \\
    & = \int L_H(t, f) S(f) e^{j2\pi tf} ~df 
     = \int h(t, \tau) s(t{-} \tau) ~d\tau \label{eq:h_relation}
\end{align}
where $S_H(\tau, \nu)$ is the \textit{(delay-Doppler) spreading function} of channel $H$, which describes the combined attenuation factor for all paths in the delay-Doppler domain. $S(f)$ is the Fourier transform of $s(t)$ and the time-frequency (TF) domain representation of $H$ is characterized by its \textit{TF transfer function}, $L_H(t,f)$, which is obtained by the 2-D Fourier transform of $S_H(\tau, \nu)$ as in \eqref{eq:TF_SH}. 
The time-varying impulse response $h(t,\tau)$ is obtained as the Inverse Fourier transform of $S_H(\tau, \nu)$ from the Doppler domain to the time domain as in  \eqref{eq:h_SH}.
\begin{align}
    &L_H(t,f) {=} \iint S_H(\tau, \nu) e^{j2\pi (t\nu{-} f \tau)} ~d\tau ~d\nu \label{eq:TF_SH}\\
    &h(t,\tau) {=} \int S_H(\tau, \nu) e^{j2\pi t \nu} ~d\nu \label{eq:h_SH}
\end{align}
Figure~\ref{fig:example_Ht} and \ref{fig:example_Lh} show the time-varying impulse response and TF transfer function for a NS channel, respectively.
%
\begin{figure*}[t]
\begin{subfigure}{.265\textwidth}
  \centering
\includegraphics[width=1\linewidth]{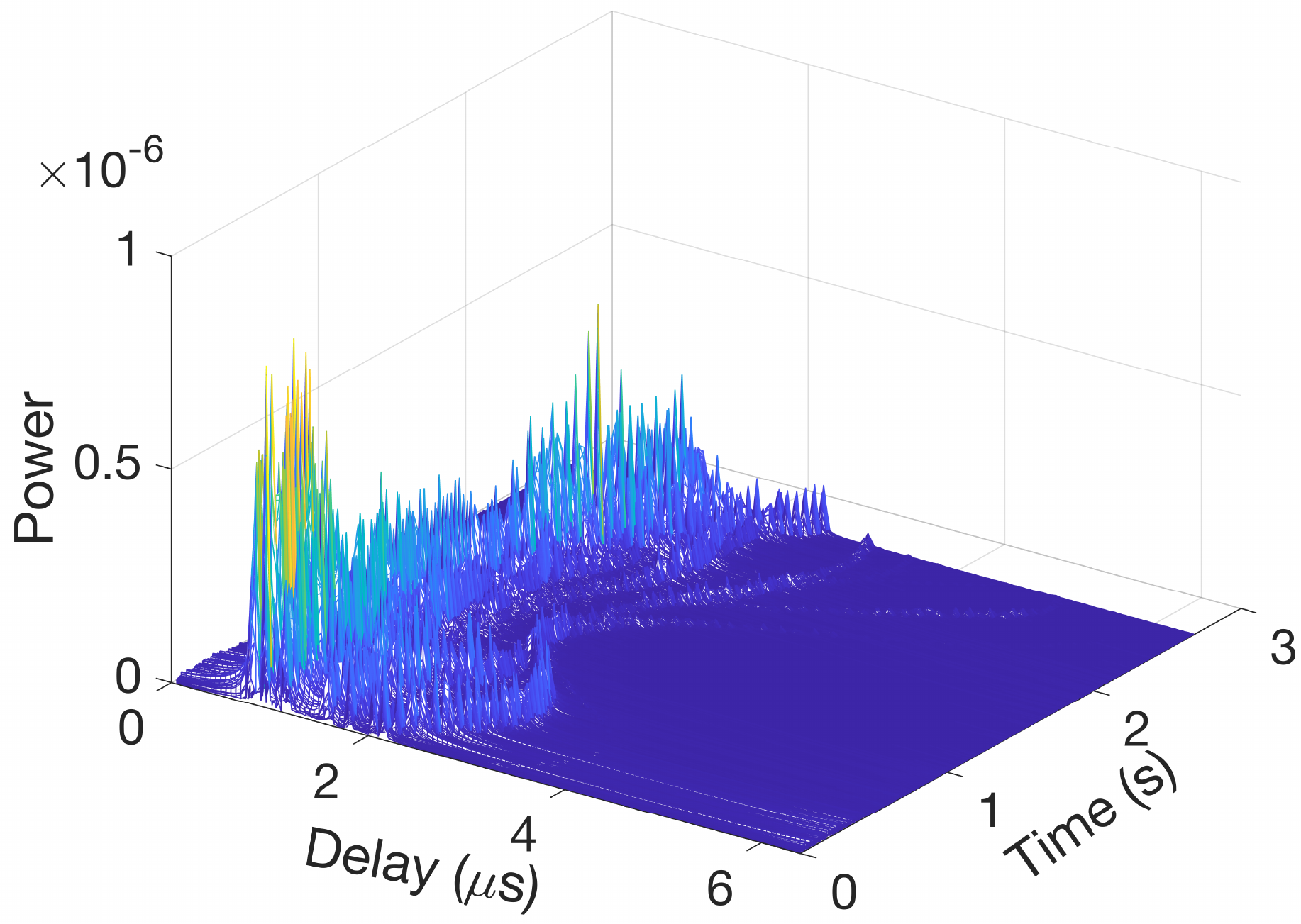}
  \caption{The time-varying impulse response of antenna 1 to user 1}
  \label{fig:example_Ht}
\end{subfigure}
\qquad
\quad
\begin{subfigure}{.265\textwidth}
  \centering
  \includegraphics[width=1\linewidth]{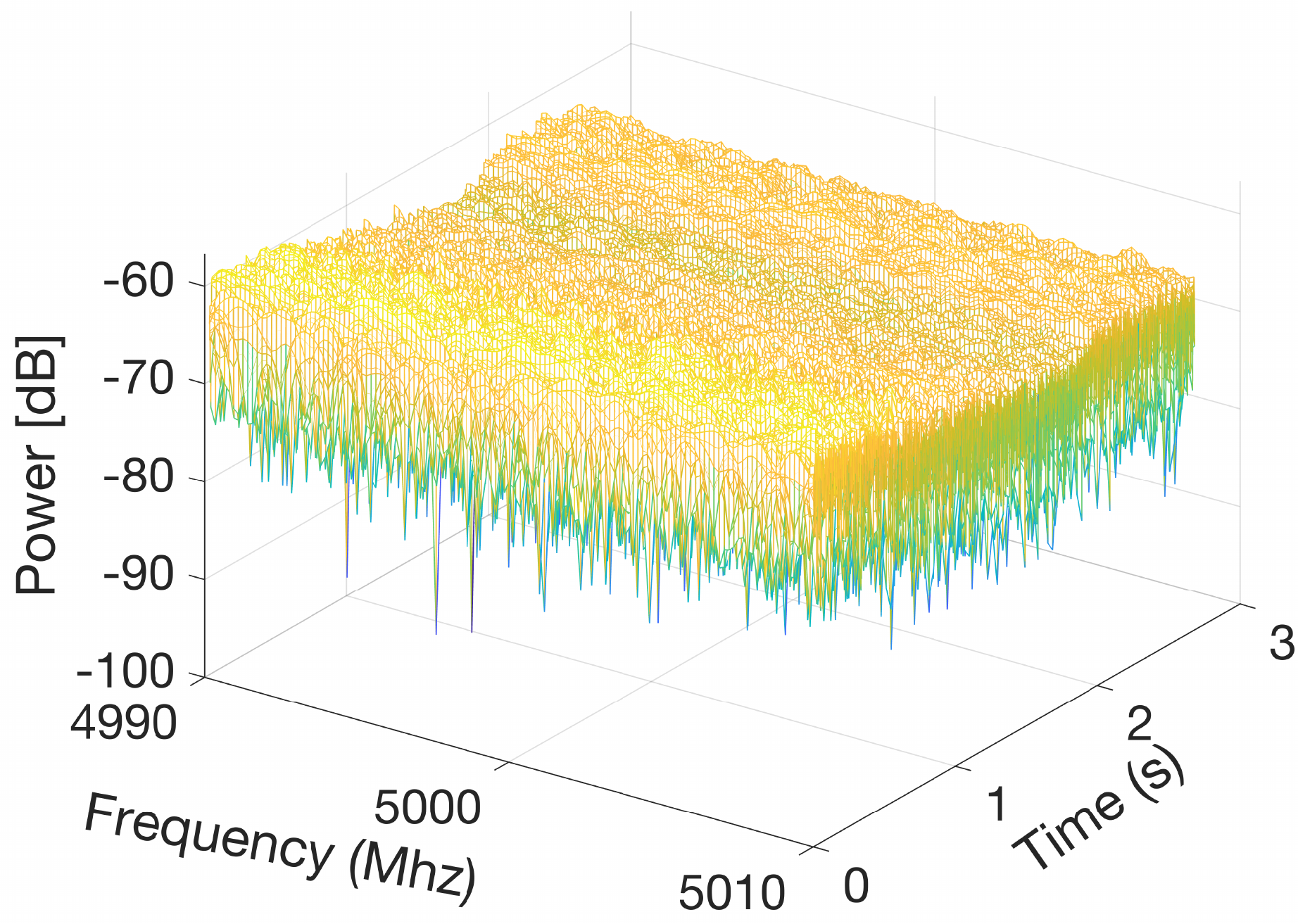}
  \caption{The time-frequency transfer function of antenna 1 to user 1}
  \label{fig:example_Lh}
\end{subfigure}
\qquad
\quad
\begin{subfigure}{.325\textwidth}
  \centering
  \includegraphics[width=1\linewidth]{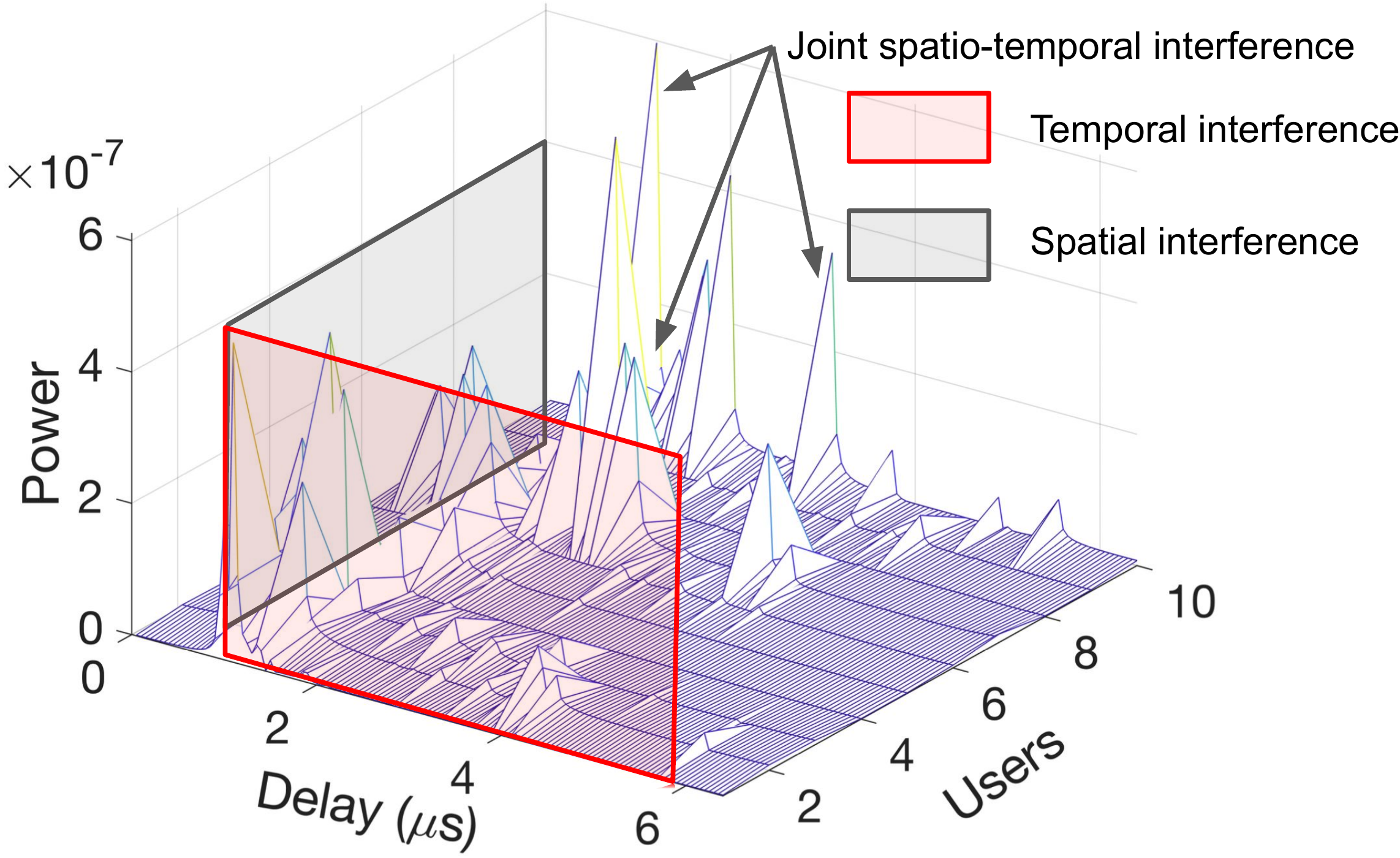}
  \caption{The spatio-temporal channel gains of antenna 1 at time 1}
  \label{fig:joint_interference}
\end{subfigure}
\caption{Illustration of a non-stationary channel}
\label{fig:toychannel}
\end{figure*}

\subsection{Statistics of non-stationary channels}

For stationary channels, the TF transfer function is a stationary process and and the spreading function is a white process (uncorrelated scattering) with 
\begin{align}
    &\mathbb{E}\{L_H(t, f) L_H^*(t',f')\} {=} R_H(t{-}t',f{-}f')\\
    &\mathbb{E}\{S_H(\tau, \nu) S_H^*(\tau',\nu')\} {=} C_H(\tau,\nu) \delta (\tau{-} \tau') \delta(\nu{-} \nu')
\end{align}

where $\delta(\cdot)$ is the Dirac delta function.
$C_H(\tau,\nu)$ and $R_H(t-t',f-f')$ are the \textit{scattering function} and \textit{TF correlation function}, respectively, which are related via 2-D Fourier transform, 

\begin{equation}
    C_H(\tau,\nu) = \iint R_H(\Delta t, \Delta f) e^{-j2\pi(\nu \Delta t -  \tau \Delta f)} ~d\Delta t ~d\Delta f
\end{equation}
In contrast, for non-stationary channels, the TF transfer function is non-stationary process and the spreading function is a non-white process. 
Therefore, a \textit{local scattering function} (LSF) $\mathcal{C}_H(t,f;\tau,\nu)$ \cite{MATZ20111} is defined to extend $C_H(\tau,\nu)$ to the non-stationary channels in \eqref{eq:LSF}. 
Similarly, the \textit{channel correlation function} (CCF) $\mathcal{R}(\Delta t, \Delta f;\Delta \tau, \Delta \nu)$ generalizes $R_H(\Delta t, \Delta f)$ to the non-stationary case in \eqref{eq:CCF}.
\begin{align}
\begin{split}
    &\mathcal{C}_H(t,f ; \tau,\nu)  \label{eq:LSF}\\
    & {=} {\iint} R_L(t, f; \Delta t, \Delta f) e^{-j2\pi(\nu \Delta t -  \tau \Delta f)} ~d\Delta t ~d\Delta f  \\
    & {=} \iint R_S(\tau, \nu; \Delta \tau, \Delta \nu) e^{-j2\pi(t \Delta \nu -  f \Delta \tau)} ~d\Delta \tau ~d\Delta \nu \\
\end{split}\\
\begin{split}
    &\mathcal{R}(\Delta t, \Delta f;\Delta \tau, \Delta \nu) \label{eq:CCF}\\
    & {=} \iint R_L(t, f; \Delta t, \Delta f) e^{-j2\pi(\Delta \nu  t -  \Delta \tau  f)} ~d t ~d f \\
    & {=} \iint R_S(\tau, \nu; \Delta \tau, \Delta \nu) e^{-j2\pi(\Delta t \nu -  \Delta f  \tau)} ~d \tau ~d \nu
\end{split}
\end{align}
where 
\begin{align}
    &R_L(t, f; \Delta t, \Delta f){=}\mathbb{E}\{L_H(t, f {+} \Delta f) L_H^*(t {-} \Delta t, f)\} \nonumber\\
    &R_S(\tau, \nu; \Delta \tau, \Delta \nu){=}\mathbb{E}\{S_H(\tau, \nu {+} \Delta \nu) S_H^*(\tau{-} \Delta \tau, \nu)\} \nonumber
\end{align}

For stationary channels, CCF reduces to TF correlation function as
\begin{align}
    &\mathcal{R}(\Delta t, \Delta f;\Delta \tau, \Delta \nu){=}R_H(\Delta t, \Delta f) \delta(\Delta t) \delta(\Delta f) \nonumber
\end{align}

\subsection{Multi-user non-stationary channel model}
For precoding, 
we express the spatio-temporal downlink channel response 
by extending the time-varying response $h(t,\tau)$ to incorporate multiple users. Without loss of generality, for convenience of exposition let us consider the case where each user has a single antenna, i.e., MISO case. Denotes $h_{u,u'}(t,\tau)$ \cite{almers2007survey} as the time-varying impulse response between the $u'$\textsuperscript{th} transmit antenna and the $u$\textsuperscript{th} user (For MIMO case, $h_{u,u'}(t,\tau)$ is a matrix). 
Thus the received signal in~(3) is extended to
the multi-user case of $h(t,\tau)$ and is given by~(11),
\begin{align}
\label{eq:H_t_tau}
\mathbf{H}(t,\tau) = \begin{bmatrix}
 h_{1,1}(t,\tau) &\cdots & h_{1,u'}(t,\tau)  \\
 \vdots& \ddots &  \\
h_{u,1}(t,\tau) &  & h_{u,u'}(t,\tau)
\end{bmatrix}
\tag{11}
\end{align}
%
Therefore, the received signal in \eqref{eq:h_relation} is extended as in \eqref{eq:r_mu_discrete}.
\begin{table*}[b]
\begin{align}
    & r_u (t) {=} \int \sum_{u'} h_{u,u'}(t,\tau) s_{u'}(t-\tau) d\tau + v_{u}(t) \label{eq:r_mu_discrete} \\
    &{=} \underbrace{h_{u,u}(t,0)s_{u}(t)}_\text{Signal with the attenuation coefficient} + \underbrace{\sum_{u'\neq u} h_{u,u'}(t,0)s_{u'}(t)}_\text{Spatial interference} + \underbrace{\int h_{u,u}(t,\tau) s_{u}(t-\tau) d\tau}_\text{Temporal interference} + \underbrace{\int \sum_{u' \neq u} h_{u,u'}(t,\tau) s_{u'}(t-\tau)d\tau}_\text{Joint spatio-temporal interference} + \underbrace{v_{u}(t)}_\text{Noise} \label{eq:r_mu}
\end{align}
\end{table*}
The first term in \eqref{eq:r_mu} is desired signal with fading effects. 
Spatial and temporal interference correspond to the second and third terms in \eqref{eq:r_mu} and the black and red regions in figure~\ref{fig:joint_interference}, respectively. 
Additionally, the space-time kernel induces joint spatio-temporal due to the interference from delayed symbols from other users as shown in the figure and the last term of \eqref{eq:r_mu}. 
Further, since the spatio-temporal signals in practice are 2-dimensional ($r_{u}(t)$ or $s_{u}(t)$), canceling all the above interference, necessitates a method to decompose the asymmetric 4-D channel in \eqref{eq:H_t_tau} into 2-D 
independently fading subchannels.

The CSI processing including estimating the 4-D spatio-temporal channel at the receiver (to obtain CSI), compressing and transmitting, and predicting CSI by outdated CSI are widely investigated in the literature \cite{2020CESrivastava,2005CEXiaoli,2008CEMilojevic, Guo2022CSI, Maqsood2020Prediction} and hence is not the focus of this work. Therefore, in the following sections we investigate the ability to cancel all interference for non-stationary channels under perfect CSI.

Therefore, in the following sections we investigate the ability to cancel all interference (precode) non-stationary channels under perfect CSI. 
\section{Non-stationary channel decomposition}
\label{sec:NSdec}

4-D channel decomposition into orthonormal 2-D subchannels is unprecedented the literature, but is essential to mitigate joint interference in the 2-D space and to completely characterize non-stationary channels (Any channel can be generated as a special case of the non-stationary channel. 
Therefore a precoding for non-stationary channels would generalize to any other wireless channel \cite{MATZ20111}).  
While SVD is only capable of decomposing LTI channels, Karhunen–Loève transform (KLT) \cite{wang2008karhunen}
provides a method to decompose random process into component eigenfunctions of the same dimension. 
However, 
KLT is unable to decompose the multi-user time-varying 4-D channel in \eqref{eq:H_t_tau}, into orthonormal 2-D space-time eigenfunctions (DoF decorrelation), and therefore cannot mitigate interference on the joint space-time dimensions. 
Mercer's theorem provides a method to decompose symmetric 2-D kernels into the same eigenfunctions along different dimensions, however, it cannot directly decompose 4-D channel kernels due to their high-dimensionality and since the channel kernel in \eqref{eq:H_t_tau} is not necessarily symmetric in the 4 dimensions. 
Therefore, we derive a generalized version of Mercer's Theorem for asymmetric kernels and extend it to higher-order kernels, which decomposes the asymmetric 4-D channel into 2-D jointly orthogonal subchannels. Consequently, this leads to flat-fading communication and mitigates joint spatio-temporal interference.
The above decomposition techniques are compared in Table~\ref{tab:decomp}, where only SVD has been used in the literature for precoding.


\begin{table*}[h]
\setlength{\textfloatsep}{0.1cm}
\setlength{\tabcolsep}{0.2em}
\caption{Comparison of HOGMT with other channel decomposition methods}
\renewcommand*{\arraystretch}{1.2}
\centering
\begin{tabular}{|l|c|c|c|c|c|}
\hline
\textbf{Decomposition methods} & \textbf{Time-varying} & \textbf{Asymetric} & \textbf{DoF decorrelation} & \textbf{High-order} \\ \hline
SVD: $H = U \Sigma V^*$             &   & \checkmark & \checkmark &          \\ \hline
Karhunen–Loève Transform (KLT): $X(t) = \Sigma_n \sigma_n \phi_n(t)$             &  \checkmark & \checkmark &  &                        \\ \hline
Mercer's Theorem: $K(t,t') = \Sigma_n \lambda_n \phi_n(t) \phi_n(t')$            &  \checkmark & & \checkmark &                    \\ \hline
Generalized Mercer's Theorem (GMT): $K(t,t') = \Sigma_n \sigma_n \psi_n(t) \phi_n(t')$&  \checkmark & \checkmark & \checkmark &  \\ \hline
\rowcolor[HTML]{FFFFBF}HOGMT: $K(\zeta_1{,}{...}{,}\zeta_P{;} \gamma_1{,}{...}{,} \gamma_Q) {=} {\sum_{n}} \sigma_n \psi_n(\zeta_1{,}{...}{,}\zeta_P) \phi_n(\gamma_1{,}{...}{,}\gamma_Q)$ 
    &  \checkmark & \checkmark & \checkmark & \checkmark
\\ \hline
\end{tabular}
\label{tab:decomp}
\end{table*}
\begin{lemma}
\label{lemma:GMT}
(Generalized Mercer's theorem (GMT))
The decomposition of a 2-dimensional process $K {\in} L^2(X {\times} Y)$, where $X$ and $Y$ are square-integrable zero-mean 
processes, 
is given by,
\begin{align}
    K(t, t') = \sum\nolimits_{n=1}^{\infty} \sigma_n \psi_n(t) \phi_n(t')
    \label{eq:GMT}
\end{align}
where $\sigma_n$ is a random variable with $\mathbb{E}\{\sigma_n \sigma_{n'}\} {=} \lambda_n \delta_{nn'}$, and $\lambda_n$ is the $n$\textsuperscript{th} eigenvalue. $\psi_n(t)$ and $\phi_n(t')$ are eigenfunctions.
\end{lemma}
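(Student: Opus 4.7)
The plan is to derive GMT as a bivariate Karhunen--Lo\`eve expansion paired with the Schmidt (singular value) decomposition for Hilbert--Schmidt kernels, which is the natural generalization of Mercer once the symmetry of the kernel is dropped and the coefficients are permitted to be random. The classical Mercer expansion $K(t,t') = \sum_n \lambda_n \phi_n(t) \phi_n(t')$ exploits both the symmetry of $K$ (forcing identical bases on the two factors) and its positive-definiteness (making $\lambda_n$ deterministic and nonnegative); the present lemma loosens the first assumption by allowing two distinct bases $\psi_n$ and $\phi_n$, and loosens the second by letting $\sigma_n$ be a zero-mean random scalar whose second moment plays the role of the classical $\lambda_n$.

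First I would associate to $K \in L^2(X \times Y)$ the Hilbert--Schmidt integral operator $T_K : L^2(Y) \to L^2(X)$, $(T_K g)(t) = \int K(t,t') g(t')\,dt'$, and form the two self-adjoint, positive, compact operators $T_K T_K^*$ on $L^2(X)$ and $T_K^* T_K$ on $L^2(Y)$. Applying classical Mercer (equivalently, the spectral theorem for compact self-adjoint operators) to each yields orthonormal eigenbases $\{\psi_n\} \subset L^2(X)$ and $\{\phi_n\} \subset L^2(Y)$ sharing a common sequence of nonnegative eigenvalues $\{\lambda_n\}$, linked by $T_K \phi_n = \sqrt{\lambda_n}\,\psi_n$ and $T_K^* \psi_n = \sqrt{\lambda_n}\,\phi_n$. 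Expanding $K$ in the tensor-product basis $\{\psi_i \otimes \phi_j\}$ of $L^2(X \times Y)$ gives $K(t,t') = \sum_{i,j} c_{ij} \psi_i(t) \phi_j(t')$, and the singular-value identities force $c_{ij} = 0$ for $i \neq j$, collapsing the expansion to the single-index form of \eqref{eq:GMT}.

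To promote this to the random setting in which $X,Y$ are stochastic processes and $K$ is consequently a random kernel, I would define the projection coefficients $\sigma_n := \iint K(t,t')\, \psi_n^*(t) \phi_n^*(t')\,dt\,dt'$ using the deterministic bases obtained from the covariance operator $\mathcal{C}((t,t'),(s,s')) := \mathbb{E}\{K(t,t') \overline{K(s,s')}\}$ rather than from any single realization of $K$. By Fubini and the eigenequation of $\mathcal{C}$, a direct computation then yields
\begin{align*}
\mathbb{E}\{\sigma_n \overline{\sigma_{n'}}\} = \iiiint \mathcal{C}\bigl((t,t'),(s,s')\bigr)\, \psi_n^*(t)\,\phi_n^*(t')\,\psi_{n'}(s)\,\phi_{n'}(s')\,dt\,dt'\,ds\,ds' = \lambda_n\, \delta_{n n'},
\end{align*}
which matches the claimed second-order orthogonality of the coefficients.

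The hard part will be the tensor-product factorization of the eigenfunctions of $\mathcal{C}$: for a generic 4-variable PSD kernel the eigenfunctions need not be separable, so I would either impose a separability assumption on the second-order structure of $K$ (which, for physical channel kernels built from the product form of the Hilbert--Schmidt operator acting on space and time, is natural) or, equivalently, construct $\psi_n$ and $\phi_n$ directly from the mean-square SVDs of $T_K T_K^*$ and $T_K^* T_K$ so that $\{\psi_n \otimes \phi_n\}$ diagonalizes $\mathcal{C}$ by construction. Once this step is in place, completeness of the tensor basis on the essential support of $K$ and $L^2(\Omega \times X \times Y)$-convergence of the series are standard consequences of Hilbert--Schmidt theory, so no additional analytic input is needed to conclude.
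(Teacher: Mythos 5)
Your proof takes a genuinely different route from the paper's. The paper's argument proceeds by projecting $K$ onto the random process $X$ to form $C(t)=\int K(t,t')X(t')\,dt'$, applying the ordinary (one-dimensional) KLT separately to $X$ and $C$ to get bases $\{\phi_n\}$ and $\{\psi_n\}$ with random coefficients $\{x_n\}$ and $\{c_n\}$, then \emph{postulating} the factored form $K(t,t')=\sum_n \sigma_n\psi_n(t)\phi_n(t')$ with $\sigma_n := c_n/x_n$, and finally back-substituting this ansatz into the definition of $C$ and using orthonormality of $\{\phi_n\}$ to show the ansatz is ``consistent.'' You instead build the representation constructively: form the Hilbert--Schmidt operator $T_K$, diagonalize $T_K T_K^{*}$ and $T_K^{*}T_K$ to get the Schmidt pair $(\psi_n,\phi_n)$ linked by $T_K\phi_n=\sqrt{\lambda_n}\psi_n$, expand $K$ in the full tensor basis $\{\psi_i\otimes\phi_j\}$, and show the singular-value relations kill all off-diagonal coefficients; randomness is then restored by projecting the random kernel onto the fixed basis obtained from the covariance operator $\mathcal{C}$ and verifying $\mathbb{E}\{\sigma_n\bar\sigma_{n'}\}=\lambda_n\delta_{nn'}$ by Fubini. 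Your construction actually establishes the existence of the expansion, whereas the paper's verification only confirms that an already-assumed expansion reproduces one particular projection (and its $\sigma_n=c_n/x_n$ involves division by a random variable with no argument that it is almost surely nonzero, nor that the index pairing $n=i=j$ is legitimate). Conversely, the paper's projection-based phrasing is what it iterates to prove Theorem~1 (HOGMT), so its informal structure is chosen for reuse rather than rigor; your version is cleaner but you correctly flag the one nontrivial residual gap on your side, namely that the eigenfunctions of the four-variable covariance kernel $\mathcal{C}$ need not factor as $\psi_n\otimes\phi_n$ in general, which requires either a separability hypothesis or the mean-square SVD construction you sketch. With that hypothesis made explicit, your argument closes a logical hole the paper's proof leaves open.
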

\begin{proof}
Consider a 2-dimensional process $K(t,t') \in L^2(Y \times X)$, where $Y(t)$ and $X(t')$ are square-integrable zero-mean random processes with covariance function $K_{Y}$ and $K_{X}$, respectly. 
The projection of $K(t, t')$ onto $X(t')$ is obtained as in \eqref{eq:projection},
\begin{align}
\label{eq:projection}
    & C(t) = \int K(t, t') X(t') ~dt'
\end{align}

Using \textit{Karhunen–Loève Transform} (KLT), $X(t')$ and $C(t)$ are both decomposed as in \eqref{eq:X_t_C_t}, 
\begin{align}
\label{eq:X_t_C_t}
\begin{aligned}
X(t') = \sum_{i = 1}^{\infty} x_{i} \phi_{i}(t') \quad \text{and,} \quad
C(t) = \sum_{j = 1}^{\infty} c_{j} \psi_{j}(t) 
\end{aligned}
\end{align}

where $x_i$ and $c_j$ are both random variables with $\mathbb{E}\{x_i x_{i'}\} {=} \lambda_{x_i} \delta_{ii'}$ and $\mathbb{E}\{c_j c_{j'}\} {=} \lambda_{c_j} \delta_{jj'}$.  $\{\lambda_{x_i}\}$, $\{\lambda_{x_j}\}$ $\{\phi_i(t')\}$ and $\{\psi_j(t)\}$ are eigenvalues and eigenfuncions, respectively.
%
%
Let us denote $n{=}i{=}j$ and $\sigma_n {=} \frac{c_n}{x_n}$, and assume that $K(t,t')$ can be expressed as in \eqref{eq:thm_K_t},
\begin{align}
\label{eq:thm_K_t}
    K(t,t') = \sum_n^\infty \sigma_n \psi_{n}(t) \phi_{n}(t')
\end{align}
We show that \eqref{eq:thm_K_t} is a correct representation of $K(t,t')$ by proving \eqref{eq:projection} holds under this definition. 
We observe that by substituting \eqref{eq:X_t_C_t} and \eqref{eq:thm_K_t} into the right hand side of \eqref{eq:projection} we have that,
\begin{align}
    & \int K(t, t') X(t') ~dt' 
    {=} \int \sum_n^\infty \sigma_n \psi_{n}(t) \phi_{n}(t') \sum_{n}^{\infty} x_{n} \phi_{n}(t') ~dt' \nonumber \\
    & {=} {\int} {\sum_n^\infty} \sigma_n x_n \psi_n(t) |\phi_n(t')|^2 {+} {\sum_{n'\neq n}^ \infty} \sigma_{n} x_{n'} \psi_{n}(t) \phi_{n}(t') \phi_{n'}^*(t') d t' \nonumber \\
    & = \sum_n^\infty c_n \psi_n(t) = C(t)
\end{align}
which is equal to the left hand side of \eqref{eq:projection}. 
Therefore, \eqref{eq:thm_K_t} is a correct representation of $K(t,t')$.
\end{proof}

%
%
\noindent
%
From Lemma~\ref{lemma:GMT}, by letting $\rho(t, t'){=}\psi_n(t) \phi_n(t')$ in \eqref{eq:GMT} we have \eqref{eq:GMT_2D}, 
\begin{equation}
    K(t, t') {=} \sum_{n=1}^{\infty} \sigma_n \rho_n(t, t')
    \label{eq:GMT_2D}
\end{equation}
where the 2-D kernel is decomposed into random variable ${\sigma_n}$ with constituent 
2-D eigenfunctions, $\rho(t, t')$, this serves as an extension of KLT 
to 2-D kernels.
A similar extension leads to the derivation of KLT for N-dimensional kernels which is 
key to deriving Theorem \ref{thm:hogmt}.



\noindent
\fbox{\begin{minipage}{0.97\linewidth}
\begin{theorem}
\label{thm:hogmt}
(High Order GMT (HOGMT)) The decomposition of $M {=} Q{+}P$ dimensional kernel $K {\in} L^M(X {\times} Y)$, where $X(\gamma_1,\cdots,\gamma_Q)$ and $Y(\zeta_1,\cdots,\zeta_P)$ are $Q$ and $P$ dimensional kernels respectively, that are square-integrable zero-mean random processes, is given by \eqref{eq:col},
\begin{align}
\label{eq:col}
K(\zeta_1{,}{...}{,}\zeta_P{;} \gamma_1{,}{...}{,} \gamma_Q) {=} {\sum_{n{=}1}^ \infty} \sigma_n \psi_n(\zeta_1{,}{...}{,}\zeta_P) \phi_n(\gamma_1{,}{...}{,}\gamma_Q)
\end{align}
where $\mathbb{E}\{\sigma_n \sigma_n'\} {=} \lambda_n \delta_{nn'}$. $\lambda_n$ is the $n$\textsuperscript{th} eigenvalue and $\psi_n(\zeta_1,\cdots,\zeta_P)$ and $\phi_n(\gamma_1,\cdots, \gamma_Q)$ are $P$ and $Q$ dimensional eigenfunctions respectively.
\end{theorem}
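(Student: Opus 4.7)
The plan is to mimic the strategy of Lemma~\ref{lemma:GMT} but with the scalar variables $t,t'$ replaced by the multidimensional ``coordinates'' $\zeta {=} (\zeta_1,\dots,\zeta_P)$ and $\gamma {=} (\gamma_1,\dots,\gamma_Q)$. First I would form the projection of $K$ onto $X$ along the $Q$ arguments,
\begin{align}
    C(\zeta_1,\dots,\zeta_P) = \int \cdots \int K(\zeta_1,\dots,\zeta_P;\gamma_1,\dots,\gamma_Q)\, X(\gamma_1,\dots,\gamma_Q)\, d\gamma_1 \cdots d\gamma_Q ,
\end{align}
which yields a $P$-dimensional zero-mean random process. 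Square-integrability of $K$ and $X$, together with Fubini and Cauchy--Schwarz, will give that $C \in L^2$ in its arguments. This projection is the higher-dimensional analogue of $C(t){=}\int K(t,t')X(t')dt'$ used in the proof of Lemma~\ref{lemma:GMT}.

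The next step is the multidimensional KLT. For any square-integrable zero-mean random field $Z(\alpha_1,\dots,\alpha_R)$, the covariance kernel $K_Z$ is a symmetric positive-semidefinite function on a product domain, so the associated integral operator on $L^2$ is compact and self-adjoint. The spectral theorem then produces an orthonormal eigensystem $\{\chi_n(\alpha_1,\dots,\alpha_R)\}$ with eigenvalues $\{\lambda_n\}$ and uncorrelated random coefficients $z_n$ with $\mathbb{E}\{z_n z_{n'}\} {=} \lambda_n \delta_{nn'}$. Applying this to $X$ and to $C$ gives
\begin{align}
    X(\gamma_1,\dots,\gamma_Q) = \sum_{i=1}^{\infty} x_i\, \phi_i(\gamma_1,\dots,\gamma_Q), \qquad C(\zeta_1,\dots,\zeta_P) = \sum_{j=1}^{\infty} c_j\, \psi_j(\zeta_1,\dots,\zeta_P),
\end{align}
which is exactly the higher-dimensional analogue of the expansions in \eqref{eq:X_t_C_t}.

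Pairing the indices $n {=} i {=} j$ and defining $\sigma_n {=} c_n / x_n$, I would then posit the candidate expansion
\begin{align}
    K(\zeta_1,\dots,\zeta_P;\gamma_1,\dots,\gamma_Q) = \sum_{n=1}^{\infty} \sigma_n\, \psi_n(\zeta_1,\dots,\zeta_P)\, \phi_n(\gamma_1,\dots,\gamma_Q),
\end{align}
and verify correctness by substituting it, together with the KLT of $X$, back into the projection integral. Using orthonormality of the $\phi_n$ on the $Q$-dimensional domain, all cross terms $(n {\neq} n')$ vanish and the diagonal terms collapse to $\sum_n c_n \psi_n(\zeta_1,\dots,\zeta_P) = C(\zeta_1,\dots,\zeta_P)$, matching the left-hand side and confirming that the posited expansion is a correct representation of $K$. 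The moment property $\mathbb{E}\{\sigma_n \sigma_{n'}\} = \lambda_n \delta_{nn'}$ then follows from that of the KLT coefficients $c_n$ after normalization.

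The step I expect to be the main obstacle is the index-pairing assumption $n {=} i {=} j$, which silently requires that the eigensystems of $X$ and of $C$ can be enumerated so that nonzero-eigenvalue modes correspond one-to-one; this is exactly the subtle point already glossed over in Lemma~\ref{lemma:GMT}. Formally, this is justified by restricting attention to the range of the integral operator defined by $K$ and noting that $C$ is obtained by composing this operator with $X$, so that modes of $X$ with nonzero eigenvalues map to modes of $C$ of matching multiplicity. A secondary technicality is the $L^2$-convergence of the series, which follows from Parseval applied to the multidimensional KLT expansions, together with $K \in L^M$. With these two points handled, the argument proceeds in exact parallel to Lemma~\ref{lemma:GMT}.
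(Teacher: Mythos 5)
Your proposal follows the paper's proof in structure: project $K$ onto $X$ to obtain $C$, expand both $X$ and $C$ in a multidimensional KLT, posit the form with $\sigma_n = c_n/x_n$, and verify by substituting back into the projection identity and using orthonormality of the $\phi_n$ to kill the cross terms. The one place you diverge is in how the multidimensional KLT itself is obtained: the paper builds it iteratively by repeatedly applying Lemma~\ref{lemma:GMT}, collapsing a separable pair $e_n(\gamma_1)s_n(\gamma_2)$ into a single $2$-dimensional eigenfunction $\phi_n(\gamma_1,\gamma_2)$ and iterating up to the full dimension, whereas you invoke the spectral theorem for the compact self-adjoint covariance operator on $L^2$ of the product domain directly. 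Both routes deliver the orthonormal expansions with uncorrelated coefficients that the verification step needs, so this is a legitimate shortcut; the paper's iterative route stays self-contained within the GMT machinery, while yours is shorter and produces the general (not necessarily rank-one separable) multidimensional eigenfunctions.

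You are also right to single out the index pairing $n=i=j$ as the fragile step, something the paper glosses over entirely. However, your suggested repair does not quite hold: the covariance operator of $C$ is $T_K\, T_{K_X}\, T_K^{*}$ (where $T_K$ is the integral operator with kernel $K$), and its eigenmodes are not obtained from those of $X$ by composing with $T_K$ unless additional commutation structure is imposed; the claimed one-to-one correspondence of eigensystems does not follow from ``restricting to the range of $T_K$'' alone. Since the paper's own proof carries this same gap without acknowledging it, your proposal is at least as rigorous as the original and is faithful to its approach.
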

\end{minipage}}

\begin{proof}
Given a 2-D process $X(\gamma_1, \gamma_2)$, the eigen-decomposition using Lemma \ref{lemma:GMT} is given by,
\begin{equation}
\label{eq:thm1_1}
    X(\gamma_1, \gamma_2) = \sum_{n}^{\infty} x_{n} e_n(\gamma_1) s_n(\gamma_2)
\end{equation}

Letting $\psi_n(\gamma_1,\gamma_2) {=} e_n(\gamma_1) s_n(\gamma_2)$, and substituting it in \eqref{eq:thm1_1} we have that,

\begin{equation}
\label{eq:2d_klt}
    X(\gamma_1, \gamma_2) = \sum_{n}^{\infty} x_{n} \phi_n(\gamma_1,\gamma_2)
\end{equation}
where $\phi_n(\gamma_1,\gamma_2)$ are 2-D eigenfunctions with the property \eqref{eq:prop1}.
\begin{equation}
\label{eq:prop1}
\iint \phi_n(\gamma_1,\gamma_2) \phi_{n'}(\gamma_1,\gamma_2) ~d\gamma_1 ~d\gamma_2 = \delta_{nn'} 
\end{equation}

We observe that \eqref{eq:2d_klt} is the 2-D form of KLT. With iterations of the above steps, we obtain \textit{High-Order KLT} for $X(\gamma_1,\cdots,\gamma_Q)$ and $C(\zeta_1,\cdots,\zeta_P)$ as given by,
\begin{align}
   & X(\gamma_1,\cdots,\gamma_Q) = \sum_{n}^{\infty} x_{n} \phi_n(\gamma_1,\cdots,\gamma_Q) \\
   & C(\zeta_1,\cdots,\zeta_P) = \sum_{n}^{\infty} c_{n} \psi_n(\zeta_1,\cdots,\zeta_P)
\end{align}
where $C(\zeta_1,\cdots,\zeta_P)$ is the projection of $X(\gamma_1,\cdots,\gamma_Q)$ onto $K(\zeta_1,\cdots,\zeta_P; \gamma_1,\cdots, \gamma_Q)$.
Then following similar steps as in the proof of Lemma 2, we get \eqref{eq:col}. 
\end{proof}

Theorem~\ref{thm:hogmt} is applicable to any $M$ dimensional channel kernel. Examples of such channel kernels may include 1-D time-varying channels, 2-D time-frequency kernels for doubly dispersive channels \cite{2003MatzDP}, user, antenna dimensions in MU-MIMO channels and angles of arrivals and departures 
in mmWave channels.

\section{Characterization of Non-stationary channels}
\label{sec:stat}

Wireless channel characterization in the literature typically require several local and global (in space-time dimensions) higher order statistics to characterize or model non-stationary channels, due to their time-varying statistics. 
These statistics cannot completely characterize the non-stationary channel, however are useful in reporting certain properties that are required for the application of interest such as channel modeling, assessing the degree of stationarity etc.
Contrarily, we leverage the 2-dimensional eigenfunctions that are decomposed from the most generic representation of any wireless channel as a spatio-temporal channel kernel.
These spatio-temporal eigenfunctions can be used to extract any higher order statistics of the channel as demonstrated in Section \ref{sec:NSdec}, and hence serves as a complete characterization of the channel.
Furthermore, since this characterization can also generalize to stationary channels, it is a unified characterization for any wireless channel.
Beyond characterizing the channel, these eigenfunctions are the core of the precoding algorithm.
The analysis of non-stationary channels is 
complicated as its statistics vary across both time-frequency and delay-Doppler domains resulting in 4-D second order statistics \cite{Matz2005NS}, which motivates the need for a unified characterization of wireless channels\footnote{Any channel can be generated as a special case of the non-stationary channel. Therefore a characterization of non-stationary channels would generalize to any other wireless channel \cite{MATZ20111}.}.
Wireless channels are completely characterized by their statistics, however they are difficult to extract for non-stationary channels, due to their time dependence.
%
%
Therefore, we start by expressing the channel $H$ using an atomic channel $G$ and the 4-D channel kernel $\mathcal{H}(t, f ; \tau, \nu)$~\cite{Matz2005NS} as in \eqref{eq:atomic}, 
\begin{align}
\label{eq:atomic}
   H = \iiiint \mathcal{H}(t, f ; \tau, \nu) G_{t,f}^{\tau,\nu} ~dt ~df ~d\tau ~dv
\end{align}
where $G$ is a normalized $(||G||{=}1)$ linear prototype system whose transfer function $L_G(t, f)$ is smooth and localized about the origin of the TF plane. $G_{t,f}^{\tau,\nu} {=} S_{t, f{+}\nu} G S_{t{-}\tau, f}^+$ means that the atomic channel $G$ shifts the signal components localized at $(t{-}\tau, f)$ to $(t, f{+}\nu)$ on the TF plane. $S_{\tau, \nu}$ is TF shift operator defined as $(S_{\tau, \nu}s)(t) {=} s(t{-}\tau) e^{j2 \pi \nu t}$. 
Then the channel kernel $\mathcal{H}(t, f ; \tau, \nu)$ is given by \eqref{eq:compute_H}.
\begin{align}
\label{eq:compute_H}
&\mathcal{H}(t, f ; \tau, \nu)=\left\langle H, G_{t, f}^{\tau, \nu}\right\rangle  \\
& {=}\mathrm{e}^{j 2 \pi f \tau} \iint L_{H}\left(t^{\prime}, f^{\prime}\right) L_{G}^{*}\left(t^{\prime}{\shortminus}t, f^{\prime}{\shortminus}f\right) 
 \mathrm{e}^{{-}j 2 \pi\left(\nu t^{\prime}{-}\tau f^{\prime}\right)} \mathrm{d} t^{\prime} \mathrm{d} f^{\prime} \nonumber
 \end{align}


The statistics of any wireless channel 
can always be obtained from the above 4-D channel kernel. Therefore, decomposing this kernel into fundamental basis allows us to derive a unified form to characterize any wireless channel.
Theorem~\ref{thm:hogmt} ensures that the 4-D channel kernel in \eqref{eq:compute_H} is decomposed as in \eqref{eq:channel_kernel_decomp} into 2-D eigenfunctions that are jointly orthonormal in the time-frequency or delay-Doppler dimensions as in \eqref{eq:properties}. 
\begin{align}
\label{eq:channel_kernel_decomp}
&\mathcal{H}(t, f ; \tau, \nu) = \sum\nolimits_{n{=1}}^\infty \sigma_n \psi_n(t, f) \phi_n(\tau, \nu)\\
\label{eq:properties}
&\begin{aligned}
&& \iint \psi_n(t, f) \psi_{n'}^*(t, f) ~dt ~df {=} \delta_{nn'}  \\
&& \iint \phi_n(\tau, \nu) \phi_{n'}^*(\tau, \nu) ~d\tau ~d\nu {=} \delta_{nn'}
\end{aligned}
\end{align}
The variation across time-frequency delay-Doppler domains in the 4-D channel kernel is extracted by decomposing into separate 2-D eigenfuntions in time-frequency and delay-Doppler domains, respectively. The decorrelation of dimensionality and the orthonormal properties in \eqref{eq:properties} allow eigenfuntions and eigenvalues to extract statistics in either time-frequency and delay Doppler 4-D domains or separate 2-D domains, as shown in Corollary~\ref{col:characterization}.
\begin{corollary}
\label{col:characterization}
(Unified characterization for non-stationary channel by HOGMT) The statistics of the non-stationary channel is completely characterized by its eigenvalues and eigenfunctions 
obtained by the decomposition of $\mathcal{H}(t,f;\tau,\nu)$, which are summarized in Table~\ref{tab:cha}.
\setlength{\textfloatsep}{0.1cm}
\setlength{\tabcolsep}{0.2em}
\begin{table}[h]
\caption{Unified characterization of non-stationary channel}
\renewcommand*{\arraystretch}{1.2}
\centering
\begin{tabular}{|l|l|}
\hline
\textbf{Statistics} & \textbf{Eigen Characterization} \\ \hline
CCF $|\mathcal{R}(\Delta t, \Delta f;\Delta \tau, \Delta \nu)|$             &   ${\sum} \lambda_n |R_{\psi_n}(\Delta t{,}\Delta f)| |R_{\phi_n}(\Delta \tau{,}\Delta \nu)|$                     \\ \hline
LSF  $\mathcal{C}_H(t, f; \tau, \nu)$             &         ${\sum} \lambda_n |\psi_n( \tau,\nu)|^2 |\phi_n(t, f)|^2$               \\ \hline
Global scattering function   $\overline{C}_H(\tau{,} \nu)$               &        ${\sum} \lambda_n |\psi_n( \tau,\nu)|^2 $                \\ \hline
Local TF path gain $\rho_H^2 (t,f)$
    & ${\sum} \lambda_n |\phi_n( t, f)|^2 $  \\ \hline
Total transmission gain     $\mathcal{E}_H^2$                          &     ${\sum} \lambda_n$                   \\ \hline
\end{tabular}
\label{tab:cha}
\end{table}
\end{corollary}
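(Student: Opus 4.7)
\textbf{Proof Proposal for Corollary \ref{col:characterization}.}
The plan is to substitute the HOGMT expansion \eqref{eq:channel_kernel_decomp} into the definition of each statistic listed in Table~\ref{tab:cha} and reduce the resulting double sums using (i) the second-order property $\mathbb{E}\{\sigma_n \sigma_{n'}^*\}=\lambda_n\delta_{nn'}$ granted by Theorem~\ref{thm:hogmt}, and (ii) the joint orthonormality relations in \eqref{eq:properties}. Every entry in the table is then a routine consequence of one of these two mechanisms, so the work is mostly bookkeeping across the four indices $(t,f,\tau,\nu)$ and their lagged counterparts.

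First I would treat the LSF, since it is the most direct. Starting from $\mathcal{C}_H(t,f;\tau,\nu)=\mathbb{E}\{|\mathcal{H}(t,f;\tau,\nu)|^2\}$ interpreted through \eqref{eq:LSF}, I substitute \eqref{eq:channel_kernel_decomp} to get a double sum $\sum_{n,n'}\mathbb{E}\{\sigma_n\sigma_{n'}^*\}\psi_n(t,f)\psi_{n'}^*(t,f)\phi_n(\tau,\nu)\phi_{n'}^*(\tau,\nu)$, and the uncorrelatedness of $\{\sigma_n\}$ collapses it to $\sum_n \lambda_n|\psi_n(t,f)|^2|\phi_n(\tau,\nu)|^2$, matching the table (modulo the apparent swap of arguments, which is a notational identification since the LSF is defined on the full 4-D grid). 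From this single identity the next three rows follow by marginalization: integrating over $(t,f)$ and invoking the orthonormality of $\psi_n$ in \eqref{eq:properties} kills the $\psi$-factor and yields the global scattering function $\overline{C}_H(\tau,\nu)=\sum_n\lambda_n|\phi_n(\tau,\nu)|^2$; integrating over $(\tau,\nu)$ and using the orthonormality of $\phi_n$ yields the local TF path gain $\rho_H^2(t,f)=\sum_n\lambda_n|\psi_n(t,f)|^2$; and integrating over all four variables gives the total transmission gain $\mathcal{E}_H^2=\sum_n\lambda_n$.

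The CCF row is the most involved and is where the main obstacle lies. By \eqref{eq:CCF}, $\mathcal{R}$ is a 4-D Fourier dual of a correlation of $L_H$ (equivalently $R_S$). The plan is to first re-express $L_H$ and $S_H$ in terms of the HOGMT basis by inserting \eqref{eq:channel_kernel_decomp} into the atomic-channel representation \eqref{eq:atomic}--\eqref{eq:compute_H}, so that the correlation $\mathbb{E}\{L_H L_H^*\}$ becomes a double sum that collapses under $\mathbb{E}\{\sigma_n\sigma_{n'}^*\}=\lambda_n\delta_{nn'}$. The 4-D Fourier transform in \eqref{eq:CCF} then factors along the $(t,f)$ and $(\tau,\nu)$ axes because $\psi_n$ and $\phi_n$ depend on disjoint variable groups, producing a product $R_{\psi_n}(\Delta t,\Delta f)\,R_{\phi_n}(\Delta \tau,\Delta \nu)$ for each $n$, and summing in $n$ with weights $\lambda_n$ recovers the claimed expression (the absolute value arises when one reports only the magnitude of the complex CCF, consistent with the table).

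The hard part will be justifying the interchange of summation, expectation, and the four improper integrals that appear when moving between $\mathcal{H}$, $L_H$, $S_H$, and the lagged correlations — in particular, showing that the cross terms $n\ne n'$ genuinely vanish after the Fourier duality, rather than only in an $L^2$ sense on the eigen-basis. I would handle this by invoking the $L^2(X\times Y)$ hypothesis of Theorem~\ref{thm:hogmt} (which makes the expansion converge in mean-square), applying Fubini/Parseval on the joint TF/delay-Doppler product space, and only then passing expectations through to exploit the uncorrelatedness of $\{\sigma_n\}$. Once this analytic step is pinned down, the remaining entries of Table~\ref{tab:cha} follow as specializations of the LSF identity, completing the proof.
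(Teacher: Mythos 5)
Your high-level mechanism — substitute the HOGMT expansion \eqref{eq:channel_kernel_decomp} into each statistic, collapse the double sum via $\mathbb{E}\{\sigma_n\sigma_{n'}^*\}=\lambda_n\delta_{nn'}$, and then marginalize using \eqref{eq:properties} — is the same engine the paper uses, and your treatment of the last three rows (global scattering function, local TF path gain, total transmission gain as marginals of the LSF) follows the paper exactly. The difference is in the order and in the treatment of the first two rows, and that difference exposes two substantive gaps.

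First, your starting point for the LSF is not the paper's. You begin from the ansatz $\mathcal{C}_H(t,f;\tau,\nu)=\mathbb{E}\{|\mathcal{H}(t,f;\tau,\nu)|^2\}$, which yields $\sum_n\lambda_n\,|\psi_n(t,f)|^2\,|\phi_n(\tau,\nu)|^2$, i.e. with $\psi_n$ evaluated at the time--frequency pair and $\phi_n$ at the delay--Doppler pair. The table instead has $\sum_n\lambda_n\,|\psi_n(\tau,\nu)|^2\,|\phi_n(t,f)|^2$, with the argument roles exchanged. You notice this and dismiss it as ``a notational identification,'' but it is not: the paper does \emph{not} define the LSF as $\mathbb{E}\{|\mathcal{H}|^2\}$, it defines it (in \eqref{eq:LSF_eigen}) as the four-fold Fourier transform of the CCF, so the argument exchange is a genuine consequence of Fourier duality between $(\Delta t,\Delta f)\leftrightarrow(\nu,\tau)$ and $(\Delta\tau,\Delta\nu)\leftrightarrow(t,f)$. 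The paper therefore proceeds in the opposite order from you: it computes the CCF first (directly from the autocorrelation of $\mathcal{H}$ in \eqref{eq:CCF3}, where Fubini factors the four-fold integral across the disjoint $(t,f)$ and $(\tau,\nu)$ variable groups into a product $R_{\psi_n}\cdot R_{\phi_n}$), and only then obtains the LSF as its transform. Reversing the order, as you do, loses the mechanism that produces the argument swap, so your LSF identity does not actually prove the table's second row as stated.

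Second, your route to the CCF is a detour that the paper avoids entirely. You propose to re-express $L_H$ and $S_H$ in the HOGMT basis via the atomic-channel representation \eqref{eq:atomic}--\eqref{eq:compute_H} and then pass a 4-D Fourier transform through the correlation; but the paper just writes the CCF as the lagged autocorrelation of $\mathcal{H}$ itself, substitutes \eqref{eq:channel_kernel_decomp}, and collapses the double sum in one line — no appeal to $L_H$, $S_H$, or the atomic channel at all. Your version is not wrong in spirit, but it introduces exactly the convergence/interchange headaches you flag in your last paragraph, and none of them are necessary once you notice that the CCF is already a bilinear functional of $\mathcal{H}$. If you restructure your argument to compute the CCF directly from \eqref{eq:CCF3} and then define the LSF as its 4-D Fourier transform, both the argument swap and the analytic bookkeeping resolve themselves, and the remaining three rows follow by marginalization exactly as you wrote.
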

\begin{proof}

Wireless channels are fully characterized by their (second order) statistics, which we calculate using the extracted eigenvalues and 2-D eigenfunctions.
The CCF is calculated as the correlations of $\mathcal{H}(t, f ; \tau, \nu)$ and is given by,
\begin{align}
    & \abs{ \mathcal{R}(\Delta t, \Delta f;\Delta \tau, \Delta \nu) } \label{eq:CCF3}\\
    &{=}  \Bigl| {\iiiint} \mathbb{E}\{\mathcal{H}^*(t {\shortminus} {\smallDelta} t{,} f {\shortminus} {\smallDelta} f{;} \tau {\shortminus} {\smallDelta} \tau{,} \nu {\shortminus} {\smallDelta} \nu) 
    \mathcal{H}(t{,}f{;}\tau{,} \nu)\} d t d f d \tau d \nu \Bigr|  \nonumber \\
    &{=} \sum\nolimits_{n{=1}}^\infty \lambda_n \abs{R_{\psi_n}(\Delta t,\Delta f)}  |R_{\phi_n}(\Delta \tau,\Delta \nu)| \label{eq:CCF2}
\end{align}
where \eqref{eq:CCF2} is obtained by substituting \eqref{eq:channel_kernel_decomp} in \eqref{eq:CCF3}. 
$R_{\psi_n}(\Delta t,\Delta f)$ and $R_{\phi_n}(\Delta \tau,\Delta \nu)$ are the correlations of $\psi_n(t, f)$ and $\phi_n(\tau, \nu)$, respectively. 
The LSF reveals the non-stationarities (in time or frequency) in a wireless channel and is given by the 4-D Fourier transform ($\mathbb{F}^{4}$) of the CCF as, 
%
%
\begin{align}
\label{eq:LSF_eigen}
    & \mathcal{C}_H(t, f; \tau, \nu) {=} \mathbb{F}^{4}\left\{\mathcal{R}({\smallDelta} t{,} {\smallDelta} f{;}{\smallDelta} \tau{,} {\smallDelta} \nu)\right\} \nonumber\\
    & {=} \iiiint \mathcal{R}({\smallDelta} t{,} {\smallDelta} f{;}{\smallDelta} \tau{,} {\smallDelta} \nu)
    \mathrm{e}^{{\shortminus}j 2 \pi(t \Delta \nu{\shortminus}f \Delta \tau{+}\tau \Delta f{\shortminus}\nu \Delta t)} \mathrm{d} t \mathrm{d} f \mathrm{d} \tau \mathrm{d} \nu \nonumber \\ 
    & {=} \sum\nolimits_{n{=1}}^\infty \lambda_n |\psi_n( \tau,\nu)|^2 |\phi_n(t, f)|^2 
\end{align}
where 
$|\psi_n( \tau,\nu)|^2$ and $|\phi_n(t, f)|^2$ represent the spectral density of $\psi_n(t,f)$ and $\phi_n(\tau,\nu)$, respectively.
%
Then, the 
\textit{global (or average) scattering function} $\overline{C}_H(\tau, \nu)$ and (local) TF path gain $\rho_H^2 (t,f)$ \cite{Matz2005NS} are calculated in \eqref{eq:GSF} and \eqref{eq:rho},
%
\begin{align}
\label{eq:GSF}
    & \overline{C}_H(\tau, \nu) {=} \mathbb{E}\{|S_H(\tau, \nu)|^2\} = \iint \mathcal{C}_H(t, f; \tau, \nu) ~dt ~df  \\
    & \rho_H^2 (t,f) {=}  \mathbb{E}\{|L_H(t, f)|^2\} = \iint \mathcal{C}_H(t, f; \tau, \nu) ~d\tau ~dv 
\label{eq:rho}
\end{align}
\eqref{eq:GSF} and \eqref{eq:rho} are re-expressed in terms of the spectral density of eigenfunctions by using \eqref{eq:LSF_eigen} and the properties in \eqref{eq:properties},  

\begin{align}
    &\overline{C}_H(\tau, \nu) {=} \mathbb{E}\{|S_H(\tau, \nu)|^2\} {=} \sum\nolimits_{n{=1}}^\infty \lambda_n |\psi_n( \tau,\nu)|^2 \\
    &  \rho_H^2 (t,f) {=} \mathbb{E}\{|L_H(t, f)|^2\} {=} \sum\nolimits_{n{=1}}^\infty \lambda_n |\phi_n(t, f)|^2 
\end{align}

Finally, the \textit{total transmission gain} $\mathcal{E}_H^2$ is obtained by integrating the LSF out with respect to all four variables, 
\begin{align}
\label{eq:TFgain}
    & \mathcal{E}_H^2 = \iiiint \mathcal{C}_H(t, f; \tau, \nu) ~dt ~df ~d \tau ~d\nu = \sum\nolimits_{n{=}1}^ \infty \lambda_n
\end{align}
\end{proof}
Consequently, the statistics of any wireless channel can be expressed by the eigenfunctions and eigenvalues obtained by the decomposition in \eqref{eq:channel_kernel_decomp}.
Therefore, we refer to Corollary \ref{col:characterization} as a \textit{unified characterization} of wireless channels. 
Further, Corollary~\ref{col:characterization} also suggests that the non-stationary channels are completely explained/characterized by the components decomposed by \eqref{eq:channel_kernel_decomp}, thereby serving as a validation of the correctness of HOGMT.
\section{Joint Spatio-Temporal Precoding} 
\label{sec:precoding}
\subsection{Dual space-time variation of non-stationary channels}

The received signal in~\eqref{eq:r_mu_discrete} can be expressed by the channel kernel as in \eqref{eq:MU}~\cite{MATZ20111}
\begin{align}
    \label{eq:MU}
    r_u (t) & = \int \sum\nolimits_{u'}  k_{u,u'}(t,t') s_{u'}(t') dt' + v_{u}(t)
\end{align}
where $v_u(t)$ is the noise, $s_u(t)$ is the data signal and $k_{u,u'} (t, t') {=} h_{u,u'}(t, t{-}t')$ is the channel kernel.
Then, the relationship between the transmitted and received signals is obtained by rewriting \eqref{eq:MU} in its continuous form in \eqref{eq:MU2}.
\begin{align}
    \label{eq:MU2}
    & r(u,t) = \iint k_H(u,t;u',t') s(u',t')~du'~dt' + v(u,t)
\end{align}
It is clear that this joint space-time interference (from both $u'$ and $t'$) varies along the space and time dimensions (i.e., across both $u$ and $t$). 
This is referred to as the \textit{dual space-time variation property} and it indicates that precoding using arbitrary joint space-time orthogonal basis is not sufficient to ensure interference-free communication, unless these basis remain orthogonal after propagating through the channel (as shown in Lemma \ref{lemma:bestprojection}).


Let $x(u,t)$ be the precoded signal, then the corresponding received signal is $Hx(u,t)$. The aim of precoding in this work is to minimize all existing interference of the channel, i.e., to minimize the least square error, $\|s(u,t) {-} Hx(u,t)\|^2$.
\begin{lemma}
\label{lemma:bestprojection}
Given a non-stationary channel $H$ with kernel $k_H(u{,}t{;}u'{,}t')$, if each projection in $\{H \varphi_n(u{,}t)\}$ are orthogonal to each other, there exists a precoded signal 
scheme 
$x(u,t)$ that ensures 
interference-free communication at the receiver, 
\begin{align}
    & \|s(u,t) - Hx(u,t)\|^2 = 0 
\label{eq:obj}
\end{align}
where $\varphi_n(u,t)$ is the 2-D eigenfunction of $x(u,t)$, obtained by KLT decomposition as in \eqref{eq:varphi_dcomp}
\begin{equation}
\label{eq:varphi_dcomp}
    x(u,t) = \sum_{n=1}^ \infty x_n \varphi_n(u,t)
\end{equation}
where $x_n$ is a random variable with $E\{x_n x_{n'}\}{=} \lambda_n \delta_{nn'} $. 
\vspace{-5pt}
\end{lemma}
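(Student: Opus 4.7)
The plan is to construct the precoded signal $x(u,t)$ explicitly using the HOGMT decomposition of the channel kernel from Theorem \ref{thm:hogmt}, then verify that this choice attains zero squared error. The driving idea is that the dual eigenfunctions $\{\phi_n\}$ that arise on the input side of the kernel decomposition are precisely the basis for which the channel acts as a pure scaling onto the output-side basis $\{\psi_n\}$, giving independently flat subchannels.

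First, I would invoke Theorem \ref{thm:hogmt} on the 4-D kernel to write
\begin{equation}
k_H(u,t;u',t') = \sum\nolimits_{n=1}^\infty \sigma_n \psi_n(u,t)\phi_n(u',t'),
\end{equation}
where $\{\psi_n\}$ and $\{\phi_n\}$ are each jointly orthonormal 2-D families. Next, I would select the particular eigenbasis $\varphi_n(u,t) = \phi_n(u,t)$ for the precoded signal expansion in \eqref{eq:varphi_dcomp}. Applying $H$ term-by-term and using the orthonormality of $\{\phi_n\}$ gives
\begin{equation}
H\varphi_n(u,t) = \iint k_H(u,t;u',t')\phi_n(u',t')\,du'\,dt' = \sigma_n \psi_n(u,t),
\end{equation}
so the family $\{H\varphi_n\}$ inherits orthogonality from $\{\psi_n\}$, verifying that the hypothesis of the lemma is realizable by this choice.

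With this in hand, the construction of $x(u,t)$ is straightforward. Since $\{\psi_n\}$ is a jointly orthonormal basis on the receive side, I expand the desired signal as $s(u,t) = \sum_n s_n \psi_n(u,t)$ with $s_n = \langle s,\psi_n\rangle$. Setting the coefficients to $x_n = s_n/\sigma_n$ (for all $n$ with $\sigma_n \neq 0$), linearity of $H$ and the identity above yield
\begin{equation}
Hx(u,t) = \sum\nolimits_{n=1}^\infty x_n\, H\varphi_n(u,t) = \sum\nolimits_{n=1}^\infty \sigma_n x_n \psi_n(u,t) = \sum\nolimits_{n=1}^\infty s_n \psi_n(u,t) = s(u,t),
\end{equation}
and therefore $\|s(u,t)-Hx(u,t)\|^2 = 0$, establishing \eqref{eq:obj}.

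The main technical obstacle I would expect is the handling of degenerate or vanishing $\sigma_n$: if any $\sigma_n = 0$, the channel has a non-trivial null space on the output side, and representability of $s(u,t)$ in $\{\psi_n\}$ requires that $s$ lies in the range of $H$. I would address this by restricting attention to the range (or equivalently, projecting $s$ onto $\overline{\mathrm{span}}\{\psi_n:\sigma_n \neq 0\}$), which is the natural domain in which interference-free transmission is even meaningful; the orthogonality hypothesis on $\{H\varphi_n\}$ is exactly what guarantees that such coefficients $x_n$ can be read off independently without cross-coupling. A secondary detail is verifying $\ell^2$-summability of $\{x_n\}$ so that $x(u,t)$ is a bona fide $L^2$ signal, which follows from $\sum |s_n|^2/|\sigma_n|^2 < \infty$ under the standard finite-energy assumption on $s$ and a positive lower bound on the retained $\sigma_n$.
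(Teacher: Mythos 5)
Your proposal is correct in its conclusion, but it takes a genuinely different and more constructive route than the paper. The paper proves Lemma~\ref{lemma:bestprojection} as a \emph{generic} least-squares result: it keeps $\{\varphi_n\}$ abstract, writes $\epsilon(x) = \|s - \sum_n x_n c_n\|^2$ with $c_n = H\varphi_n$, expands the quadratic, sets $\partial\epsilon/\partial x_n = 0$, and observes that once the cross-terms $\langle c_n, c_{n'}\rangle$ vanish by hypothesis, the minimizer $x_n^{opt} = \langle s, c_n\rangle/\langle c_n, c_n\rangle$ drives the error to zero. Only later, in Theorem~\ref{thm:thm2}, does the paper instantiate $\varphi_n = \phi_n^*$ from the HOGMT decomposition and invoke Lemma~\ref{lemma:bestprojection} to get the final precoder. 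You instead import HOGMT directly into the proof of the lemma, choose $\varphi_n = \phi_n$ up front, deduce $H\varphi_n = \sigma_n\psi_n$, and read off $x_n = s_n/\sigma_n$. This collapses the paper's two-step modular structure (abstract optimality, then concrete instantiation) into a single constructive argument; it proves the existence claim of the lemma, but it proves less than the paper's proof does, since the paper's version holds for \emph{any} $\{\varphi_n\}$ whose images under $H$ are orthogonal, not only the HOGMT ones. On the other hand, your version is more transparent about where the flat-fading structure actually comes from. Both routes share the same hidden assumption needed for the final claim $\epsilon(x)=0$: that $s$ lies in the closed span of $\{c_n\}$ (equivalently $\{\psi_n\}$ with $\sigma_n\neq 0$), which is a Parseval-type completeness condition. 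You flag this explicitly and propose restricting to the range of $H$; the paper's ``it is straightforward to show that $\epsilon(x)=0$'' quietly absorbs it. One small notational mismatch: you set $\varphi_n = \phi_n$ whereas the paper's Theorem~\ref{thm:thm2} uses the conjugate $\phi_n^*$, consistent with the orthogonality convention $\iint \phi_n\phi_{n'}^* = \delta_{nn'}$; this does not affect the substance but should be aligned for consistency with \eqref{eq:them2_duality}.
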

\begin{proof}

$H\varphi_n(u,t) $ is the projection of $k_H(u,t;u',t')$ onto $\varphi_n(u',t')$ denoted by $ c_n(u,t)$ and is given by,
\begin{equation}
\label{eq:projections_c_n}
    c_n(u,t) =  \iint k_H(u,t;u',t') \varphi_n(u',t') ~du' ~dt'
\end{equation}

Using the above, \eqref{eq:obj} is expressed as,
\begin{align}
\label{eq:obj_trans}
     & ||s(u,t) - Hx(u,t)||^2 = ||s(u,t) - \sum_n^ \infty x_n c_n(u,t)||^2
\end{align}

Let $\epsilon (x) {=} ||s(u,t) - \sum_n^ \infty x_n \varphi_n(u,t)||^2$. Then its expansion is given by,
\begin{align}
\label{eq:ep}
     & \epsilon (x) = \langle s(u,t),s(u,t) \rangle - 2\sum_n ^ \infty  x_n \langle c_n(u,t),s(u,t) \rangle \\
     & + \sum_n^ \infty x_n^2 \langle c_n(u,t), c_n(u,t) \rangle \nonumber + \sum_n^ \infty \sum_{n' \neq n}^ \infty x_n x_{n'}  \langle c_n(u,t), c_{n'}(u,t) \rangle
\end{align}

Then the solution to achieve minimal $\epsilon(x)$ is obtained by solving for $\pdv{\epsilon(x)}{x_n} = 0$ as in \eqref{eq:solution}.
\begin{align}
\label{eq:solution}
    x_n^{opt} & {=}  \frac{\langle s(u,t), c_n(u,t) \rangle - \frac{1}{2} \sum_{n'\neq n}^ \infty x_{n'} \langle c_{n'}(u,t), c_n(u,t) \rangle }{\langle c_n(u,t), c_n(u,t) \rangle}
\end{align}
where $\langle a(u,t), b(u,t) \rangle {=} \iint a(u,t) b^*(u,t) ~du ~dt$ denotes the inner product. 
Let $\langle c_{n'}(u,t), c_n(u,t) \rangle = 0$, i.e., the projections $\{ c_n(u,t)\}_n$ are orthogonal basis. Then we have a closed form expression for $x^{opt}$ as in \eqref{eq:x_opt}.
\begin{align}
\label{eq:x_opt}
    x_n^{opt} & =  \frac{\langle s(u,t), c_n(u,t) \rangle}{\langle c_n(u,t), c_n(u,t) \rangle}
\end{align}

Substitute \eqref{eq:x_opt} in \eqref{eq:ep}, it is straightforward to show that $\epsilon(x){=} 0$.
\end{proof}



%

\subsection{Dual jointly orthogonal space-time decomposition}
Lemma~\ref{thm:thm2} formalizes the requirements for the joint space-time orthogonal basis $\{\varphi_n(u,t)\}$ to achieve a precoding scheme that ensures interference-free reception. From Theorem~\ref{thm:hogmt}, the 4-D channel kernel is decomposed as,

\begin{align}
\label{eq:thm2_decomp}
&k_H(u,t;u',t') = \sum\nolimits_{n{=1}}^\infty  \sigma_n \psi_n(u,t) \phi_n(u',t')
\end{align}
with properties as in \eqref{eq:thm2_decomp_pty},
\begin{align}
\label{eq:thm2_decomp_pty}
\begin{aligned}
&\iint \psi_n(u, t) \psi_{n'}^*(u, t) ~du ~dt {=} \delta_{nn'}  \\
&\iint \phi_n(u', t') \phi_{n'}^*(u', t') ~du' ~dt' {=} \delta_{nn'}
\end{aligned}
\end{align}

\eqref{eq:thm2_decomp} and \eqref{eq:thm2_decomp_pty} suggest that the 4-D kernel is decomposed into jointly orthogonal subchannels, $\{\psi_n(u,t)\}$ and $\{\phi_n(u',t')\}$. Moreover, combining \eqref{eq:thm2_decomp} and \eqref{eq:thm2_decomp_pty} leads to \eqref{eq:them2_duality}, which shows the \textit{duality} of the subchannels.
\begin{align}
    \iint k_H(u,t;u',t') \phi_n^*(u', t') ~du' ~dt' {=} \sigma_n \psi_n(u, t).
    \label{eq:them2_duality}
\end{align}
This duality suggests that when $\{\phi_n\}$ is transmitted through the 4-D channel, it transforms it to $\{\psi_n\}$ with random variables $\{\sigma_n\}$. Therefore, we refer to $\phi_n$ and $\psi_n$ as a pair of \textit{dual} eigenfunctions. Meanwhile, the transformation of $\phi_n$ to $\psi_n$ is scaled by $\sigma_n$ meaning that decomposed dual joint space-time orthogonal subchannels are flat-fading.


\subsection{HOGMT-based precoding}
Lemma 2 suggest precoding using $\{\varphi_n\}{=}\{\phi_n\}$ i.e., constructing $x(u,t)$ using $\{\phi_n\}$ with optimally derived coefficients $x_n$ using inverse KLT, eventually leads to interference-free communication, 
as the projections of channel kernels onto $\{\phi_n\}$ is $\{\sigma_n \psi_n\}$, which satisfies the orthogonal projection requirements in lemma~2 due to the orthogonal properties in \eqref{eq:thm2_decomp_pty}.

\noindent
\fbox{\begin{minipage}{0.97\linewidth}
\begin{theorem}
\label{thm:thm2}
(HOGMT-based precoding) Given a non-stationary channel $H$ with kernel $k_H(u,t;u',t')$, the precoded signal $x(u,t)$ that ensures interference-free communication at the receiver is constructed by inverse KLT as,
\begin{align}
\label{eq:x}
    x(u{,}t) {=} \sum_{n{=}1}^\infty x_n \phi_n^*(u{,}t),\text{where},
    x_n {=} \frac{{\langle} s(u{,}t){,} \psi_n(u{,}t) {\rangle}}{\sigma_n}
\end{align}
where 
$\{ \sigma_n \}$, $\{ \psi_n\}$ and $\{ \phi_n\} $ are obtained by decomposing the kernel $k_H(u,t;u',t')$ using Theorem~\ref{thm:hogmt} as in \eqref{eq:thm2_decomp}.
\end{theorem}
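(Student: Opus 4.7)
My plan is to reduce Theorem~\ref{thm:thm2} to a direct application of Lemma~\ref{lemma:bestprojection}, using the HOGMT decomposition of Theorem~\ref{thm:hogmt} and, crucially, the duality relation \eqref{eq:them2_duality}. The idea is to set the transmit basis to $\varphi_n(u,t) = \phi_n^*(u,t)$, verify that the resulting channel projections $c_n = H\varphi_n$ form an orthogonal family (this is exactly the hypothesis required by Lemma~\ref{lemma:bestprojection} for zero reconstruction error), and then specialize the closed-form optimal coefficient in \eqref{eq:x_opt} to recover the claimed $x_n = \langle s, \psi_n\rangle/\sigma_n$.

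First, I would compute $H\varphi_n$ explicitly. Substituting the HOGMT decomposition \eqref{eq:thm2_decomp} of $k_H$ into the projection \eqref{eq:projections_c_n} and using the orthonormality of $\{\phi_n\}$ stated in \eqref{eq:thm2_decomp_pty}, the duality relation yields
\begin{align}
c_n(u,t) \;=\; \iint k_H(u,t;u',t')\,\phi_n^*(u',t')\,du'\,dt' \;=\; \sigma_n\,\psi_n(u,t).
\end{align}
Then, by the orthonormality of $\{\psi_n\}$ from \eqref{eq:thm2_decomp_pty}, I would compute the pairwise inner products as $\langle c_n, c_{n'}\rangle = \sigma_n \sigma_{n'}^* \langle \psi_n, \psi_{n'}\rangle = |\sigma_n|^2\,\delta_{nn'}$, which confirms that the projected family is orthogonal and thus the hypothesis of Lemma~\ref{lemma:bestprojection} holds.

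Next I would plug these quantities into the closed-form optimum \eqref{eq:x_opt}. Since $\langle c_n, c_n\rangle = |\sigma_n|^2$ and $\langle s, c_n\rangle = \sigma_n^*\langle s, \psi_n\rangle$, the ratio simplifies to
\begin{align}
x_n^{opt} \;=\; \frac{\sigma_n^*\,\langle s(u,t),\psi_n(u,t)\rangle}{|\sigma_n|^2} \;=\; \frac{\langle s(u,t),\psi_n(u,t)\rangle}{\sigma_n},
\end{align}
which matches the coefficient in \eqref{eq:x}. Reassembling $x(u,t) = \sum_n x_n \phi_n^*(u,t)$ and pushing through the channel gives $Hx(u,t) = \sum_n x_n \sigma_n \psi_n(u,t) = \sum_n \langle s,\psi_n\rangle \psi_n(u,t) = s(u,t)$, where the last equality is the KLT reconstruction of $s$ in the orthonormal basis $\{\psi_n\}$; hence $\|s - Hx\|^2 = 0$.

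The main obstacle I anticipate is a careful bookkeeping of conjugations and of the two roles played by $\{\phi_n\}$ and $\{\psi_n\}$: one set is the transmit eigenbasis (precoding atoms), while its dual set is the receive eigenbasis (into which $s$ must be expanded). In particular, one has to justify that $s(u,t)$ admits a convergent expansion in $\{\psi_n\}$, which requires $\{\psi_n\}$ to be complete in the relevant $L^2$ subspace spanned by the column (output) space of $H$; and one has to handle the case where some $\sigma_n \to 0$, where the nominal precoder diverges and a water-filling–type truncation (or regularization) is needed to keep transmit power finite. Once these points are addressed, the rest of the argument is an algebraic consequence of HOGMT plus Lemma~\ref{lemma:bestprojection}.
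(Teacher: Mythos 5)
Your proof is correct and follows essentially the same route as the paper's: set the transmit atoms to $\phi_n^*$, invoke the duality relation to get $H\phi_n^* = \sigma_n\psi_n$, verify orthogonality of the projected family to satisfy the hypothesis of Lemma~\ref{lemma:bestprojection}, and specialize \eqref{eq:x_opt} to obtain $x_n = \langle s,\psi_n\rangle/\sigma_n$. Your version is slightly cleaner in tracking complex conjugation ($\langle c_n,c_n\rangle = |\sigma_n|^2$ rather than the paper's looser $\lambda_n\langle\psi_n,\psi_n\rangle$), and your closing caveats about completeness of $\{\psi_n\}$ and divergence as $\sigma_n\to 0$ are real issues the paper only touches on implicitly (eq.~\eqref{eq:s} and the empirical energy-cost discussion), so they are worth stating.
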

\end{minipage}}
\begin{proof}
The 4-D kernel $k_H(u,t;u',t')$ is decomposed into two separate sets of eigenfunction $\{\phi_n(u',t')\}$ and $\{\psi_n(u, t) \}$ using Theorem 1 as in \eqref{eq:thm2_decomp}. By transmitting the conjugate of the eigenfunctions, $\phi_n(u,t)$ through the channel $H$, we have that,  
\begin{align}
\label{eq:eigen_trans}
   H \phi_n^*(u,t) &{=} \iint k_H(u,t;u',t') \phi_n^*(u',t') ~du' ~d t' {=} \sigma_n \psi_n(u,t)
\end{align}
where $\psi_n(u,t)$ is also a 2-D eigenfunction with the orthogonal property as in \eqref{eq:thm2_decomp_pty}. From Lemma~\ref{lemma:bestprojection}, if the projection $c_n(u,t)$ in \eqref{eq:projections_c_n} is $c_n(u,t) = \sigma_n \psi(u,t)$, which satisfies the orthogonality $\langle \sigma_{n'}\psi_{n'}(u,t), \sigma_n\psi_n(u,t) \rangle = 0$, we achieve the optimal solution as in \eqref{eq:x_opt}. Therefore, let $x(u,t)$ be the linear combination of $\{\phi_n^*(u,t)\}$ with coefficients $\{x_n\}$ as in \eqref{eq:construct},

\begin{equation}
\label{eq:construct}
    x(u,t) = \sum_n^ \infty x_n \phi_n^*(u,t) 
\end{equation}

Then \eqref{eq:obj_trans} is rewritten as in \eqref{eq:obj_trans2},
\begin{align}
\label{eq:obj_trans2}
     & ||s(u,t) - Hx(u,t)||^2 = ||s(u,t) - \sum_n^ \infty x_n \sigma_n \psi(u,t)||^2
\end{align}
   
Therefore, optimal $x_n$ in \eqref{eq:x_opt} is obtained as in \eqref{eq:opt},

\begin{align}
    x_n^{opt} = \frac{\sigma_n \langle s(u,t), \psi_n(u,t) \rangle}{\lambda_n \langle \psi_n(u,t), \psi_n(u,t) \rangle} = \frac{\langle s(u,t), \psi_n(u,t) \rangle}{\sigma_n} \label{eq:opt}
\end{align}

Substituting \eqref{eq:opt} in \eqref{eq:construct}, the transmit signal is given by \eqref{eq:x_opt2},
\begin{equation}
\label{eq:x_opt2}
    x(u,t) = \sum_n^ \infty \frac{\langle s(u,t), \psi_n(u,t) \rangle}{\sigma_n} \phi_n^*(u,t). 
\end{equation}
\end{proof}

 As $\{\psi_n(u,t)\}$ are joint space-time orthogonal basis, data signal $s(u,t)$ can be expressed by 
\begin{align}
\label{eq:s}
    s(u{,}t) {=} \sum_{n{=}1}^\infty s_n \psi_n(u{,}t),\text{where},
    s_n {=} {\langle} s(u{,}t){,} \psi_n(u{,}t){\rangle}
\end{align}

Therefore, the precoding in Theorem \ref{thm:thm2} can be explained as transmitting the eigenfunctions $\{\phi_n^*(u,t)\}$ after multiplying with derived coefficients $\{x_n\}$, which will transfer to $\{s_n\}$\footnote{Although precoding involves a linear combination of $\phi_n^*(u,t)$ with $x_n$. Generally, HOGMT-precoding is a non-linear function ($\mathcal{W}(\cdot)$) with respect to the data signal $s(u,t)$, i.e., $x(u,t) {=} \mathcal{W}(k_H; s(u,t))$.}.
Then the data signal $s(u,t)$ is directly reconstructed at the receiver by the dual eigenfunctions $\{\psi_n(u,t)\}$ with transferred coefficients $\{s_n\}$ to the extent of noise $v_u(t)$ as the net effect of precoding and propagation in the channel ensures that from \eqref{eq:MU}, $r(u,t){=}Hx(u,t){+}v_u(t){\rightarrow}s(u,t){+}v_u(t){=}\hat{s}(u,t)$ using Lemma \ref{lemma:bestprojection}, 
where $\hat{s}(u,t)$ is the estimated signal. 
Therefore, the spatio-temporal decomposition of the channel in Theorem \ref{thm:hogmt} allows us to precode the signal such that all interference in the spacial domain, time domain and joint space-time domain are cancelled when transmitted through the channel, leading to a joint spatio-temporal precoding scheme. 
Further, this precoding ensures that the modulated symbol is reconstructed directly at the receiver with an estimation error that 
of $v_u(t)$ (equation (13) in section III), 
thereby completely pre-compensating the spatio-temporal fading/ interference in non-stationary channels to the level of AWGN noise.
Therefore, this precoding does not require complementary step at receiver, which vastly reducing its hardware and computational complexity compared to state-of-the-art precoding methods like Dirty Paper Coding (DPC) or linear precoding (that require a complementary decoder [45].

\subsection{Geometric interpretation of precoding}
\begin{figure}[h]
    \centering
    \includegraphics[width=0.5\linewidth]{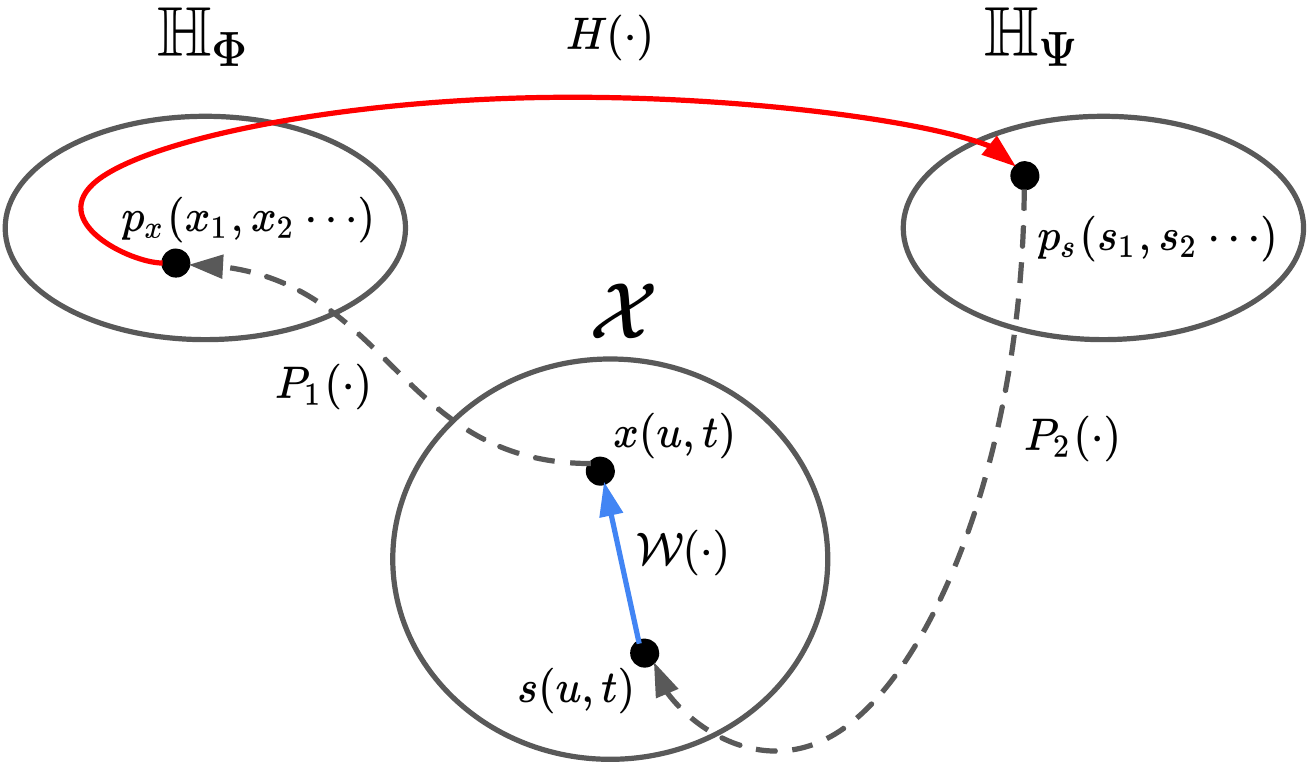}
    \caption{Geometric interpretation of HOGMT-precoding}
    \label{fig:Geometric_interpretation}
\end{figure}
Figure~\ref{fig:Geometric_interpretation} shows a geometric interpretation of HOGMT precoding without considering the noise item. Given two Hilbert Space $\mathbb{H}_\Phi$ and $\mathbb{H}_\Psi$, where basis are eigenfunctions $\{\phi_n\}$ and $\{\psi_n\}$, respectively, the precoded signal $x(u,t) {=} \mathcal{W}(k_H; s(u,t))\in \mathcal{X}$ can be seen as a point $p_x(x_1, x_2, \cdots) = P_1(x(u,t)) \in \mathbb{H}_\Phi$, where $x_n$ obtained in \eqref{eq:x} is $n$\textsuperscript{th} coordinate. Then the 4-D channel have the transform $H(\cdot): p_x(x_1, x_2, \cdots) \in \mathbb{H}_\Phi \to p_s(s_1, s_2, \cdots) \in \mathbb{H}_\Psi$, where the point $p_s(s_1, s_2, \cdots)$ represented in reality (project to space-time space $\mathcal{X}$) is directly the data signal $s(u,t) = P_2(p_s(s_1, s_2, \cdots)) \in \mathcal{X}$. The dual spatio-temporal variation of the 4-D channel not only transfer the coordinate $(x_1, x_2, \cdots)$ to $(s_1, s_2, \cdots)$, but also transfer Hilbert Space $\mathbb{H}_\Phi$ to $\mathbb{H}_\Psi$. As HOGMT extract the duality (explain $H(\cdot)$) and dual orthogonality (explain $P_1(\cdot)$ and $P_2(\cdot)$) in this dual variation, we can use inverse method to construct the precoded signal, i.e., for the target closed loop $\mathcal{W}(\cdot) + P_1(\cdot) + H(\cdot) + P_2(\cdot) = 0$, we have $\mathcal{W}(\cdot) = - P_2(\cdot) - H(\cdot) - P_1(\cdot)$, meaning, $\mathcal{W}(\cdot)$ can be obtained by the inverse process $s(u,t) \to p_s(s_1, s_2, \cdots) \to p_x(x_1, x_2, \cdots) \to x(u,t)$, which is equivalent to \eqref{eq:x}.

\section{Implementation of HOGMT Precoding}
\label{sec:implementation}

HOGMT decomposition is the most important and computational part for HOGMT precoding. Ideally, nonlinear approximation by eigenfunctions is optimal in terms of mean square errors~\cite{cohen1997nonlinear}. However, extraction of eigenfunctions is very undesirable~\cite{Liu2004Eigen}. There is no direct method to implement nonlinear HOGMT because of limitation of tools. Here we consider the linear alternatives.   
\subsection{Equivalent tensor form}

\label{sec:system}
Denote the tensor form of $H(t,\tau)$ in \eqref{eq:H_t_tau} as $\mathbf{H} \in \mathbb{C}^{L_u \times L_t \times L_{u'} \times L_{\tau}}$, where $L_u$ and $L_{u'}$ are the number of users and transmit antennas, and assume $L_u \geq L_{u'}$.
$L_t$ and $L_{\tau}$ are the number of data symbols and delay taps. Then the 4-D kernel tensor $\mathbf{K} \in \mathbb{C}^{L_u \times L_t \times L_{u'} \times L_{t'}}$ is obtained by shifting coordination of $\mathbf{H}$, where $L_{t'} = L_{t}$. HOGMT in \eqref{eq:thm2_decomp} decompose the 4-D process into 2-D eigenfunctions, meaning it decompose the 4-D tensor into 2-D jointly orthogonal blocks (eigenmatrices) as in \eqref{eq:tensor_channel_kernel_decomp}


\begin{align}
\label{eq:tensor_channel_kernel_decomp}
&\mathbf{K} = \sum_n \sigma_n \Psi_n \otimes \Phi_n
\end{align}
where $\otimes$ is Kronecker product. $\Psi_n \in \mathbb{C}^{L_u \times L_t}$ and $\Phi_n \in \mathbb{C}^{L_{u'} \times L_{t'}}$ are eigenmatrices with properties in \eqref{eq:tensor_property},
\begin{align}
\label{eq:tensor_property}
\begin{aligned}
\langle \Psi_n, \Psi_{n'}^* \rangle_F = \delta_{nn'} \quad \text{and,} \quad
\langle \Phi_n, \Phi_{n'}^* \rangle_F = \delta_{nn'}
\end{aligned}
\end{align}
which is Frobenius product form of \eqref{eq:properties}. Then the duality in \eqref{eq:them2_duality} is transfered to \eqref{eq:tensor_dual},

\begin{align}
\label{eq:tensor_dual}
\langle \mathbf{K}, \Phi_{n}^* \rangle_F = \sigma_n \Psi_n
\end{align}

The transmit space-time signal block is thus the combination of eigenmatrices with optimally derived coefficients from Theorem~\ref{thm:thm2}. Higher-order SVD (HOSVD) is one choice to decompose tensor into eigenvectors at each dimension. Then $\Psi_n$ is Kroneker product of eigenvectors at $u$ and $t$ domain, whereas $\Phi_n$ is Kroneker product of eigenvecors at $u'$ and $t'$ domain. However, the HOSVD is extremely complex for 4-D tensor, especially for non-truncated kernels \cite{badeau2008fast}. 
Considering the proposed method just require the decorrelation of space-time domain at transmit and receiver instead of each dimension, we further proposed an implementable (and low-dimensional and low-complexity) alternative as in Lemma~\ref{lemma:Dimensionality_reduction_for_HOGMT}.
\label{sec:complexity}
\begin{figure}[t]
    \centering
    \includegraphics[width=0.6\linewidth]{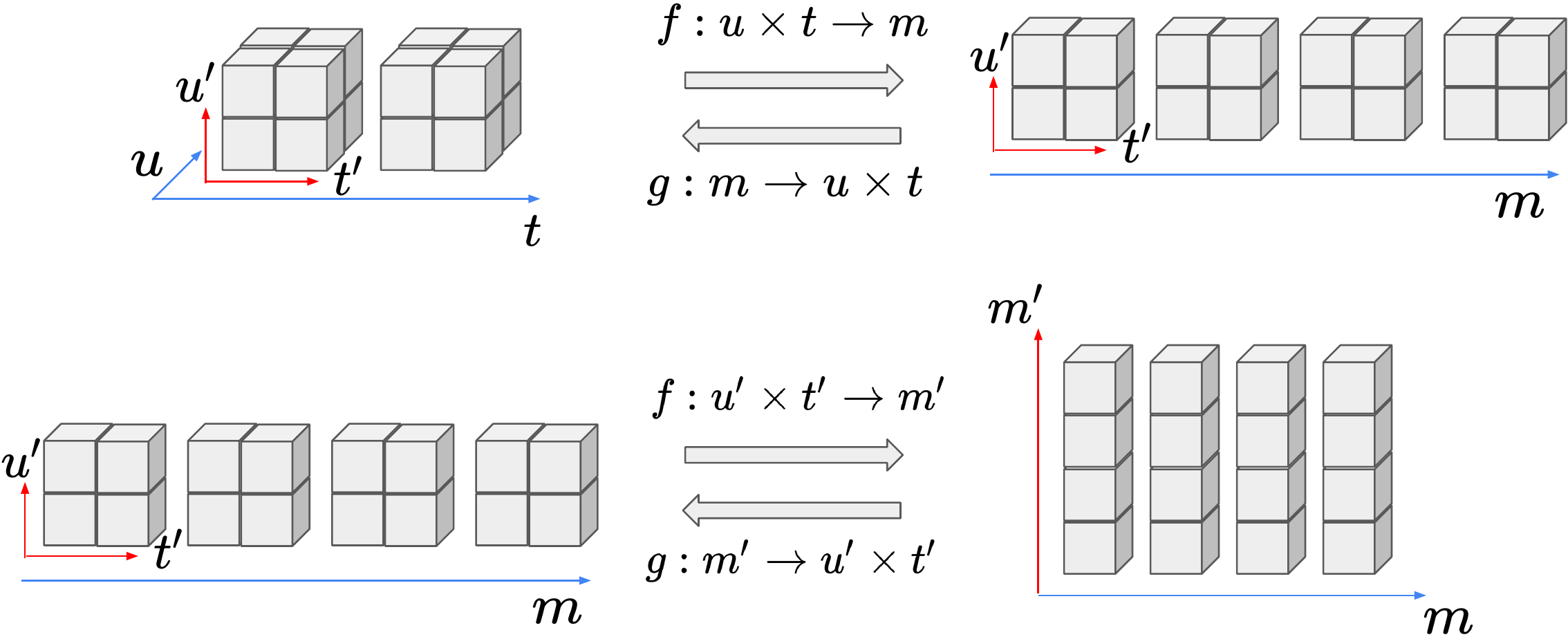}
    \caption{Illustration of dimensionality reduction}
    \label{fig:dimensionality_reduction}
\end{figure}
\begin{lemma}
\label{lemma:Dimensionality_reduction_for_HOGMT}
(Dimensionality reduction for HOGMT) Given a 4-D tensor $\mathbf{K} \in \mathbb{C}^{L_u \times L_t \times L_{u'} \times L_{t'}}$ and an invertible mapping $f{:}u{\times}t{\to}m$, let $\mathbf{K'} = f(\mathbf{K}) \in \mathbb{C}^{L_m \times L_{m'}}$. We have 
\begin{align}
    \mathbf{K'} = \mathbf{U\Sigma V^*} 
    = \sum_n z_n \mathbf{u}_n \otimes \mathbf{v^*}_n
\label{eq:2d_hosvd}
\end{align}
where $z_n$ is the singular value. $\mathbf{u}_n \in \mathbb{C}^{L_m \times 1}$ and $\mathbf{v}_n \in \mathbb{C}^{L_{m'} \times 1}$ are eigenvectors. 
For \eqref{eq:tensor_channel_kernel_decomp}, there exists the equivalent 
\begin{align}
    &\sigma_n {=} z_n, \quad \Psi_n {=} g(\mathbf{u}_n) \quad \text{and} \quad \Phi_n {=} g(\mathbf{v}_n^*)
\label{eq:equivalent}
\end{align}
where $g{:}m \to u{\times}t$ is the inverse mapping of $f$.
\end{lemma}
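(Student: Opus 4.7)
The plan is to verify the equivalence \eqref{eq:equivalent} by tracking the reshape $f$ through the HOGMT expansion \eqref{eq:tensor_channel_kernel_decomp} and then invoking uniqueness of the matrix SVD. The argument is essentially a bookkeeping exercise: the 4-D Kronecker expansion becomes an outer-product (rank-one dyadic) expansion after matricization, and the jointly-orthonormal Frobenius property on the eigenmatrices becomes standard Euclidean orthonormality of the resulting columns, at which point \eqref{eq:2d_hosvd} is literally an SVD of the reshaped kernel.

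First I would unpack the tensor Kronecker product componentwise. Writing the $n$-th summand of \eqref{eq:tensor_channel_kernel_decomp} at entry $(u,t,u',t')$ as $\sigma_n \Psi_n[u,t]\,\Phi_n[u',t']$ and applying the invertible index reshape $f:(u,t)\mapsto m$ on one side (and its analog on $(u',t')\mapsto m'$ on the other), the identity $\mathbf{K'} = f(\mathbf{K})$ yields
\begin{align*}
\mathbf{K'}[m,m'] \;=\; \sum_n \sigma_n \, \bigl(f(\Psi_n)\bigr)[m] \, \bigl(f(\Phi_n)\bigr)[m'].
\end{align*}
Defining $\tilde{\mathbf{u}}_n := f(\Psi_n) \in \mathbb{C}^{L_m}$ and $\tilde{\mathbf{v}}_n := f(\Phi_n) \in \mathbb{C}^{L_{m'}}$, this reads $\mathbf{K'} = \sum_n \sigma_n \, \tilde{\mathbf{u}}_n \tilde{\mathbf{v}}_n^T$, a dyadic expansion of the matrix $\mathbf{K'}$. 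Second, I would transfer the Frobenius orthonormality \eqref{eq:tensor_property} through vectorization: the entrywise sum defining $\langle \Psi_n, \Psi_{n'}^*\rangle_F$ coincides with $\tilde{\mathbf{u}}_n^T \tilde{\mathbf{u}}_{n'}^*$ (and likewise for $\tilde{\mathbf{v}}_n$), so the sets $\{\tilde{\mathbf{u}}_n\}$ and $\{\tilde{\mathbf{v}}_n\}$ inherit orthonormality. Assembling them as columns of $\tilde{\mathbf{U}}$ and $\tilde{\mathbf{V}}$ gives $\mathbf{K'} = \tilde{\mathbf{U}}\,\mathrm{diag}(\sigma_n)\,\tilde{\mathbf{V}}^T$, which, after aligning conjugation with the canonical SVD convention $\mathbf{K'} = \mathbf{U}\Sigma\mathbf{V}^*$, is already an SVD of $\mathbf{K'}$.

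Finally I would invoke uniqueness of the SVD (up to sign/phase choices in one-dimensional singular subspaces and unitary reparameterization within repeated-singular-value subspaces) to identify $\sigma_n = z_n$, $\tilde{\mathbf{u}}_n = \mathbf{u}_n$ and $\tilde{\mathbf{v}}_n = \mathbf{v}_n^*$; applying the inverse reshape $g$ then recovers $\Psi_n = g(\mathbf{u}_n)$ and $\Phi_n = g(\mathbf{v}_n^*)$, yielding \eqref{eq:equivalent}. The main obstacle will be tracking the conjugation carefully: the outer-product form produced by the reshape contains no conjugation on $\tilde{\mathbf{v}}_n$, whereas the SVD convention writes $\mathbf{V}^*$, and this asymmetry is precisely what forces the identification $\Phi_n = g(\mathbf{v}_n^*)$ rather than $g(\mathbf{v}_n)$. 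A secondary subtlety concerns repeated singular values, where equality must be interpreted at the level of singular subspaces rather than individual vectors; however, since HOGMT only requires some orthonormal pair of sets satisfying \eqref{eq:tensor_channel_kernel_decomp}--\eqref{eq:tensor_property}, any admissible choice of bases within each degenerate subspace suffices, and the equivalence still holds.
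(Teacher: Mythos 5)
Your proof is correct and rests on the same key observation as the paper's: the reshape commutes with the product structure, carrying Kronecker expansions of the 4-D tensor to dyadic (rank-one) expansions of the matrix and Frobenius orthonormality to Euclidean orthonormality of the reshaped columns. The difference is direction: the paper applies the inverse reshape $g$ to the SVD of $\mathbf{K'}$, obtaining a decomposition of $\mathbf{K}$ in Kronecker form, and then reads off \eqref{eq:equivalent} by matching terms against \eqref{eq:tensor_channel_kernel_decomp} (implicitly using uniqueness of that expansion); you instead apply $f$ to the HOGMT expansion of $\mathbf{K}$, verify that the reshaped factor sets are orthonormal, recognize the result as an SVD of $\mathbf{K'}$, and invoke SVD uniqueness to make the identification. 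The two arguments are logically equivalent, but yours is more complete: it makes explicit the orthonormality transfer and the appeal to uniqueness that the paper's two-line proof leaves unstated, and it correctly isolates the two genuine subtleties — tracking the conjugation so that the identification reads $\Phi_n = g(\mathbf{v}_n^*)$ rather than $g(\mathbf{v}_n)$, and the degeneracy of the SVD on repeated singular values, which only determines singular subspaces rather than individual vectors. Neither caveat affects the conclusion, as you note, since any admissible orthonormal basis within a degenerate subspace yields a valid HOGMT decomposition.
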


\begin{proof}
\begin{align}
& \mathbf{K} {=} g(\mathbf{K'}) = g(\mathbf{U\Sigma V^*}) = g(\sum_n z_n \mathbf{u}_n \otimes \mathbf{v^*}_n) \nonumber \\ 
& {=} \sum_n z_n g(\mathbf{u}_n) \otimes g(\mathbf{v^*}_n) \label{eq:maping_proof}
\end{align}
Substituting \eqref{eq:maping_proof} in \eqref{eq:tensor_channel_kernel_decomp}, we directly have the equivalent \eqref{eq:equivalent}.
\end{proof}

Figure~\ref{fig:dimensionality_reduction} shows the transition of a $2\times 2 \times 2 \times 2$ tensor $\mathbf{K}$ to a $4 \times 4$ matrix $\mathbf{K'}$. The linear mapping $f$ and $g$ are straightforward. 

\subsection{Computational complexity}

DPC incurs a much higher runtime complexity (Factorial complexity \cite{Mao2020DPC}) compared to HOGMT-precoding (polynomial complexity). Assuming $L_u \geq L_{u'}$, the complexity of HOGMT and DPC are given by Table~\ref{tab:Complexity},
\begin{table}[h]
\caption{Complexity Comparison}
\renewcommand*{\arraystretch}{1.2}
\centering
\begin{tabular}{|c|c|}
\midrule
\textbf{Strategy} & \textbf{Computational Complexity} \\\midrule
HOGMT with Lemma \ref{lemma:Dimensionality_reduction_for_HOGMT} & $\mathcal{O} (L_{u} L_{u'}^2 L_{t}^3)$      \\\midrule
HOGMT with HOSVD &  $\mathcal{O} ((\frac{L_{u}+L_{u'}+2L_{t}}{4})^5 + L_{u} L_{u'}L_{t}^2)$ \\\midrule
DPC& $\mathcal{O} (L_{t}((L_{u} L_{u'})^{3.5} + L_{u} L_{u'}^2) L_{u'}!)$  \\ \midrule
\end{tabular}
\label{tab:Complexity}
\end{table}

\begin{figure}[t]
\centering
\begin{subfigure}{.3\textwidth}
  \centering
  \includegraphics[width=1\linewidth]{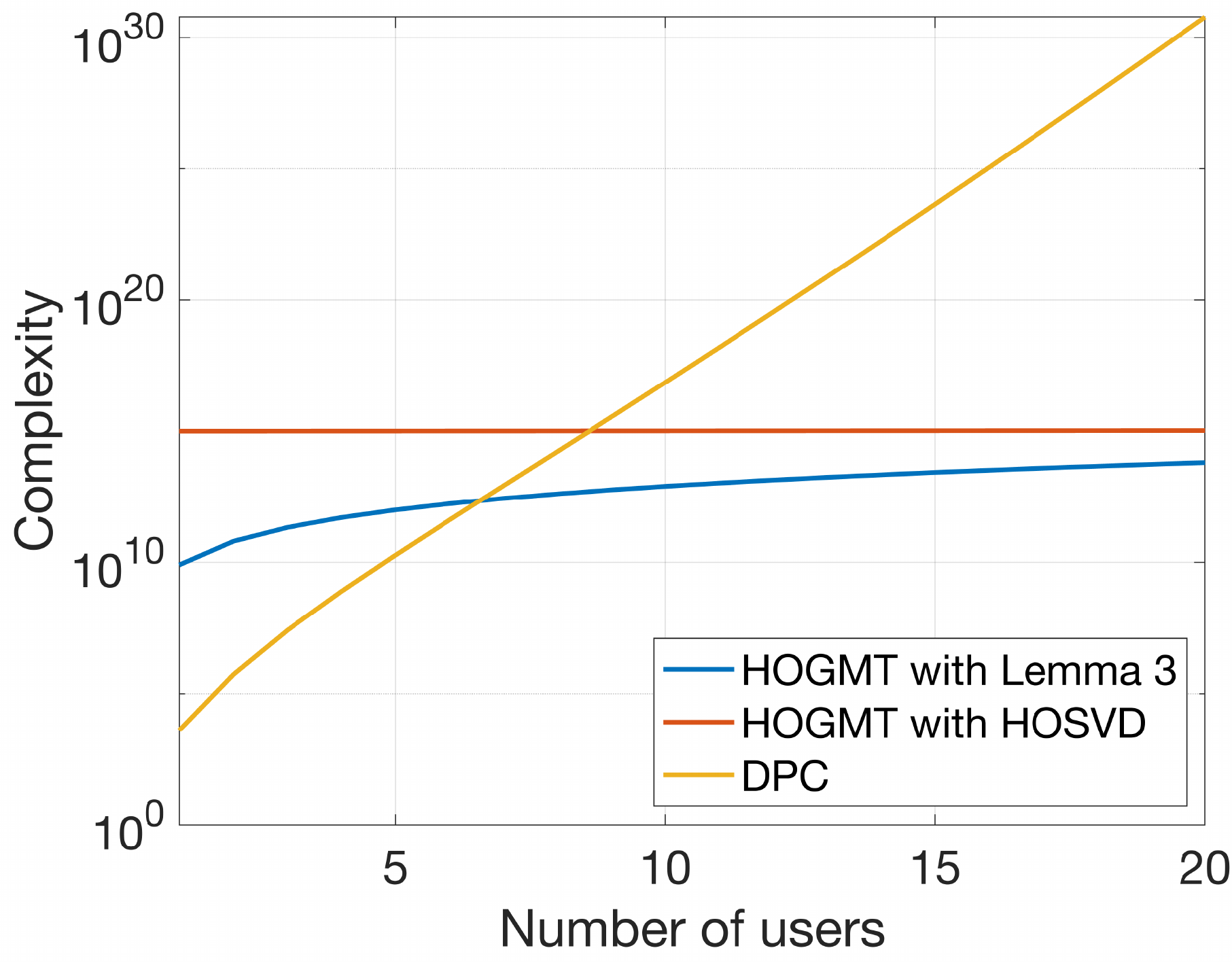}
  \caption{Complexity for $L_t = 2000$}
  \label{fig:complexity_1}
\end{subfigure}
\qquad
\begin{subfigure}{.3\textwidth}
  \centering
  \includegraphics[width=1\linewidth]{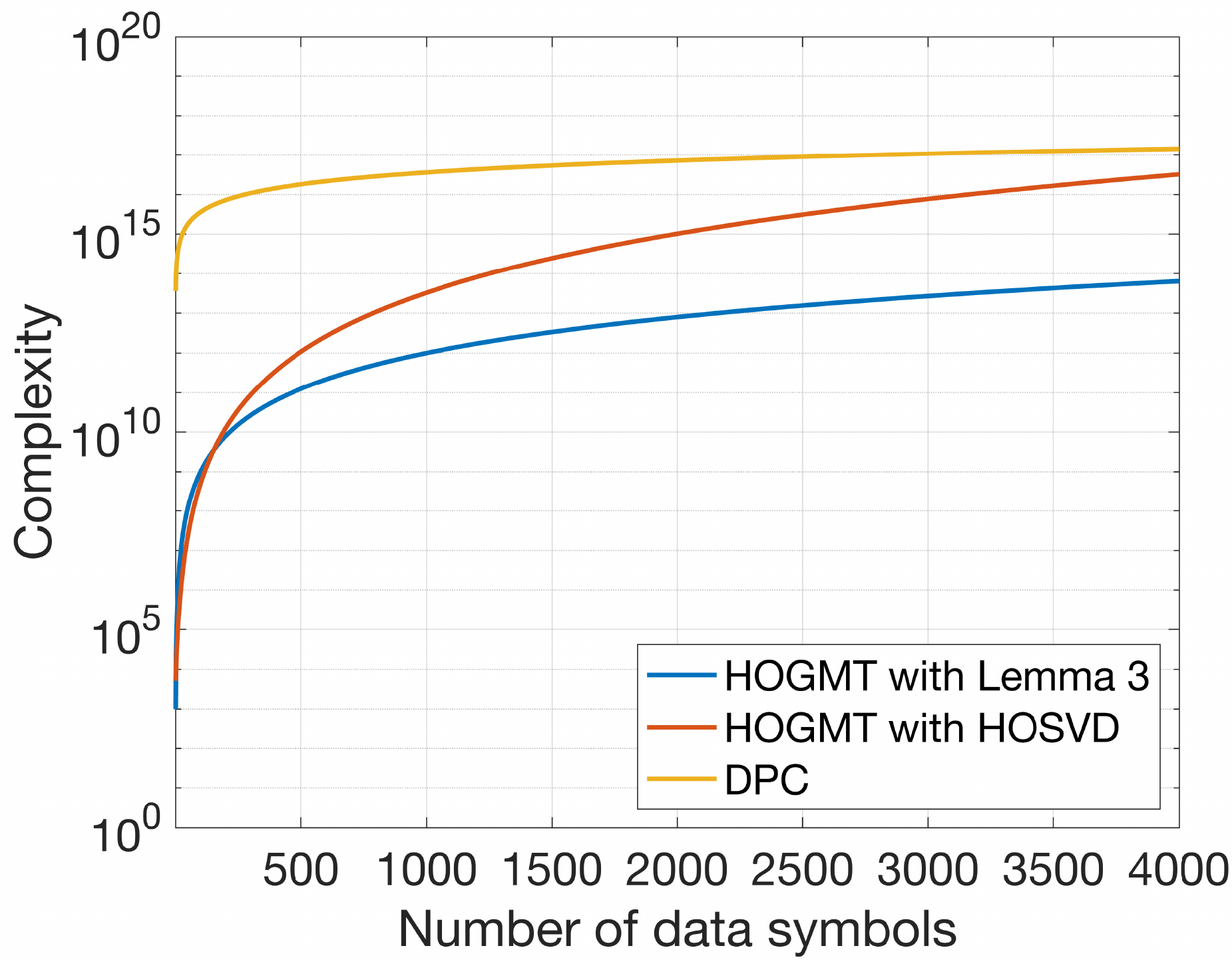}
  \caption{Complexity for $L_u = 10$}
  \label{fig:complexity_2}
\end{subfigure}
  \caption{Complexity comparison for HOGMT with two implementation methods and DPC }
  \label{fig:complexity}
\end{figure}

Figure~\ref{fig:complexity_1} compare the complexity of DPC and two implementations of HOGMT with $2000$ data symbols with respect to the number of users. When users are more than $8$, DPC is much more complex than HOGMT. HOGMT with Lemma~\ref{lemma:Dimensionality_reduction_for_HOGMT} has less complexity than HOGMT with HOSVD, though gap narrows as $L_u$ approaches $L_t$. However, the gap widens in Figure~\ref{fig:complexity_2}, as $L_t$ is further larger than $L_u$. 


\section{Evaluation Results}

\subsection{Practical non-stationary channel simulation framework}
We analyze the accuracy of the proposed joint spatio-temporal precoding using 3GPP 38.901 UMa NLOS senario built on QuaDriga in Matlab. The channel parameters and the layout of the base station (BS) and the user equipment (UE) are shown in Table~\ref{tab:parameters}.
The QuaDriga channel toolbox has been shown to accurately reflect realistic modern channels (\eg V2X, HST) using practical measurements in \cite{QuaDriGa}. Moreover, this simulated testbed gives the freedom to address a variety of adverse and different scenarios of non-stationary channels that may not be observed without extensive measurement campaigns.
\begin{table}[h]
\caption{Non-stationary channel parameters}
\renewcommand*{\arraystretch}{1.2}
\centering
\begin{tabular}{l}
\midrule
Senario: 3GPP 38.901 UMa NLOS~\cite{3gpp.38.901}; Bandwidth: $20$ Mhz; Center frequency: $5$ Ghz; Subcarriers : $64$ \\\midrule
BS layout: Array type:3GPP 3-D~\cite{3gpp.37.885}; Antenna height: 10 m; Antenna number: 10 \\\midrule
UE layout: Array type:vehicular~\cite{3gpp.36.873}; Antenna height: 1.5 m; UE number: 10; Speed: $120 \pm 18$ km/h \\\midrule
\end{tabular}
\label{tab:parameters}
\end{table}


\begin{figure*}[t]
\centering
\begin{subfigure}{.27\textwidth}
  \centering
  \includegraphics[width=1\linewidth]{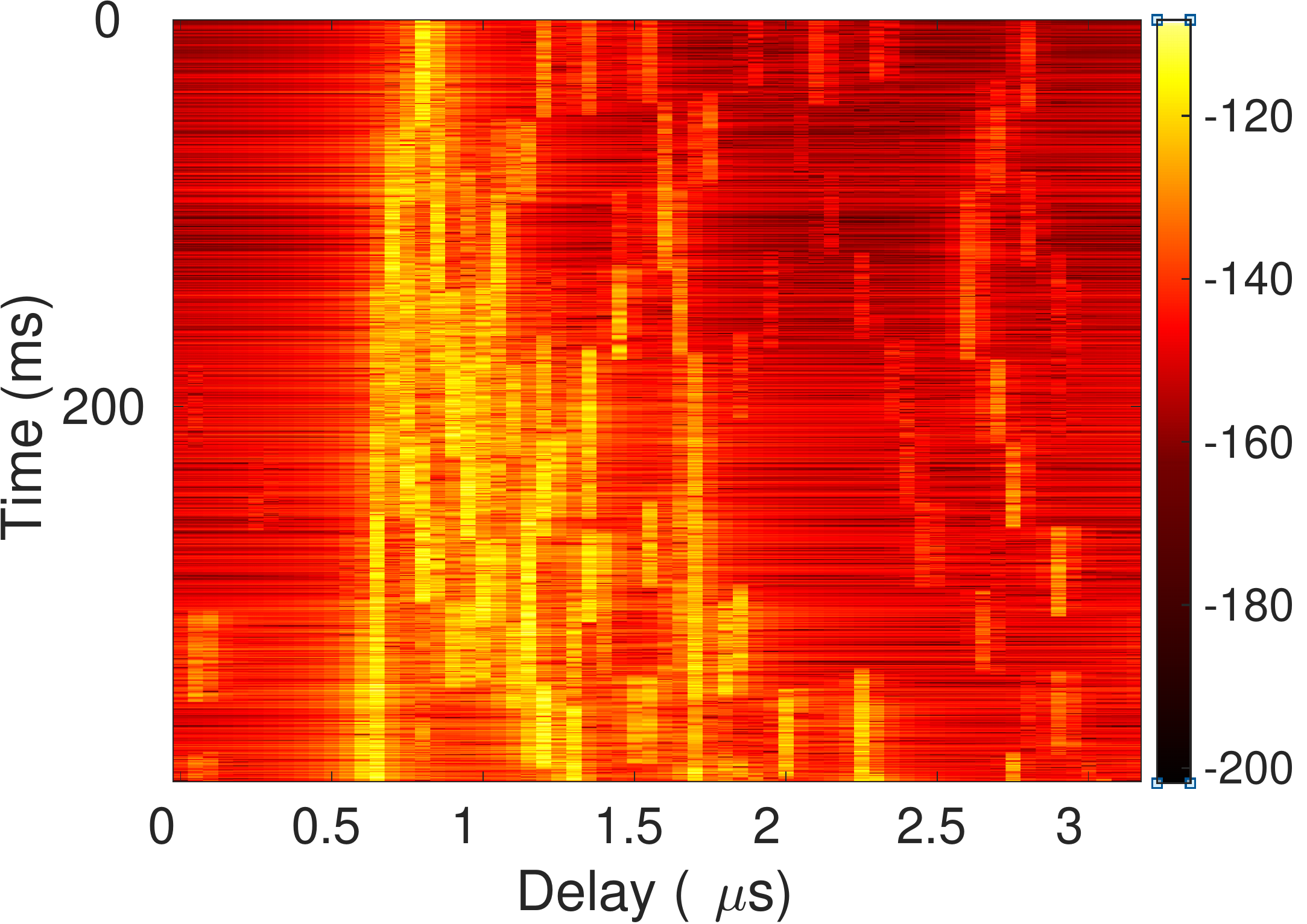}
  \caption{Power delay profile in dB}
  \label{fig:pdp}
\end{subfigure}
\qquad
\begin{subfigure}{.265\textwidth}
  \centering
  \includegraphics[width=1\linewidth]{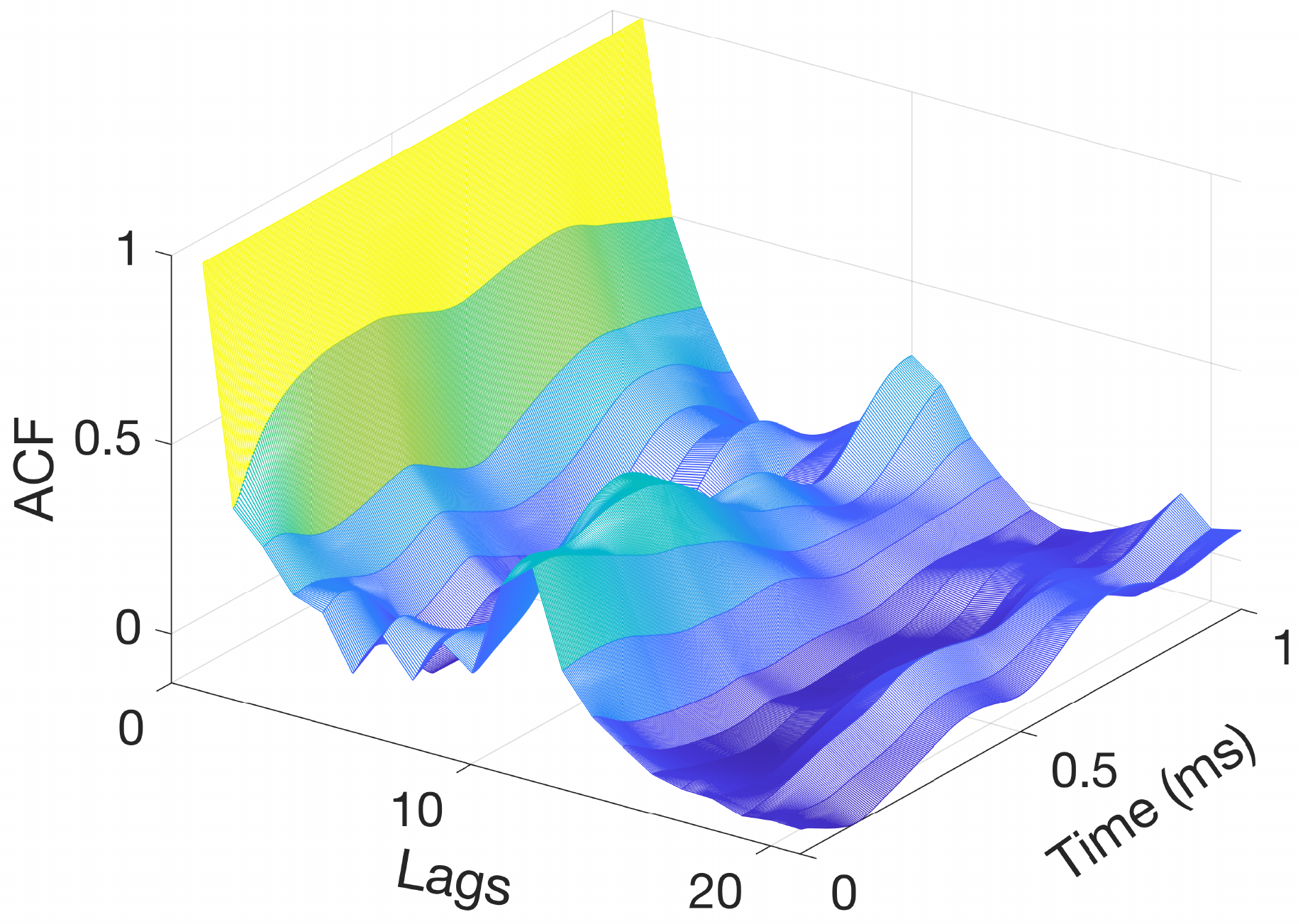}
  \caption{ACF for the first $1$ ms}
  \label{fig:Acf}
\end{subfigure}
\qquad
\begin{subfigure}{.26\textwidth}
  \centering
  \includegraphics[width=1\linewidth]{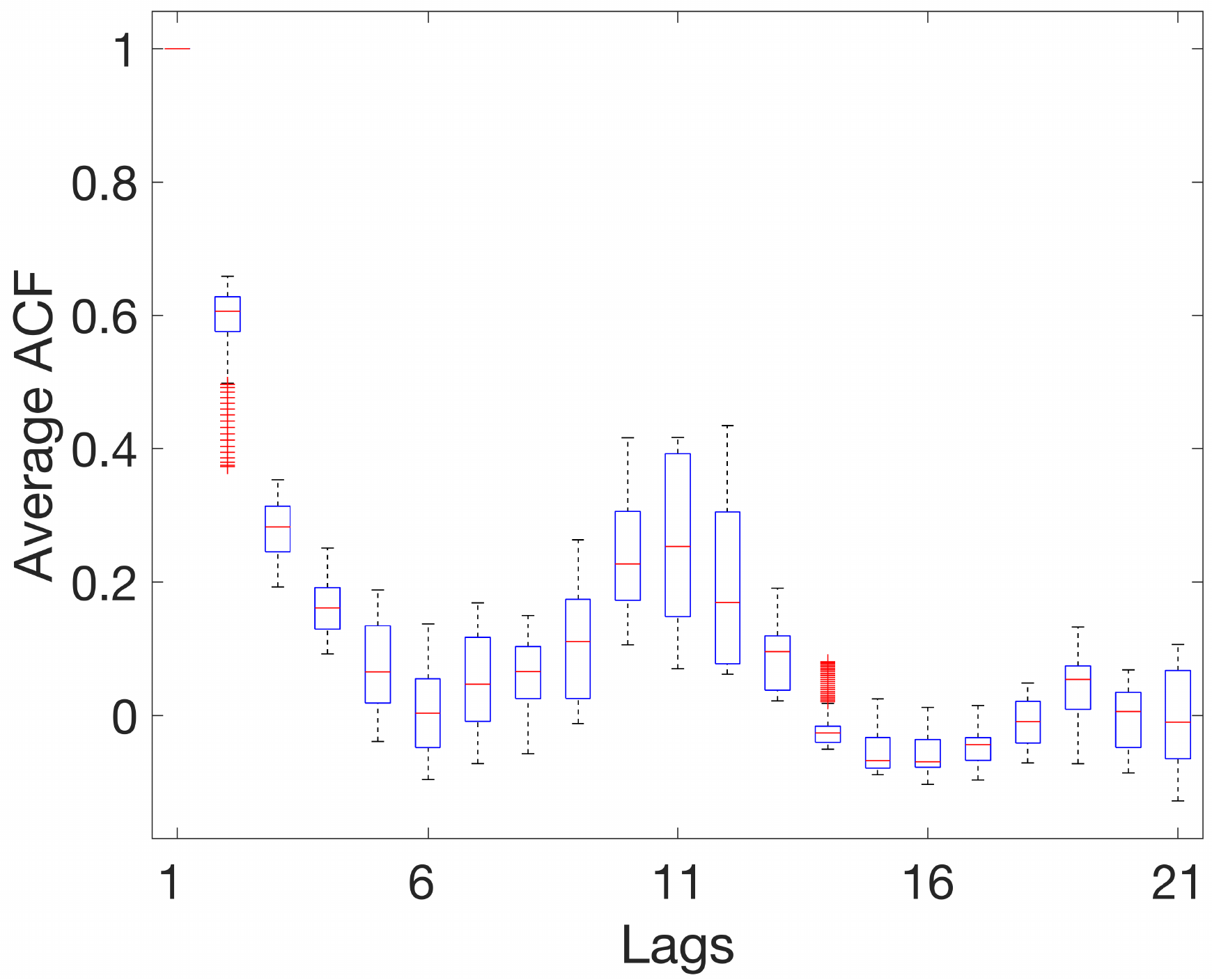}
  \caption{Temporal variation of ACF for the first $1$ ms}
  \label{fig:boxplot}
\end{subfigure}
  \caption{Power delay profile and statistics distribution for $h_{u,u'}(t,\tau)$ with $u = 1$ and $u' = 1$}
\label{fig:temporal_interference}
\end{figure*}

Figure~\ref{fig:pdp} shows the power delay profile from the antenna $u'=1$ to the user $u=1$ in terms of the 4-D channel i.e., $h_{u,u'}(t,\tau)$ , where there exist a drift due to the mobility of the user, leading to the time-varying distribution in overview. We further measure autocorrelation function (ACF) of time-varying impulse response $h_{1,1}(t,\tau)$ in Figure~\ref{fig:Acf} for the first $1$ ms. ACF changing over time corroborates its non-stationarity, even within $1$ ms. Figure~\ref{fig:boxplot} shows the distribution of statistics of ACF (mean and variance), meaning the non-stationarity degree to some extent.

Figure~\ref{fig:hs1} and figure~\ref{fig:hs2} show the spatial channel gains for two fixed time instance (time instances $t{=}1$ ms and $t{=}2$ ms both with $\tau = 0$), respectively. The channel gains from other users leads to the spatial (inter-user) interference. We observe that the spatial interference also changes over time. Figure~\ref{fig:temporal_interference} and figure~\ref{fig:spatial_profile} separately show the cause of time-varying temporal interference and spatial interference for $h_{u,u'}(t,\tau)$. 

\noindent
\textbf{Stationary interval:} The \textit{Correlation Matrix Distance (CMD)} is a measure for the degree of stationarity of narrowband MU/MIMO channels \cite{Renaudin2010NonStationaryNM} and is defined in \eqref{eq:CMD}.
\begin{equation}
    \label{eq:CMD}
    d_{corr}(t,\Delta t) = 1 - \frac{\langle \mathbf{R}(t),\mathbf{R}(t+\Delta t)\rangle_F} {||\mathbf{R}(t)||_F ||\mathbf{R}(t+\Delta t) ||_F}
\end{equation}
where $||\cdot||_F$ is the Frobenius norm and $\mathbf{R}$ is the correlation matrix. 
In our work, the CMD is calculated at both the transmitter (Tx) and Receiver (Rx) sides, and the corresponding correlation matrices 
for the narrowband channel $\Tilde{\mathbf{H}}(t) = \int \mathbf{H}(t,\tau) d \tau$ over period $T$ are given by \eqref{eq:corr_tx_rx}.
\begin{align}
\label{eq:corr_tx_rx}
    &\mathbf{R_{Tx}}(t) {=} \frac{1}{T}\int_{t}^{t+T} \Tilde{\mathbf{H}}(t)^T \Tilde{\mathbf{H}}(t)^* d t  \\ 
    &\mathbf{R_{Rx}}(t) {=} \frac{1}{T}\int_{t}^{t+T} \Tilde{\mathbf{H}}(t) \Tilde{\mathbf{H}}(t)^H d t \nonumber
\end{align}
Consequently, the time-varying stationary interval is defined as the largest duration over which CMD remains below a predefined threshold $d_0$, i.e., $\mathcal{T}(t) = |\Delta t_{max}(t) - \Delta t_{min}(t)|$ where,
\begin{align}
    &\Delta t_{max}(t) {=} \argmax_{\Delta t \leq 0} d_{corr}(t,\Delta t) \geq d_0 \nonumber\\
     &\Delta t_{min}(t) {=} \argmin_{\Delta t \geq 0} d_{corr}(t,\Delta t) \geq d_0 \nonumber
\end{align}

Figure \ref{fig:CMD_TX} and figure \ref{fig:CMD_RX} show the CMD at the Tx and Rx, respectively. We observe that, the stationary interval for $d_0 = 0.2$ and $d_0 = 0.3$ (in table \ref{tab:ns}, this threshold for V2X is 0.2, and for HST is 0.7 - 0.9.) are about $400$ $\mu$s and $500$ $\mu$s, respectively.
Here, the stationarity interval is presented in time instead of distance, as the varying mobility profiles of the multiple users lead to different distances over which stationarity holds. We observe a lower SI compared to that reported in table \ref{tab:ns}, due to the following reasons: a) we consider the multi-user channel, where the varying mobility profiles of spatial elements causes rapid changes in the spatial correlations, b) the rich and dynamic propagation environment in the 3GPP UMa NLOS channel model results in adverse scattering in the simulation environment compared to those observed in the reported experiments in table \ref{tab:ns}. However, as the precoding is able to achieve low BER for smaller SI (higher degree of NS), it would also ensure low BER for larger SI observed in the reported measurement campaigns.

\subsection{Performance of the proposed precoding}
\begin{figure*}[t]
\centering
\begin{subfigure}{.24\textwidth}
  \centering
  \includegraphics[width=1\linewidth]{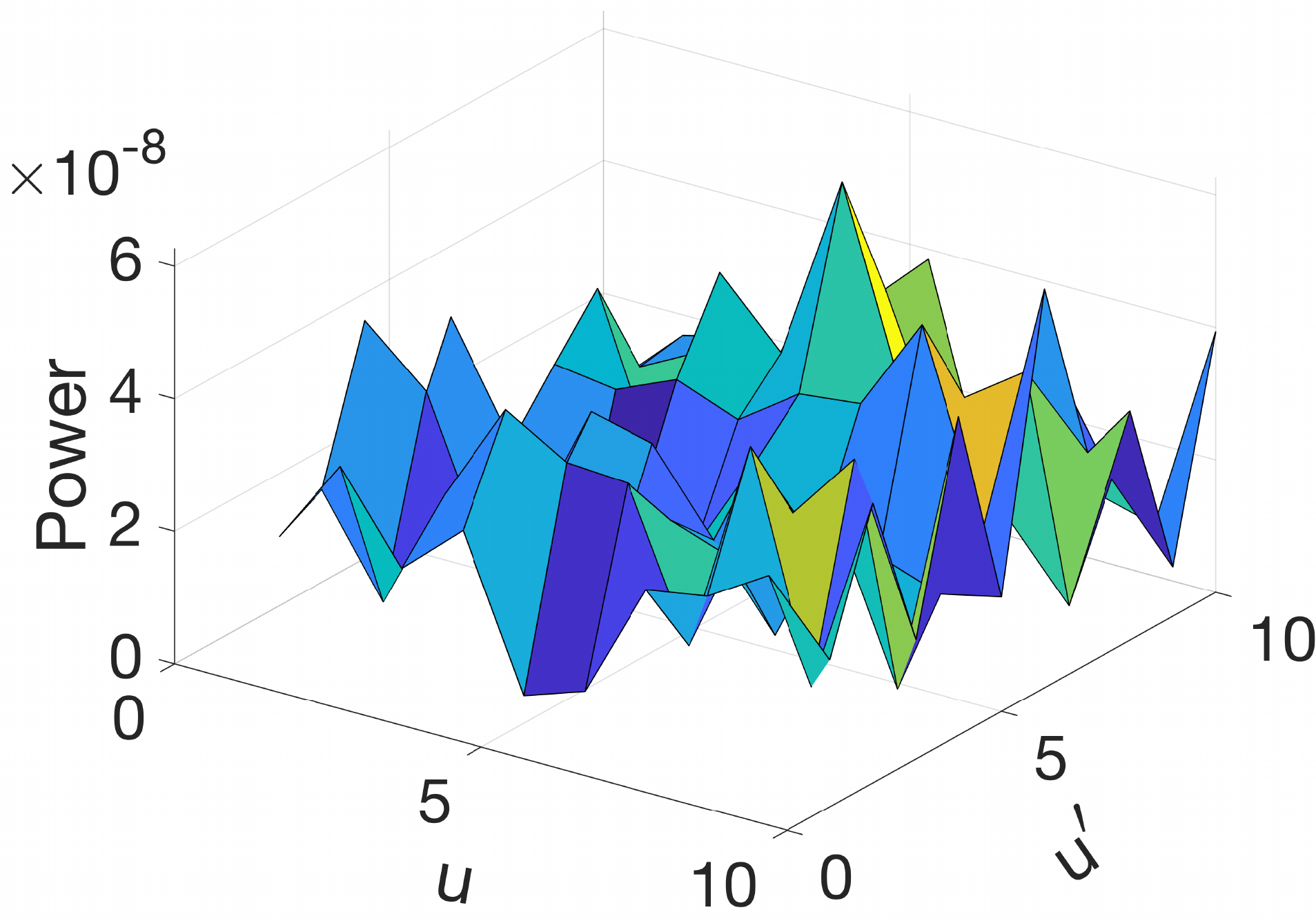}
  \caption{Spatial gain at $t = 1$ ms
  } 
  \label{fig:hs1}
\end{subfigure}
\begin{subfigure}{.24\textwidth}
  \centering
  \includegraphics[width=1\linewidth]{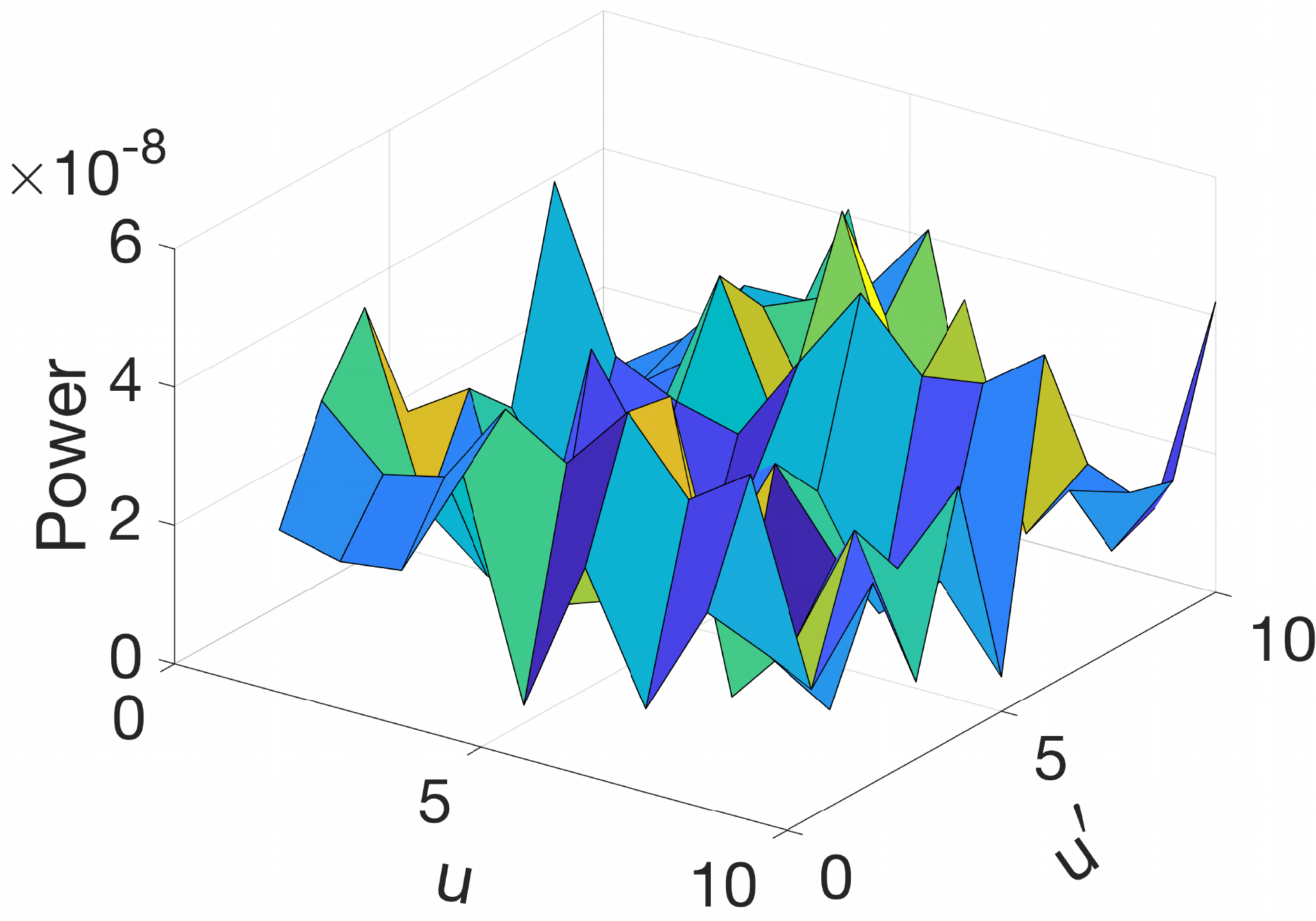}
  \caption{Spatial gain at $t = 2$ ms}
  \label{fig:hs2}
\end{subfigure}
\begin{subfigure}{.24\textwidth}
  \centering
\includegraphics[width=1\linewidth]{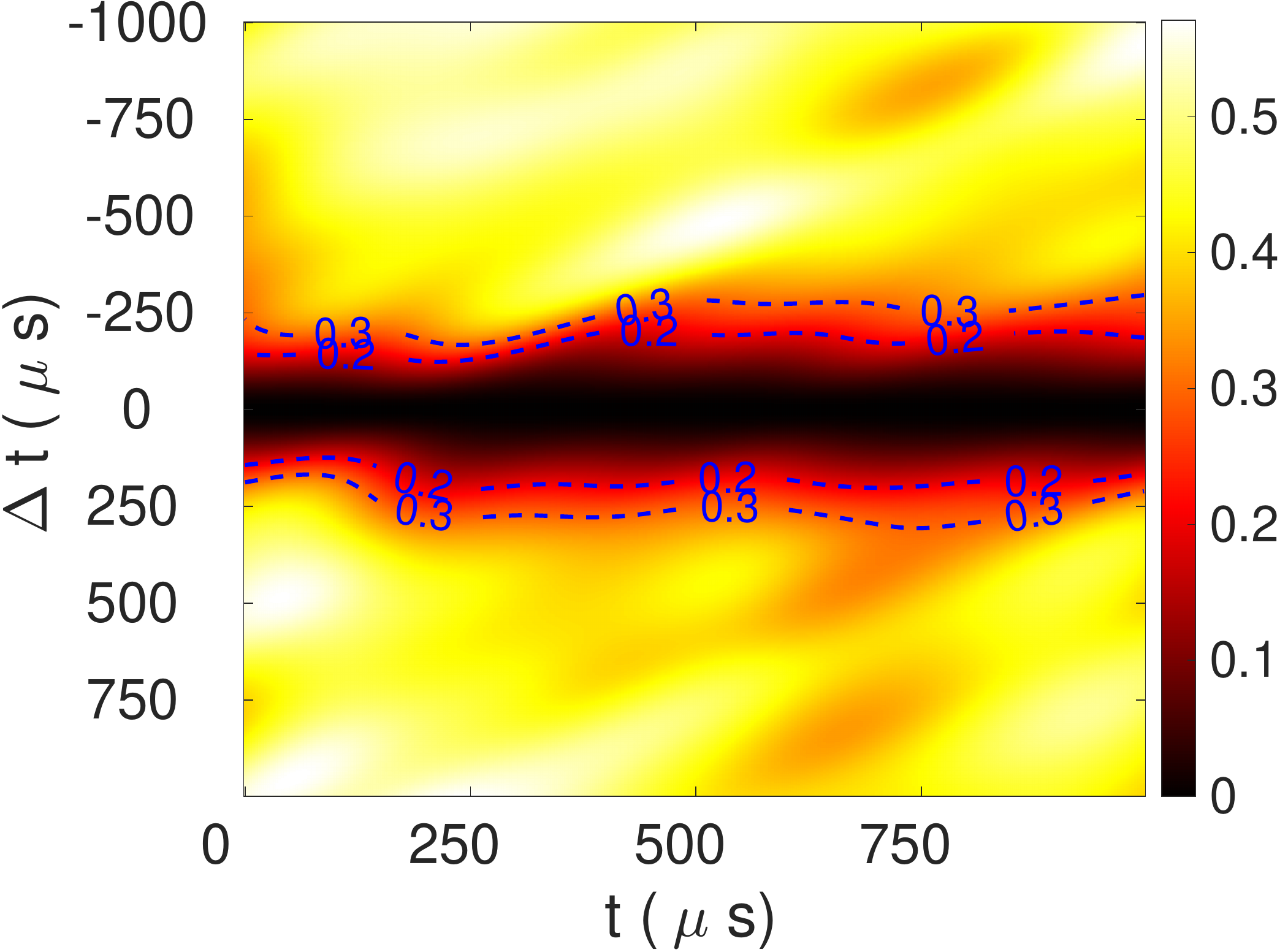}
  \caption{CMD at Tx }
  \label{fig:CMD_TX}
\end{subfigure}
\begin{subfigure}{.24\textwidth}
  \centering
  \includegraphics[width=1\linewidth]{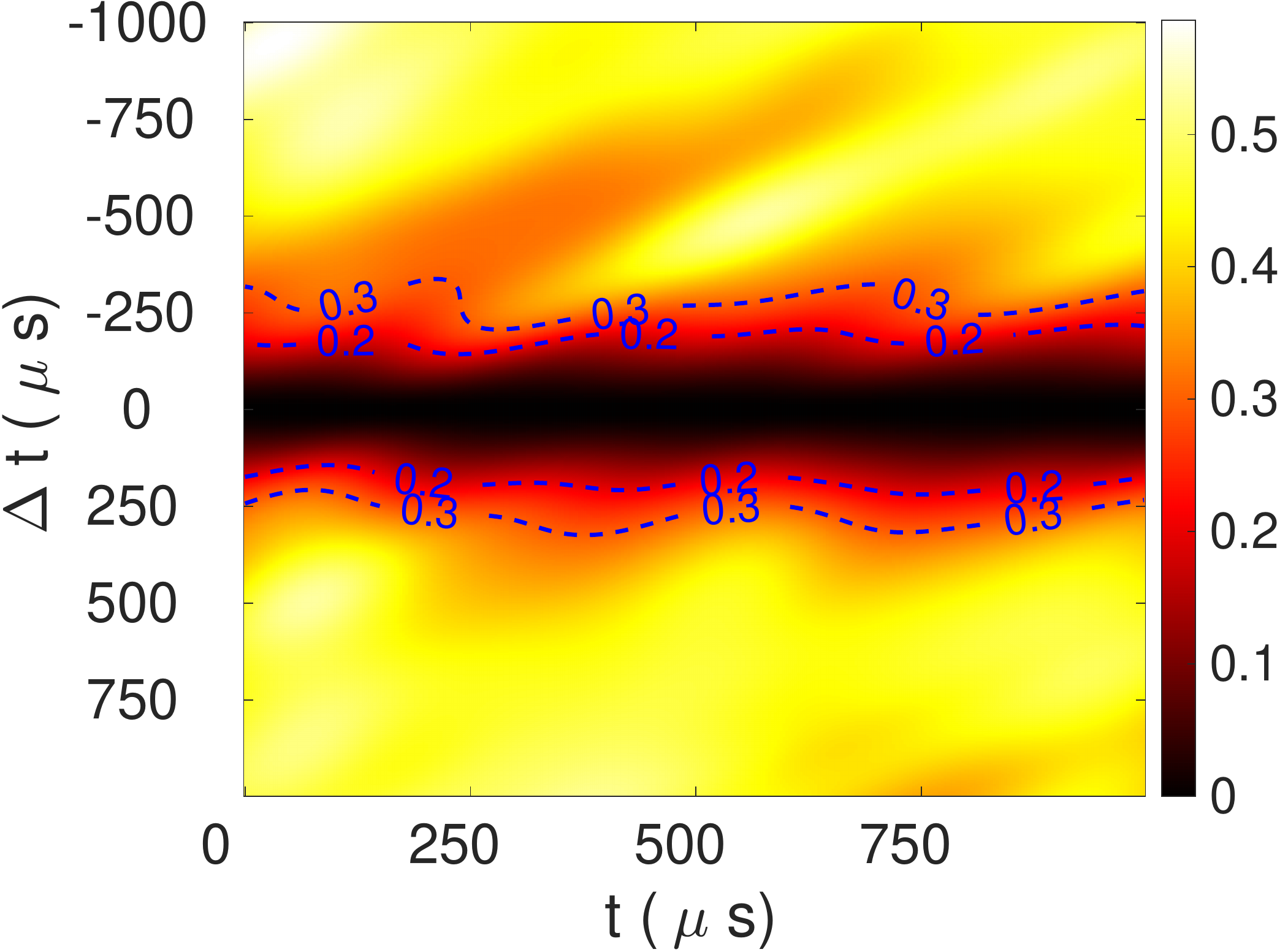}
  \caption{CMD at Rx}
  \label{fig:CMD_RX}
\end{subfigure}
\caption{Time-varying spatial channel gains and stationary intervals}
\label{fig:spatial_profile}
\end{figure*}


The proposed joint spatio-temporal precoding involves extracting the 4-D channel kernel $k_H(u,t;u',t') = h_{u,u'}(t,t-t')$. Figure~\ref{fig:4-D kernel} shows the 4-D channel kernel for $u = 1$ and $u = 2$ at $t = 1000$ $\mu$s and $t=2000$ $\mu$s , respectively, where at each instance, the response for user $u=1$ and $u=2$ are not only affected by their own delay and other user's spatial interference, but also affected by other users' delayed symbols, which leads to space-time varying joint space-time interference. This dual spatio-temporal variation necessitates joint spatio-temporal precoding using dual 2-dimensional eigenfunctions, which are dual joint space-time orthogonal. 

Figure \ref{fig:Eigenfunctions} shows two pairs of dual spatio-temporal eigenfunctions $(\phi_n(u',t'),\psi_n(u,t))$ (absolute values) 
obtained by decomposing $k_H(u,t;u',t')$ in \eqref{eq:thm2_decomp}.
We see that this decomposition is indeed asymmetric as each $\phi_n(u',t')$ and $\psi_n(u',t')$ are not equivalent, and that each $\phi_1(u',t')$ and $\psi_1(u,t)$ are jointly orthogonal with $\phi_2(u',t')$ and $\psi_2(u,t)$ as in \eqref{eq:thm2_decomp_pty}, respectively.
Therefore, when $\phi_1(u',t')$ (or $\phi_2(u',t')$) is transmitted through the channel, the dual eigenfunctions, $\psi_1(u,t)$ (or $\psi_2(u,t)$) is received with $\sigma_1$ and $\sigma_2$, respectively. 
Therefore, the non-stationary 4-D channel is decomposed to dual flat-fading sub-channels. 
Although Theorem \ref{thm:hogmt} decomposes the channel kernel into infinite 
eigenfunctions, it is sufficient to approximate the channel kernel with a finite number of eigenfunctions with most eigenvalues in terms of mean square errors~\cite{cohen1997nonlinear}. These eigenfunctions are used to calculate the coefficients for joint spatio temporal precoding, which subsequently construct the precoded signal after inverse KLT.

\begin{figure*}[t]
\centering
\begin{subfigure}{.23\textwidth}
  \centering
  \includegraphics[width=1\linewidth]{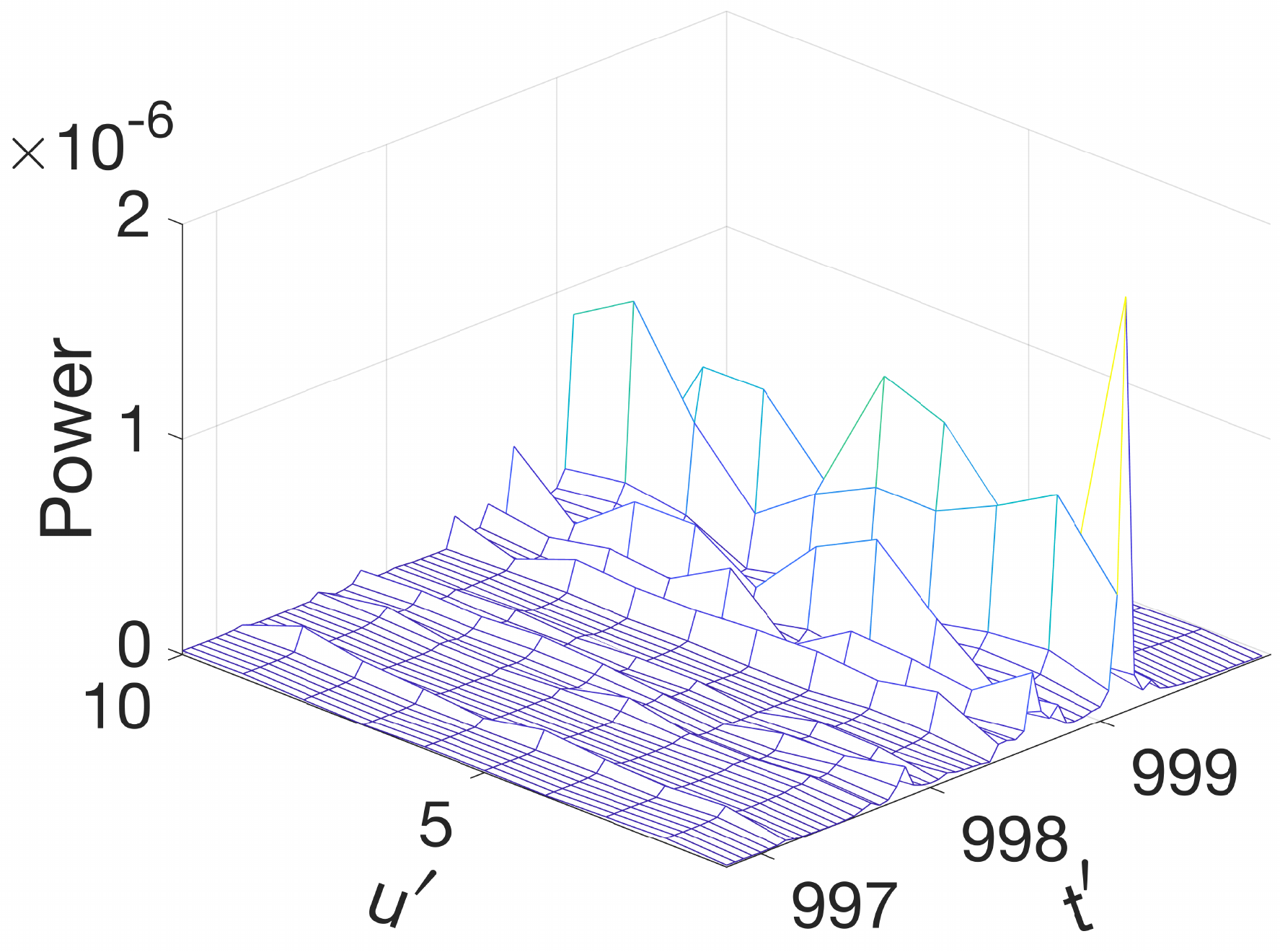}
  \caption{$u = 1$ and $t = 1000$ $\mu$s 
  } 
  \label{fig:hst_1u_1ms}
\end{subfigure}
\begin{subfigure}{.23\textwidth}
  \centering
  \includegraphics[width=1\linewidth]{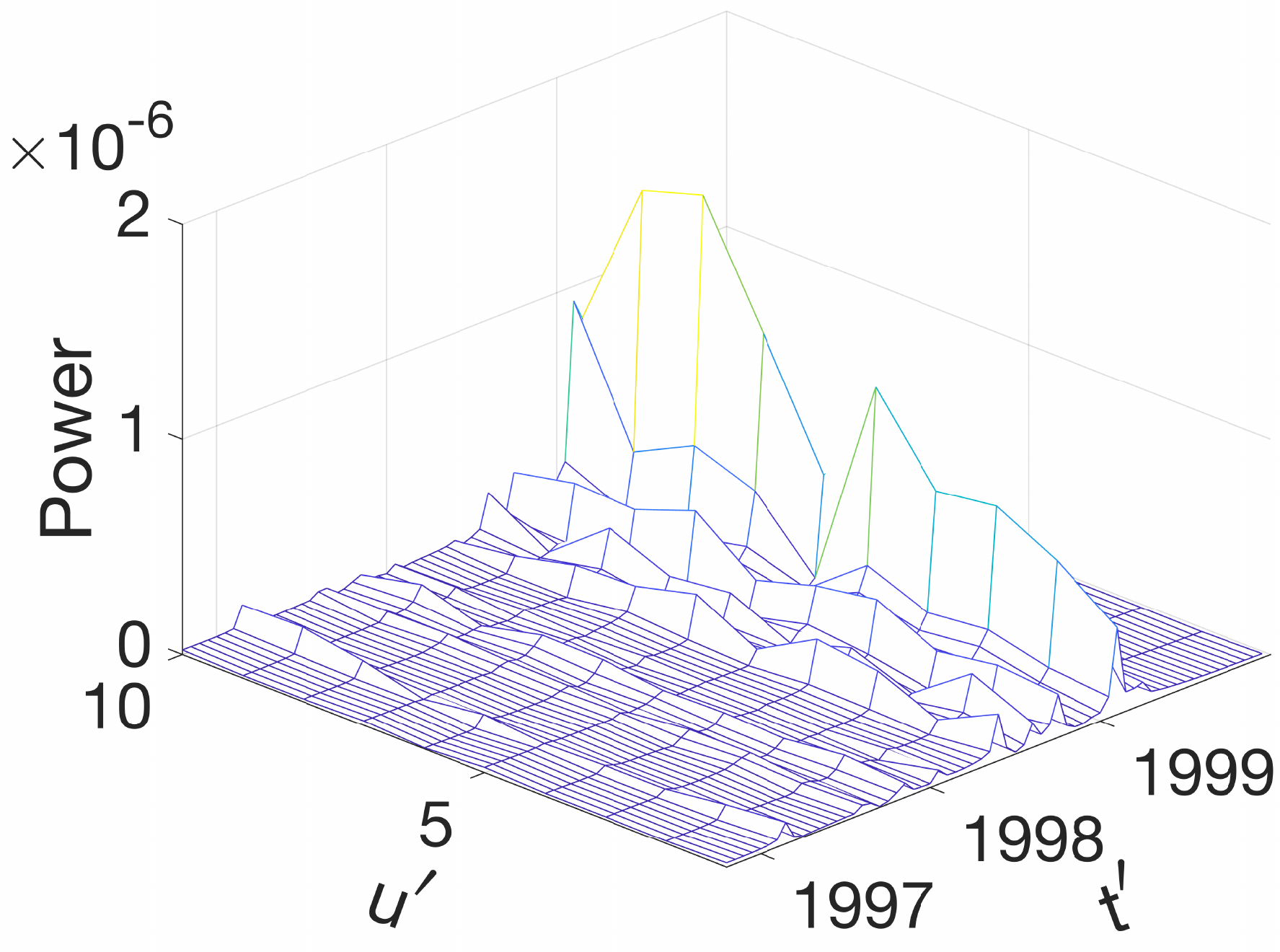}
  \caption{$u = 1$ and $t = 2000$ $\mu$s }
  \label{fig:hst_1u_100ms}
\end{subfigure}
\begin{subfigure}{.23\textwidth}
  \centering
\includegraphics[width=1\linewidth]{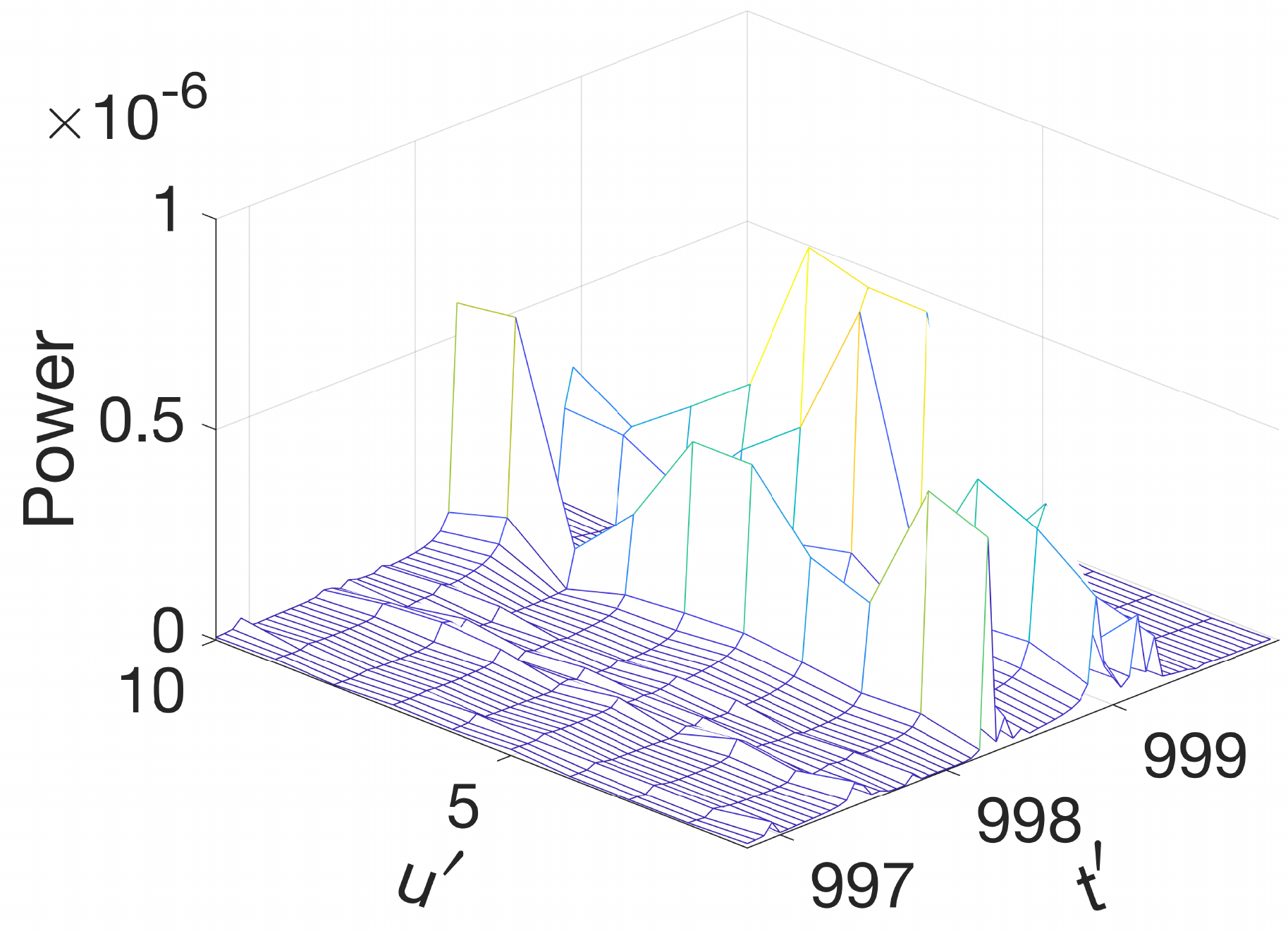}
  \caption{$u = 2$ and $t = 1000$ $\mu$s }
  \label{fig:hst_2u_1ms}
\end{subfigure}
\begin{subfigure}{.23\textwidth}
  \centering
  \includegraphics[width=1\linewidth]{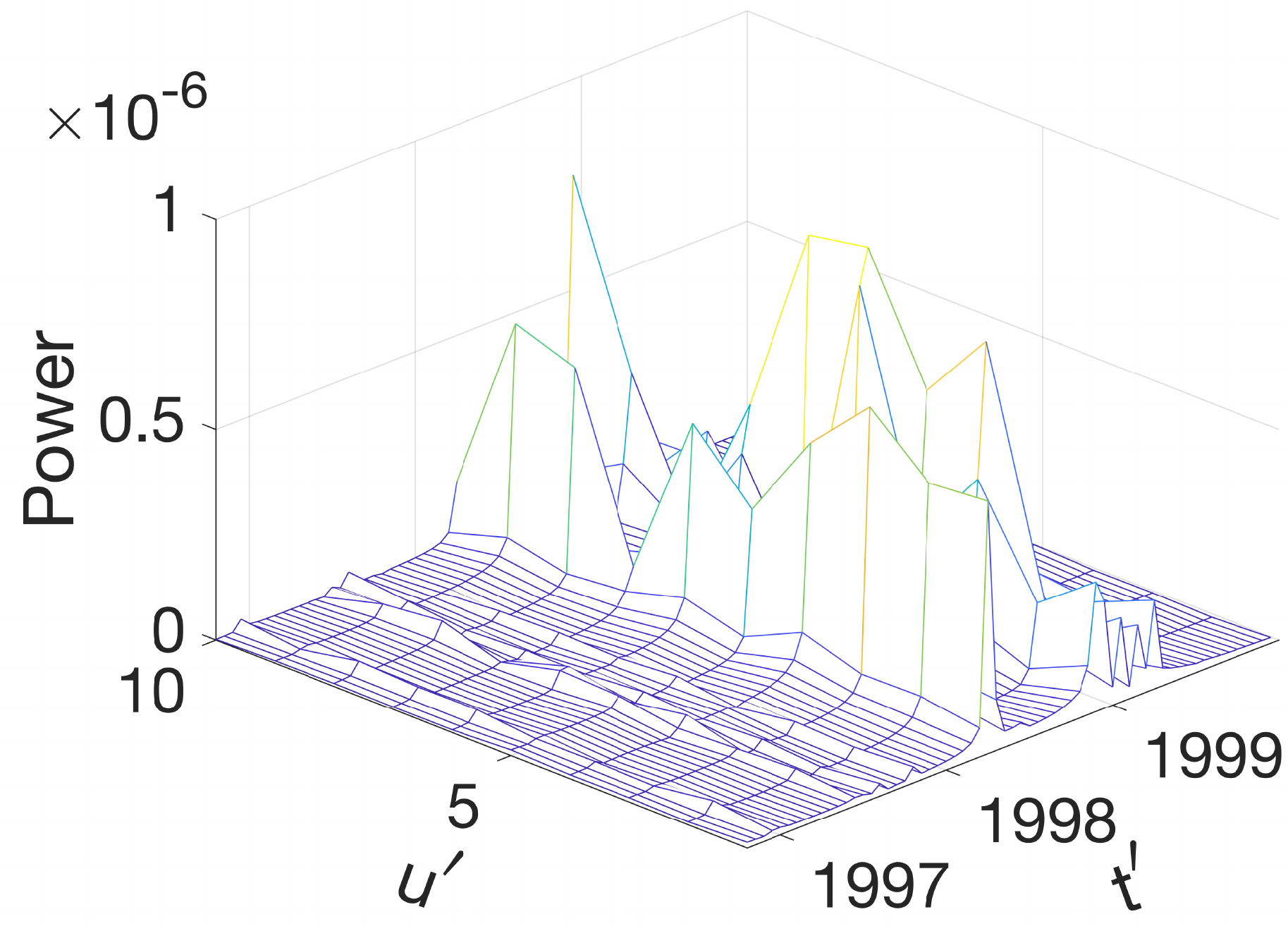}
  \caption{$u = 1$ and $t = 2000$ $\mu$s }
  \label{fig:hst_2u_100ms}
\end{subfigure}
\caption{4-D kernel $k_H(u,t;u',t')$}
\label{fig:4-D kernel}
\end{figure*}

\begin{figure*}[t]
\centering
\begin{subfigure}{.225\textwidth}
  \centering
  \includegraphics[width=1\linewidth]{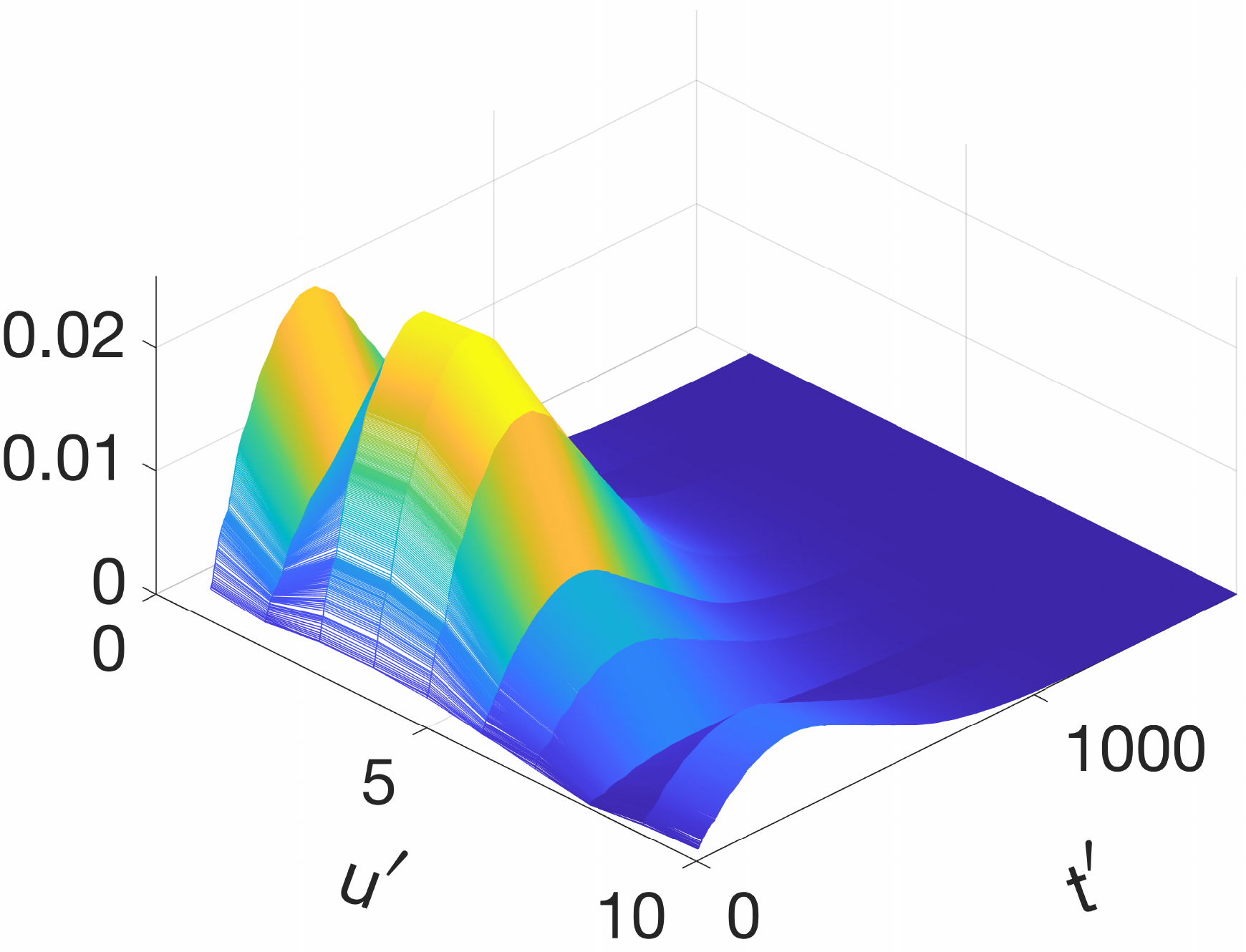}
  \caption{$\phi_1(u^\prime,t^\prime)$
  } 
  \label{fig:phi_1}
\end{subfigure}
\begin{subfigure}{.225\textwidth}
  \centering
  \includegraphics[width=1\linewidth]{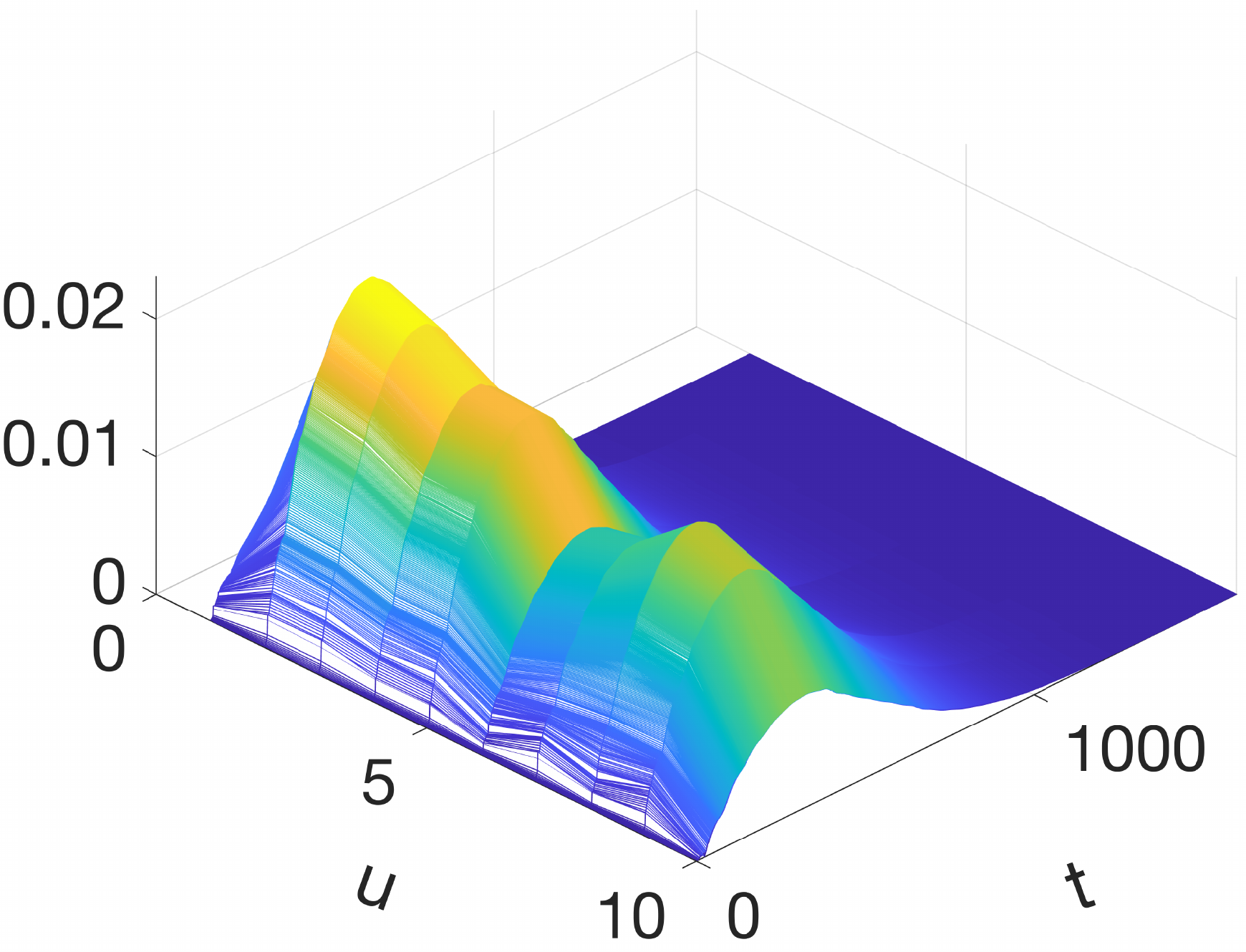}
  \caption{$\psi_1(u,t)$}
  \label{fig:psi_1}
\end{subfigure}
\begin{subfigure}{.225\textwidth}
  \centering
\includegraphics[width=1\linewidth]{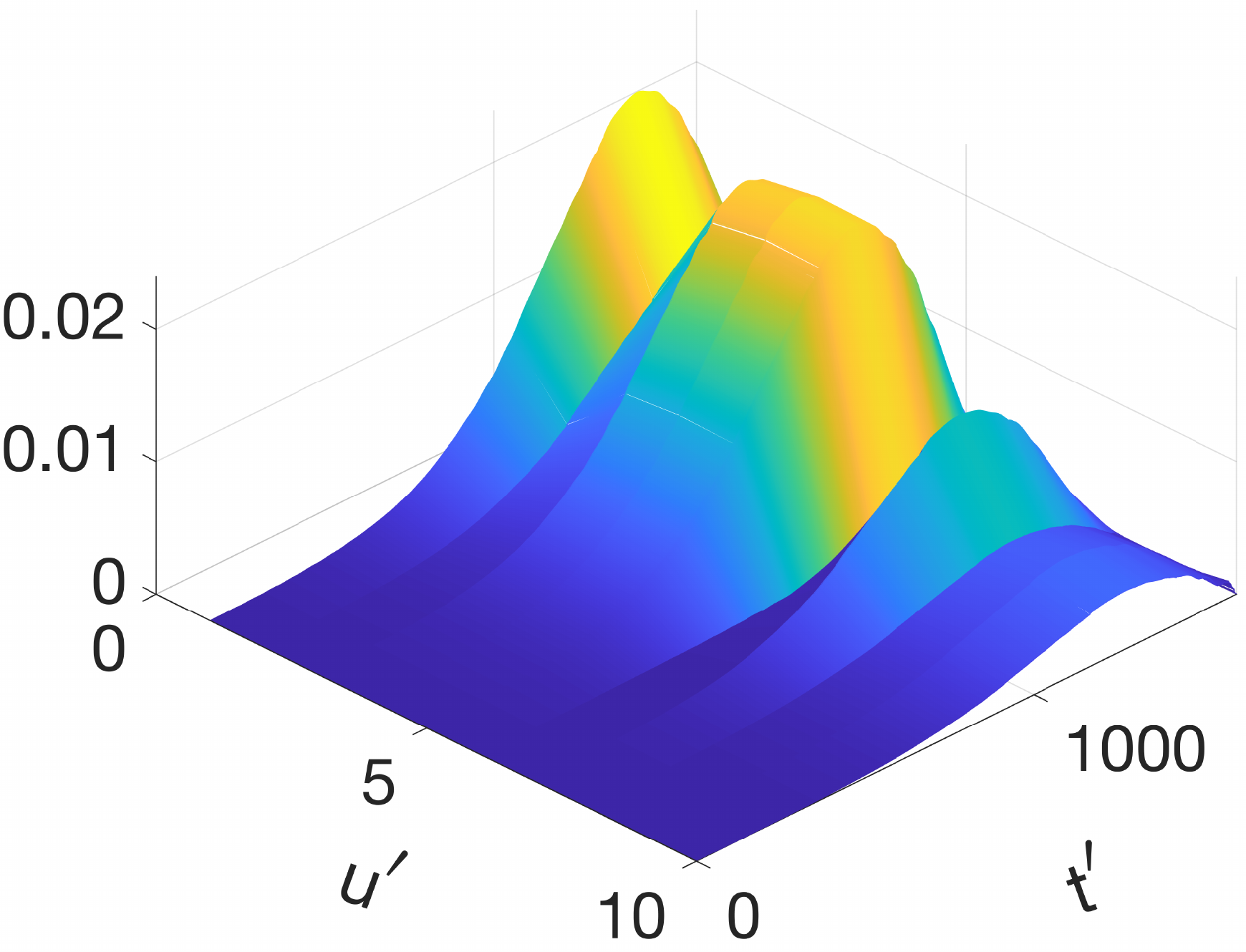}
  \caption{$\phi_2(u^\prime,t^\prime)$}
  \label{fig:phi_2}
\end{subfigure}
\begin{subfigure}{.225\textwidth}
  \centering
  \includegraphics[width=1\linewidth]{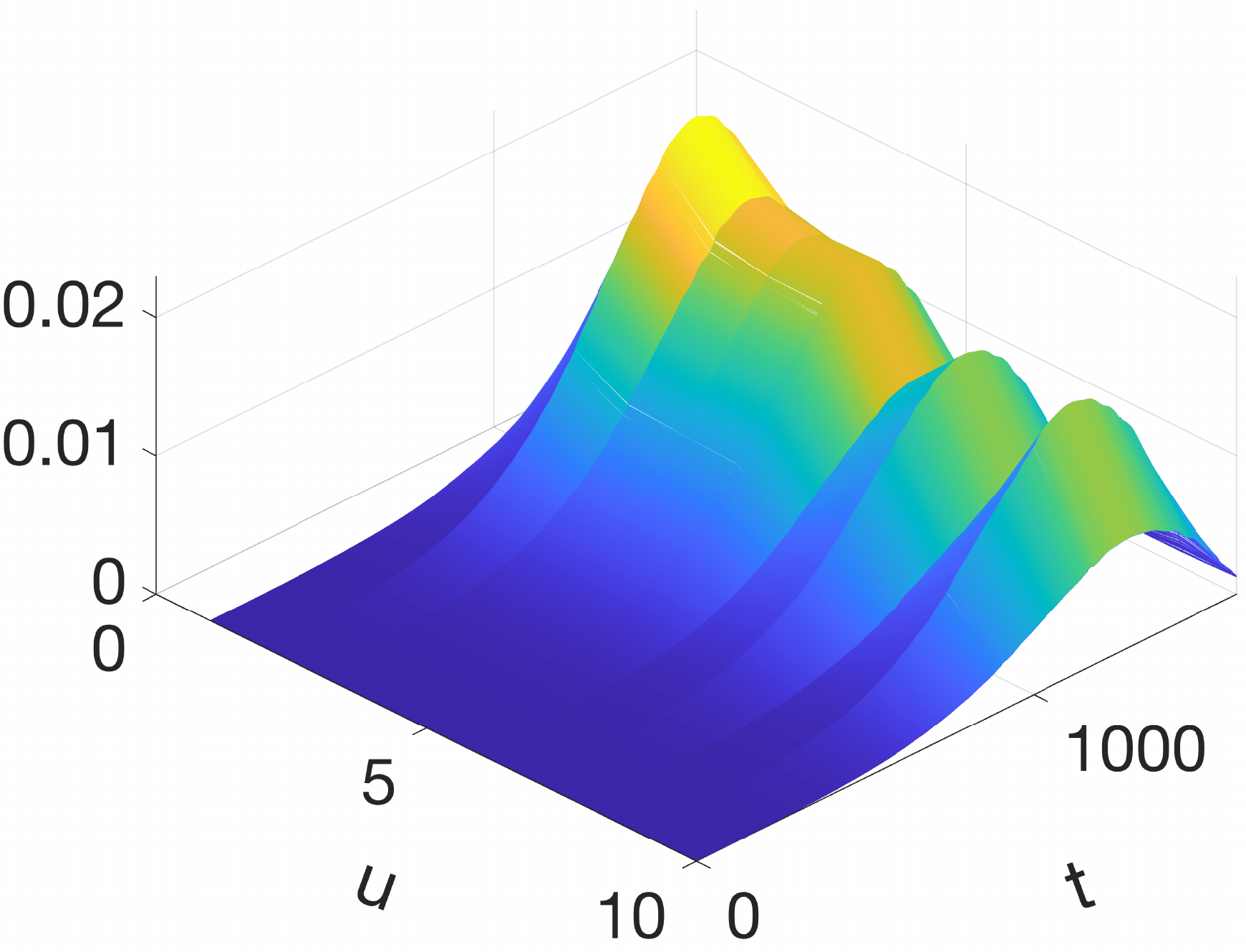}
  \caption{$\psi_2(u,t)$}
  \label{fig:psi_2}
\end{subfigure}
  \caption{Dual spatio-temporal eigenfunctions decomposed from kernel $k_H(u,t;u',t')$}
\label{fig:Eigenfunctions}
\end{figure*}

\begin{figure*}[t]
\centering
\begin{subfigure}{.27\textwidth}
  \centering
    \includegraphics[width=1\linewidth]{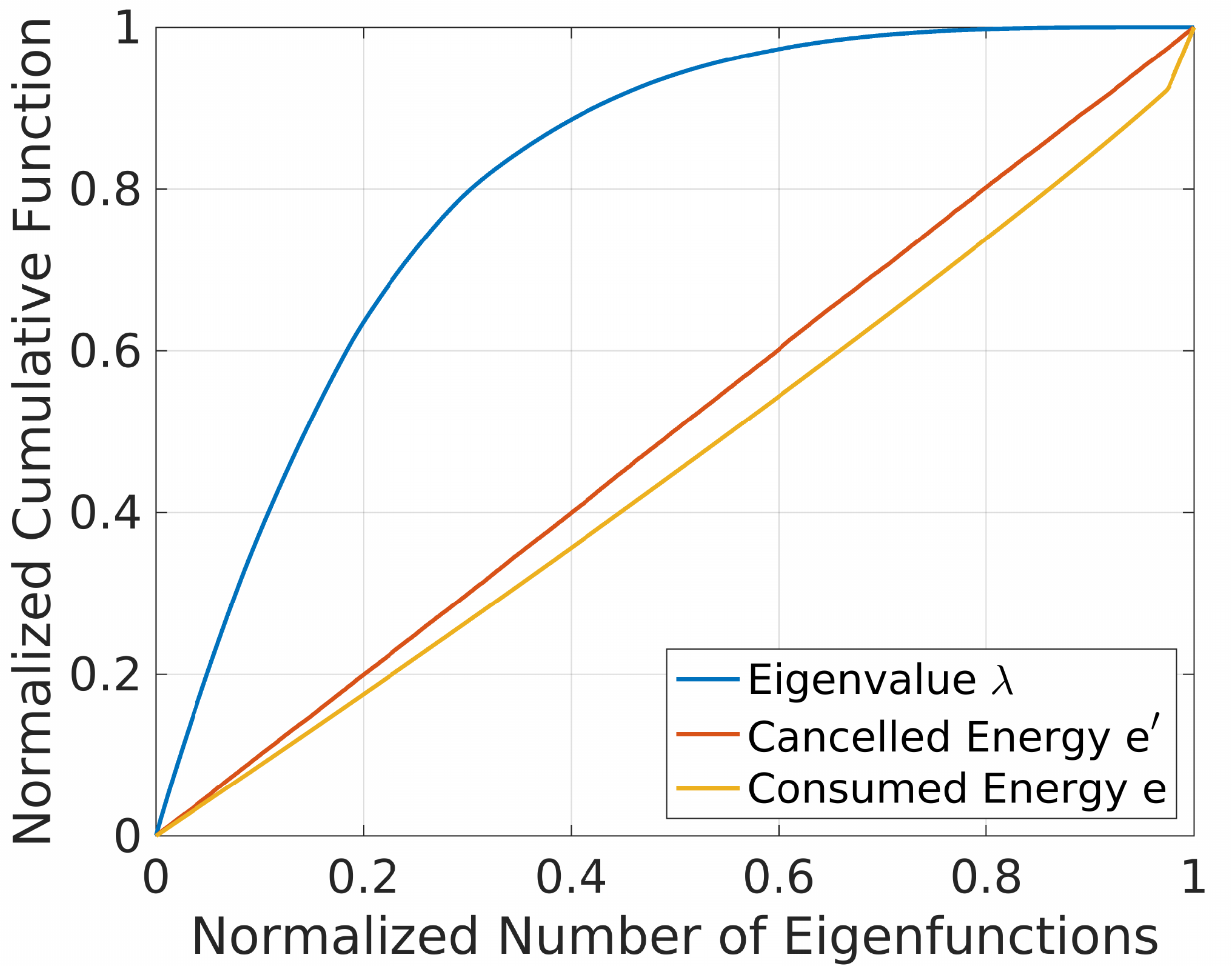}
  \caption{Normalized cumulative function of eigenvalues $\lambda$, cost energy $e_n$ and cancelled interference energy $e_n^\prime$, where both $e_n$ and $e_n^\prime$ are in dB}
  \label{fig:cumulative}
\end{subfigure}
\qquad
\centering
\begin{subfigure}{.27\textwidth}
  \centering
  \includegraphics[width=1\linewidth]{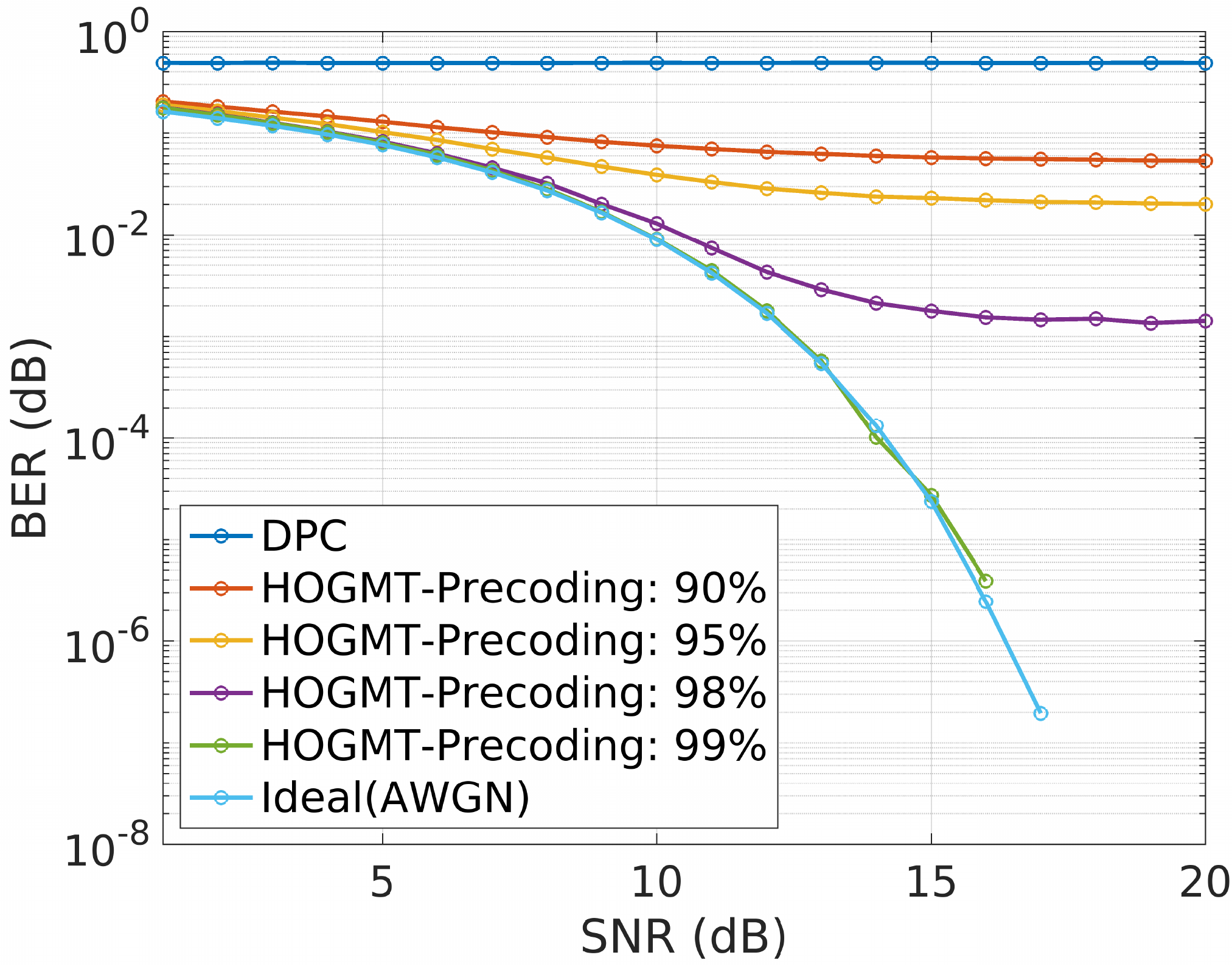}
  \caption{BER of HOGMT based spatial-temporal precoding using different number of eigenfunctions and comparison with the state-of-the-art}
  \label{fig:ber_space_time}
\end{subfigure}
\qquad
\begin{subfigure}{.27\textwidth}
  \centering
  \includegraphics[width=1\linewidth]{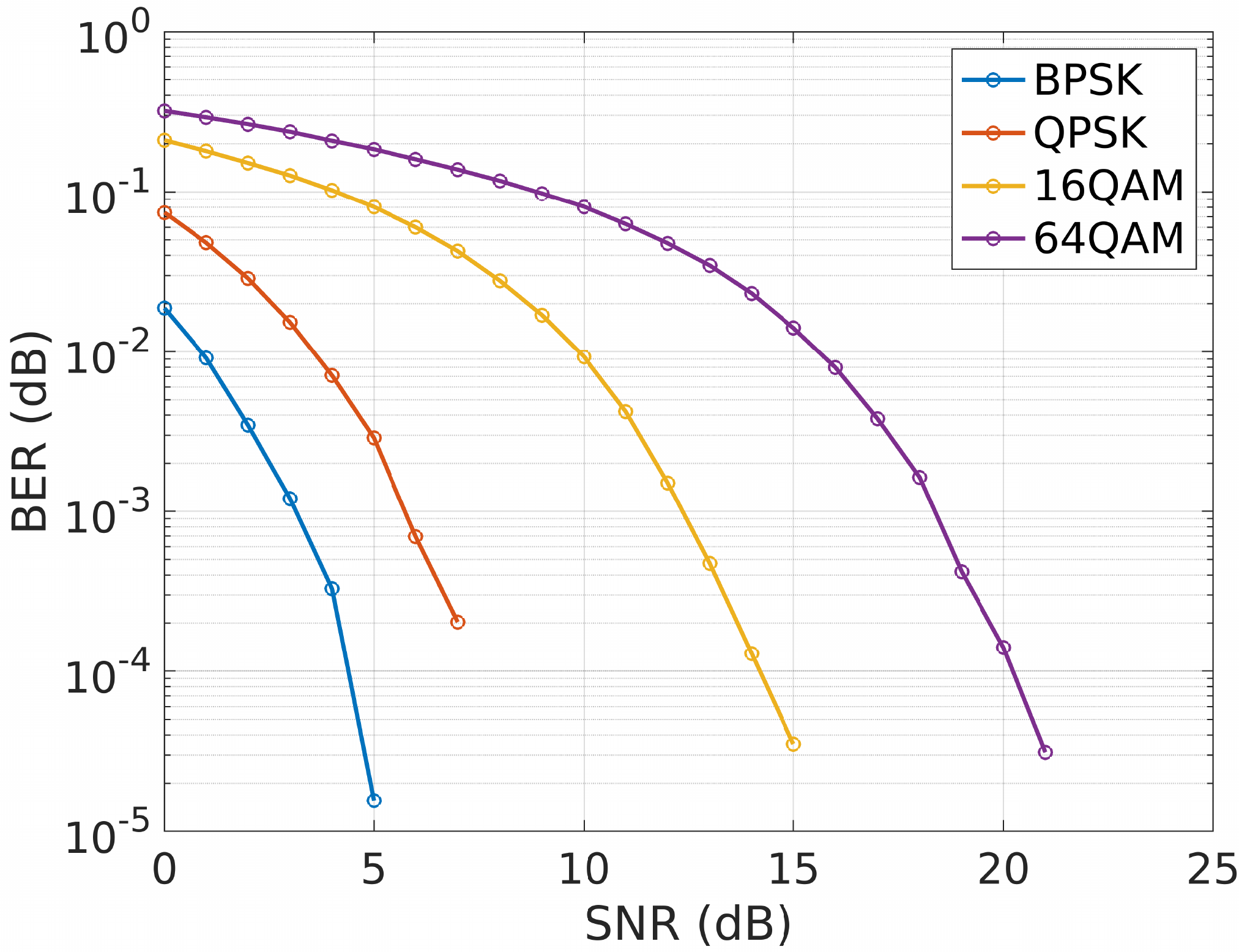}
  \caption{BER of HOGMT based spatio-temporal precoding for BPSK, QPSK, 16-QAM and 64-QAM modulations with $99\%$ eigenfunctions}
  \label{fig:Ber_space_time_multimod}
\end{subfigure}
  \caption{HOGMT based spatio-temporal precoding}
\end{figure*}

In Theorem~\ref{thm:thm2}, the energy allocated to $n$\textsuperscript{th} eigenfunction $\phi_n$ is $e_n = x_n^2$. As the data signal is directly reconstructed by eigenfunctions $\{\psi_n\}$ with $\{s_n\}$, the reconstruction with $n$\textsuperscript{th} eigenfunction $\psi_n$ is equivalent to cancelling an interference $-\psi_n$ with energy $e_n' = s_n^2$. The eigenvalue $\lambda_n$ in \eqref{eq:TFgain} is the transmission gain for $n$\textsuperscript{th} eigenfuntion. From \eqref{eq:x}, we have $e_n = e_n'/\lambda_n$, meaning more energy is required to cancel interference with more energy and less transmission gains.

Figure~\ref{fig:cumulative} shows the normalized cumulative function of eigenvalues $\lambda$, cost energy $e_n$ and cancelled interference energy $e_n'$ with respect to the normalized number of eigenfunctions (with descending order of eigenvalues). We observe that, the interference can be fully cancelled by using all eigenfuntions. However, for the last few eigenfuntions with least eigenvalues, it needs more energy, especially using more than $98\%$ eigenfuntions. Meanwhile, the interference cancellation, i.e., $e_n'$ is basically linear with respect to the number of eigenfuntions used. That's because the linear implementation only decomposing eigenfuntions for approximating the kernel. Eigenfunctions with more eigenvalue extract more information of the channel channel, however, are just orthogonal basis with basically equal contributions for reconstructing the data signal. Theoretically, there exists an optimal nonlinear implementation with respect to maximum energy efficiency, which is beyond the scope of this work.

Figures \ref{fig:ber_space_time} shows the 
BER at the receiver, using joint spatio-temporal precoding (HOGMT-precoding) at the transmitter with 16-QAM modulated symbols for non-stationary channels.
Since this precoding is able to cancel all space-time varying interference that occurs in space, time and across space-time dimensions which are shown in figure \ref{fig:4-D kernel}, 
it achieves significantly lower BER over DPC, which is existing interference-free precoding, however, is applicable for the time-invariant channel matrix thus shows catastrophic performance for non-stationary channels.
Further, we show that with more eigenfuntions, proposed methods achieve lower BER. With more than $99\%$ eigenfunctions, proposed method can achieve near ideal BER, where the ideal case assumes all interference is cancelled and only AWGN noise remains at the receiver.
Figure \ref{fig:Ber_space_time_multimod} compares the BER of HOGMT based spatio-temporal precoding for various modulations (BPSK, QPSK, 16-QAM and 64-QAM) for the same non-stationary channel using $99\%$ eigenfunctions.
As expected we observe that the lower the order of the modulation, the lower the BER but at the cost of lower data rate. 
However, we observe that even with high-order modulations (\eg 64-QAM) the proposed precoding achieves low BER (${\approx}10^{-4}$ at SNR${=}20$dB), allowing high data-rates even over challenging non-stationary channels. The choice of the order of the modulation is therefore, based on the desired BER and data rate for different non-stationary scenarios.
\section{Conclusion}
\label{sec:conclusion}

In this work, we derived a high-order generalized version of Mercer's Theorem to decompose the high-order asymmetric kernels into dual 2-dimensional jointly orthogonal eigenfuntions. Through theoretical analysis and simulations, we draw three firm conclusions for non-stationary channels: 1) the 2-dimensional eigenfunctions decomposed from 4-dimensional coefficients of atomic channels across time-frequency and delay-Doppler domain are sufficient to completely derive the second-order statistics of the non-stationary channel and consequently leads to an 
unified characterization of any wireless channel, 2) The duality and joint orthogonality of 2-dimensional eigenfunctions decomposed from 4-dimensional non-stationary channels manifest independently flat-fading, 3) precoding by these eigenfunctions with optimally derived coefficients 
mitigates the spatio-temporal interference 
, 4) the precoded symbols when propagated over the non-stationary channel directly reconstruct the modulated symbols at the receiver when combined with the calculated coefficients, consequently alleviating the need for complex complementary step at the receiver and
5) proposed precoding has less complexity than DPC.
Therefore, the encouraging results from this work will form the core of robust and unifed characterization and highly reliable communication over non-stationary channels, supporting emerging application.

\section{ACKNOWLEDGEMENT}
This work is funded by the Air Force Research Laboratory Visiting Faculty Research Program (SA10032021050367), Rome, New York, USA.

\bibliographystyle{IEEEtran}
\bibliography{references}



\end{document}